\documentclass[cleveref, autoref, thm-restate]{lipics-v2021}
\usepackage[sort]{cite}
\usepackage{tikz}
\usepackage{"mydefs"}
\usepackage{multirow}
\usepackage{apxproof}
\usepackage{colortbl}
\usepackage{thmtools}
\usepackage[justification=centering]{caption}
\usepackage{footnote}
\usepackage{anyfontsize}
\makesavenoteenv{tabular}
\usetikzlibrary{decorations.pathreplacing}
\nolinenumbers

\newenvironment{customthm}[1]
  {\innercustomthm}
  {\endinnercustomthm}

\Copyright{Marco Carmosino, Ronald Fagin, Neil Immerman, Phokion Kolaitis, Jonathan Lenchner, and Rik Sengupta}

\keywords{logic, combinatorial games, Boolean functions, quantifier number}

\authorrunning{M. Carmosino, R. Fagin, N. Immerman, P. Kolaitis, J. Lenchner, and R. Sengupta}

\begin{CCSXML}
<ccs2012>
<concept>
<concept_id>10003752.10003790</concept_id>
<concept_desc>Theory of computation~Logic</concept_desc>
<concept_significance>500</concept_significance>
</concept>
</ccs2012>
\end{CCSXML}

\relatedversion{A shortened version of this paper is due to appear in the Proceedings of the 49th International Symposium on Mathematical Foundations of Computer Science (MFCS), 2024.}

\ccsdesc{Theory of computation~Complexity theory and logic}
\ccsdesc{Theory of computation~Computational complexity and cryptography}

\EventEditors{Rastislav Kr\'{a}lovi\v{c} and Anton\'{i}n Ku\v{c}era}
\EventNoEds{2}
\EventLongTitle{49th International Symposium on Mathematical Foundations of Computer Science (MFCS 2024)}
\EventShortTitle{MFCS 2024}
\EventAcronym{MFCS}
\EventYear{2024}
\EventDate{August 26--30, 2024}
\EventLocation{Bratislava, Slovakia}
\EventLogo{}
\SeriesVolume{306}
\ArticleNo{26}

\title{On the Number of Quantifiers Needed to Define Boolean Functions}

\author{Marco Carmosino}{IBM Research Cambridge, MA, USA}{mlc@ibm.com}{https://orcid.org/0009-0007-1118-1352}{}
\author{Ronald Fagin}{IBM Research Almaden, CA, USA}{fagin@us.ibm.com}{https://orcid.org/
0000-0002-7374-0347}{}
\author{Neil Immerman}{University of Massachusetts, Amherst, USA}{immerman@umass.edu}{https://orcid.org/0000-0001-6609-5952}{}
\author{Phokion G.\ Kolaitis}{UC Santa Cruz and IBM Research Almaden, CA, USA}{kolaitis@ucsc.edu}{https://orcid.org/0000-0002-8407-8563}{}
\author{Jonathan Lenchner\footnote{ Corresponding author.}}{IBM T.J. Watson Research Center, NY, USA}{lenchner@us.ibm.com}{https://orcid.org/0000-0002-9427-8470}{}
\author{Rik Sengupta}{IBM Research Cambridge, MA, USA}{rsengupta@cs.umass.edu}{https://orcid.org/0000-0002-9238-5408}{NSF CCF-1934846}
\date{}

\begin{document}

\maketitle

\begin{abstract}
The number of quantifiers needed to express first-order (FO) properties is captured by two-player combinatorial games called \emph{multi-structural} games. We analyze these games on binary strings with an ordering relation, using a technique we call \emph{parallel play}, which significantly reduces the number of quantifiers needed in many cases. 
Ordered structures such as strings have historically been notoriously difficult to analyze in the context of these and similar games. Nevertheless, in this paper, we provide essentially tight bounds on the number of quantifiers needed to characterize different-sized subsets of strings. 
The results immediately give bounds on the number of quantifiers necessary to define several different classes of Boolean functions. One of our results is analogous to Lupanov's upper bounds on circuit size and formula size in propositional logic: we show that every Boolean function on $n$-bit inputs can be defined by a FO sentence having $(1+\varepsilon)\frac{n}{\log(n)} + O(1)$ quantifiers, and that this is essentially tight. We reduce this number to $(1 + \varepsilon)\log(n) + O(1)$ when the Boolean function in question is sparse.
\end{abstract}

\maketitle

\section{Introduction}\label{sec:intro}

In 1981, Immerman \cite{MScanon0} introduced \emph{quantifier number} (QN) as a measure of the complexity of first-order (FO) sentences. For a function $g{:}~\mathbb{N} \to \mathbb{N}$, he defined $\textrm{QN}[g(n)]$ as the class of properties on $n$-element structures describable by a uniform sequence of FO sentences with $O(g(n))$ quantifiers. He then showed that on \emph{ordered} structures, for $f(n) \geq \log n$, one has:
\begin{equation} \label{neils-inlcusions}
  \textrm{NSPACE}[f(n)] \subseteq \textrm{QN}[(f(n))^2/\log n]\subseteq \textrm{DSPACE}[(f(n))^2],  
\end{equation}
thereby establishing an important connection between QN and space complexity and so directly linking a logical object to classical complexity classes. 

The same paper \cite{MScanon0} described a two-player combinatorial game (which Immerman called the \emph{separability game}), that captures quantifier number in the same way that the more well-known Ehrenfeucht-Fra\"{i}ss\'{e} (EF) game \cite{Ehr61, Fra54} captures quantifier rank (QR). The paper additionally showed that any property whatsoever of $n$-element \emph{ordered} structures can be described with a sentence having a QR of $\log n + 3$. Since a QR of $\log n + 1$ is required just to distinguish a linear order of size $n$ from smaller linear orders \cite{ROSENSTEIN:1982}, QR has limited power to distinguish properties over ordered structures. 
QN is potentially a more fine-grained and powerful measure for this purpose. However, owing to the inherent difficulties of the analysis of Immerman's separability game, the study of the game and of QN in general lay dormant for forty years, until the game was rediscovered and renamed the \emph{multi-structural} (MS) game in \cite{MScanon1}. In that paper the authors made initial inroads into understanding how to analyze the game, leading to several follow-up works~\cite{MScanon2,MScanon3,vinallsmeeth2024quantifier}. Other related games to study the number of quantifiers were recently introduced in \cite{hella2024}, and close cousins of MS games were used to study formula size in \cite{DBLP:journals/corr/abs-1208-4803,DBLP:journals/lmcs/GroheS05}. In \cite{DBLP:journals/corr/abs-1208-4803} the authors study a related problem to ours --- they examine the (existential) sentences of minimum size needed to express a particular set of string properties. However, even without the existential restriction, the connection between the minimum size of a sentence and its minimum number of quantifiers is not obvious. It is possible for a property to be expressible only by a much longer sentence with fewer quantifiers than one with more quantifiers.


The MS game is played by two players, Spoiler ($\bS$, he/him) and Duplicator ($\bD$, she/her), on two \emph{sets} $\cA, \cB$ of structures. Essentially, $\bS$ tries to break all partial isomorphisms between all pairs of structures (one from $\cA$ and the other from $\cB$) over a prescribed number of rounds, whereas $\bD$ tries to maintain a partial isomorphism between \emph{some} pair of structures. Unlike in EF games, $\bD$ has more power in MS games, since she can make arbitrarily many copies of structures before her moves, enabling her to play all possible responses to $\bS$'s moves. The fundamental theorem for MS games~\cite{MScanon0, MScanon1} (see Theorem~\ref{thm:MSfundamental}) states that $\bS$ has a winning strategy for the $r$-round MS game on $(\cA, \cB)$ if and only if there is a FO sentence $\varphi$ with at most $r$ quantifiers that is true for every structure in $\cA$ but false for every structure in $\cB$. We call such a $\varphi$ a \emph{separating sentence} for $(\cA, \cB)$. In general, our eventual objective will be to separate a set $\cA$ of $n$-bit strings from all other $n$-bit strings (i.e., from its complement $\cA^\mathrm{C}$). This is a particularly interesting question because of its intimate connection to the complexity of \emph{Boolean functions}.

\medskip \noindent{\bf Boolean Functions.} 
Any Boolean function on $n$-bit strings is specified by two complementary sets, $\cA, \cA^\textrm{C} \subseteq \{0, 1\}^n$, representing the input strings that get mapped to $1$ and $0$ respectively. 
For such a function $f \colon \{0,1\}^n \to \{0,1\}$, we say that a FO sentence $\phi$ in the vocabulary of strings \emph{defines} the function $f$ if $\phi$ is a separating sentence for $(f^{-1}(1), f^{-1}(0))$. Hence, the key results of this paper can be thought of as giving sharp bounds on the number of quantifiers needed to define Boolean functions. Our main results about the definability of Boolean functions are Theorems \ref{thm-A} and \ref{thm-B} below.

\begin{customthm} A Given an arbitrary $\varepsilon > 0$, every Boolean function on $n$-bit strings can be defined by a FO sentence having $(1 + \varepsilon)\frac{n}{\log(n)} + O_\varepsilon(1)$ quantifiers, where the $O_\varepsilon(1)$ additive term depends only on $\varepsilon$ and not $n$. Moreover, there are Boolean functions on $n$-bit strings that require $\frac{n}{\log(n)} + O(1)$ quantifiers to define. \label{thm-A}
\end{customthm}


Say that a family, 
$\{f_n\}_{n=1}^\infty$, of Boolean functions on $n$-bit strings, is \emph{sparse} if the cardinality of the set of strings mapping to $1$ under each $f_n$ is polynomial in $n$. For example, if $\mathcal{L}$ is a sparse language, then the family of Boolean functions, defined for each $n$, by the characteristic function of $\mathcal{L}$ restricted to $n$-bit inputs, is sparse \cite{sparsefortune, sparsemahaney}.

\begin{customthm} B Given an arbitrary $\varepsilon > 0$, and a sparse family, $\{f_n\}_{n=1}^\infty$, of Boolean functions on $n$-bit strings, each function $f_n$ can be defined by a FO sentence having $(1 + \varepsilon)\log(n) + O_\varepsilon(1)$ quantifiers, where the $O_\varepsilon(1)$ additive term depends only on $\varepsilon$ and not $n$ or the choice of sparse family. Moreover, there are sparse families of Boolean functions on $n$-bit strings, the functions of which require $\log(n)$ quantifiers to define. \label{thm-B}
\end{customthm}

Theorem \ref{thm-A} follows from Theorems \ref{thm:all-vs-all upper bound} and \ref{prop:all-vs-all lower bound} (in Section \ref{sec:strings}), whereas Theorem \ref{thm-B} follows from Theorem \ref{thm:one-vs-all-n} and Proposition \ref{prop:one-vs-one-n} (in Section \ref{sec:strings}). Theorem \ref{thm:all-vs-all upper bound} can be viewed as a first-order logic analog of the upper bounds obtained by Lupanov for minimum circuit size \cite{Lupanov1958} and minimum propositional formula size \cite{Lupanov1965} to capture an arbitrary Boolean function. Note that \emph{any} property whatsoever of $n$-bit strings can be captured trivially by a sentence with $n$ existential quantifiers. Similar to Lupanov's bounds, our result shows that we can shave off a factor of $\log(n)$ from this trivial upper bound.
Furthermore, Theorem \ref{prop:all-vs-all lower bound} establishes via a counting argument that there are functions with a QN lower bound that essentially matches our worst-case upper bound -- a result that can be viewed as a first-order logic analog of the Riordan-Shannon lower bound  \cite{riordanlower} for propositional formula size.
\medskip \noindent{\bf Parallel Play. }A key technical contribution we make in this paper is the Spoiler strategy of \emph{parallel play}, which widens the scope of winning strategies for $\bS$ compared to previous work. The essential idea is for $\bS$ to partition the sets $\cA$ and $\cB$ into subsets $\cA_1\sqcup\ldots\sqcup\cA_k$ and $\cB_1\sqcup\ldots\sqcup\cB_k$, and then play $k$ MS ``sub-games'' in parallel on $(\cA_i, \cB_i)$. In certain circumstances, $\bS$ can then combine his strategies for each of those sub-games into a strategy for the entire game, and thereby save many superfluous moves. Applying the fundamental theorem, this results in a very small number of quantifiers in the corresponding separating sentence.
\medskip \noindent{\bf Outline of the Paper. }
This paper is organized as follows. In Section \ref{sec:prelims}, we set up some preliminaries. In Section \ref{sec:parallel}, we precisely formulate what we call the \emph{Parallel Play Lemma} (Lemma \ref{lem:parallelplay}) and the \emph{Generalized Parallel Play Lemma} (Lemma \ref{lem:genparallelplay}). In Section \ref{sec:linearorders}, we develop results on linear orders that are similar to but more nuanced than those in \cite{MScanon1, MScanon2}, with the extra nuance being critical for our subsequent string separation results. 
In Section \ref{sec:strings}, we present our results on separating disjoint sets of strings. 
In Section \ref{sec:conclusion}, we wrap up with some conclusions and open problems.

\section{Preliminaries}\label{sec:prelims}

Fix a vocabulary $\tau$ with finitely many relation and constant symbols. We typically designate structures in boldface ($\bA$), their universes in capital letters ($A$), and sets of structures in calligraphic typeface ($\cA$). This last convention includes sets of pebbled structures (see below).

We always use $\log(\cdot)$ to designate the base-$2$ logarithm. Furthermore, in several results in Section \ref{sec:strings}, we have an $O(1)$ additive term. This term will always be independent of $n$. Any additional dependence will be stated in the form of a subscript on the $O$, e.g., $O_t(1)$ would denote a term independent of $n$, but dependent on the choice of some parameter $t$.

\medskip \noindent{\bf Pebbled Structures and Matching Pairs. } Consider a palette $\cC = \{\r, \b, \g, \ldots\}$ of \emph{pebble colors}, with infinitely many pebbles of each color available. A $\tau$-structure $\bA$ is \emph{pebbled} if some of its elements $a_1, a_2, \ldots \in A$ have pebbles on them. There can be at most one pebble of each color on a pebbled structure. There can be multiple pebbles (of different colors) on the same element $a_i \in A$. Occasionally, when the context is clear, we will use the term \emph{board} synonymously with ``pebbled structure''.
  
If $\mathbf{A}$ is a $\tau$-structure, and the first few pebbles are placed on elements $a_1, a_2, a_3 \ldots \in A$, we designate the resulting pebbled $\tau$-structure as $\langle \mathbf{A} ~|~ a_1, a_2, a_3, \ldots \rangle$. Note that $\bA$ can be viewed as a pebbled structure $\langle \bA ~|~\rangle$ with the empty set of pebbles.

By convention, we use $\r$, $\b$, and $\g$ for the first three pebbles we play (in that order), as a visual aid in our proofs. Hence, the pebbled structure $\langle \bA ~|~ a_1, a_2, a_3\rangle$ has pebbles $\r$ on $a_1 \in A$, $\b$ on $a_2 \in A$, and $\g$ on $a_3 \in A$. Note that $a_1$, $a_2$, and $a_3$ need not be distinct.


We say that the pebbled structures $\langle \bA ~|~ a_1, \ldots, a_k\rangle$ and $\langle \bB ~|~ b_1, \ldots, b_k\rangle$ are a \emph{matching pair} if the map $f \colon A \to B$ defined by:
\begin{itemize}
    \item $f(a_i) = b_i$ for all $1 \leq i \leq k$
    \item $f(c^\bA) = c^\bB$ for all constants $c$ in $\tau$
\end{itemize}
is an isomorphism on the induced substructures. Note that $\langle \bA ~|~ a_1, \ldots, a_k\rangle$ and $\langle \bB ~|~ b_1, \ldots, b_k\rangle$ can form a matching pair even when $\bA \not\cong \bB$.


\medskip \noindent{\bf Multi-Structural Games. } Assume $r \in \mathbb{N}$, and let $\cA$ and $\cB$ be two sets of pebbled structures, each pebbled with the \emph{same} set $\{x_1, \ldots, x_k\} \subseteq \cC$ of pebble colors. The \emph{$r$-round multi-structural (MS) game on $(\cA, \cB)$} is defined as the following two-player game, played by two players, \textbf{Spoiler} ($\bS$, he/him) and \textbf{Duplicator} ($\bD$, she/her). In each round $i$ for $1 \leq i \leq r$, $\bS$ chooses either $\cA$ or $\cB$, and an \textbf{unused} color $y_i \in \cC$; he then places (``plays'') a pebble of color $y_i$ on an element of \emph{every} board in the chosen set (``side''). In response, $\bD$ makes as many copies as she wants of each board on the other side, and plays a pebble of color $y_i$ on an element of each of those boards. $\bD$ wins the game if at the end of round $r$, there is a board in $\cA$ and a board in $\cB$ forming a matching pair. Otherwise, $\bS$ wins.
For readability, we always call the two sets $\cA$ and $\cB$, even though the structures change over the course of a game in two ways:
\begin{itemize}
    \item $\cA$ or $\cB$ can increase in size over the $r$ rounds, as $\bD$ can make copies of the boards.
    \item The number of pebbles on each of the boards in $\cA$ and $\cB$ increases by $1$ in each round.
\end{itemize}

We usually refer to $\cA$ as the \emph{left} side, and $\cB$ as the \emph{right} side.

Let $\cA$ and $\cB$ be two sets of pebbled structures, with each pebbled structure containing pebbles colored with $\{x_1, \ldots, x_k\} \subseteq \cC$. Let $\varphi(x_1, \ldots, x_k)$ be a FO formula with free variables $\{x_1, \ldots, x_k\}$. We say $\varphi$ is a \emph{separating formula} for $(\cA, \cB)$ (or $\varphi$ \emph{separates} $\cA$ and $\cB$) if:
\begin{itemize}
    \item for every $\langle \bA ~|~ a_1, \ldots, a_k\rangle \in \cA$ we have $\bA[a_1/x_1, \ldots, a_k/x_k] \models \varphi$,
    \item for every $\langle\bB ~|~ b_1, \ldots, b_k\rangle \in \cB$ we have $\bB[b_1/x_1, \ldots, b_k/x_k] \models \lnot\varphi$.
\end{itemize}
The following key theorem \cite{MScanon0, MScanon1}, stated here without proof, relates the logical characterization of a separating formula with the combinatorial property of a game strategy.

\begin{theorem}[Fundamental Theorem of MS Games, \cite{MScanon0, MScanon1}]\label{thm:MSfundamental}
    $\bS$ has a winning strategy in the $r$-round MS game on $(\cA, \cB)$ iff there is a formula with $\leq r$ quantifiers separating $\cA$ and $\cB$.
\end{theorem}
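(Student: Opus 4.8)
The plan is to prove both implications simultaneously by induction on the number of rounds $r$, peeling off a single quantifier (equivalently, a single round) at each step. For the base case $r=0$, I would first observe that two pebbled structures form a matching pair exactly when they realize the same atomic (quantifier-free) type, and that over a finite vocabulary with $k$ pebbles there are only finitely many such types. Hence $\bS$ wins the $0$-round game on $(\cA,\cB)$ iff the set of atomic types realized on the left is disjoint from the set realized on the right, which in turn holds iff the finite, quantifier-free disjunction of the atomic types occurring in $\cA$ is a separating formula; the link is that a quantifier-free formula's truth value depends only on the atomic type.

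For the induction step from a winning Spoiler strategy to a separating formula, I would examine $\bS$'s first move. If it is on the left, he chooses for each board $\langle \bA \mid \bar a\rangle \in \cA$ an element to pebble with a fresh color $y$, producing a set $\cA'$; Duplicator's most general reply is to pebble every element of every right-hand board, giving $\cB' = \{\langle \bB \mid \bar b, b\rangle : \langle \bB\mid\bar b\rangle \in \cB,\ b \in B\}$. Because $\bD$ may copy boards freely, this reply dominates all others, so a winning strategy for $\bS$ restricts to a winning strategy in the $(r-1)$-round game on $(\cA', \cB')$. By induction there is a separating $\psi(\bar x, y)$ with at most $r-1$ quantifiers, and then $\exists y\,\psi$ separates $(\cA, \cB)$ with at most $r$ quantifiers; a first move on the right yields $\forall y\,\psi$ by the symmetric argument.

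For the reverse step, from a separating formula to a winning strategy, the key maneuver — and the one I expect to be the main obstacle to state cleanly — is that \emph{any} formula with at least one quantifier can be rewritten, by renaming bound variables apart and applying the standard prenex equivalences such as $(\exists y\,\theta)\vee\chi \equiv \exists y(\theta\vee\chi)$ and $\lnot\exists y\,\theta \equiv \forall y\,\lnot\theta$, into the form $\exists y\,\psi$ or $\forall y\,\psi$ where $\psi$ has exactly one fewer quantifier. This pulls a single quantifier to the front without changing the logical meaning or the total quantifier count, and so lets me avoid ever decomposing a Boolean combination whose quantifiers are split across its two sides — the one place a naive induction would break. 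Given a separating $\varphi \equiv \exists y\,\psi$, every $\cA$-board has a witness for $y$, so $\bS$ pebbles those witnesses on the left to get a set $\cA'$ on which $\psi$ holds; every $\cB$-board satisfies $\forall y\,\lnot\psi$, so whatever reply $\cB'$ Duplicator makes consists of extensions on which $\psi$ fails. Thus $\psi$ separates $(\cA', \cB')$ for \emph{every} Duplicator reply, and induction hands $\bS$ a win in the remaining $r-1$ rounds; the case $\varphi \equiv \forall y\,\psi$ is symmetric with $\bS$ playing on the right.

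The routine points still to verify are the finiteness and disjointness bookkeeping in the base case, the claim that Duplicator's ``play every element'' reply is genuinely optimal, and that the separating property survives the prenex rewriting. None of these is hard, and the only genuinely delicate idea is the single-quantifier extraction that keeps the induction at one quantifier per round.
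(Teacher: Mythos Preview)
The paper does not actually prove Theorem~\ref{thm:MSfundamental}; it is explicitly ``stated here without proof'' and attributed to \cite{MScanon0, MScanon1}. So there is no in-paper argument to compare against.

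That said, your proposal is correct and is essentially the standard proof. The base case is right: matching pairs are exactly pairs realizing the same atomic type, and finiteness of the vocabulary makes the disjunction of $\cA$-types a legitimate quantifier-free separating formula. In the strategy-to-formula direction, your use of the oblivious Duplicator reply is exactly the move the paper itself flags (``the oblivious strategy is winning'' if any strategy is), and it is what makes the induced $(r-1)$-round instance $(\cA',\cB')$ well-defined and separable by the inductive hypothesis. In the formula-to-strategy direction, your single-quantifier extraction is valid: the prenex rewriting rules $\lnot Qy\,\theta \equiv \bar Q y\,\lnot\theta$ and $(Qy\,\theta)\circ\chi \equiv Qy(\theta\circ\chi)$ for $\circ\in\{\land,\lor\}$ (after $\alpha$-renaming so $y$ is fresh for $\chi$) each preserve the total quantifier count, so any formula with at least one quantifier is equivalent to some $Qy\,\psi$ with $\psi$ having exactly one fewer quantifier. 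The remaining verifications you list are indeed routine.

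One small sharpening worth making explicit when you write it up: in the strategy-to-formula step you should note that $\bS$'s choice of \emph{which side} to play on in round~$1$ does not depend on $\bD$, since $\bS$ moves first; this is what makes the case split into ``$\exists$'' and ``$\forall$'' well-defined and matches the paper's Definition~\ref{def:pattern} and Lemma~\ref{lem:pattern}.
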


In the theorem above, if $\cA$ and $\cB$ are sets of \emph{unpebbled} structures, and $\varphi$ is a sentence, we call $\varphi$ a \emph{separating sentence} for $(\cA, \cB)$.

We note that $\bD$ has a clear optimal strategy in the MS game, called the \emph{oblivious} strategy: for each of $\bS$'s moves, $\bD$ can make enough copies of each pebbled structure on the other side to play all possible responses at the same time. If $\bD$ has a winning strategy, then the oblivious strategy is winning. For this reason, the MS game is essentially a single-player game, where $\bS$ can simulate $\bD$'s responses himself.

We make an easy observation here without proof, that will help us \emph{discard} some boards during gameplay; we can remove them without affecting the result of the game. This will help us in the analysis of several results in the paper.
\begin{observation}\label{obs:discard}
During gameplay in any instance of the MS game, consider a board $\langle \bA ~|~ a_1, \ldots, a_k\rangle$ such that there is no board on the other side forming a matching pair with it. Then, $\langle \bA ~|~ a_1, \ldots, a_k\rangle$ can be removed from the game without affecting the result.
\end{observation}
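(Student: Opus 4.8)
The plan is to prove Observation~\ref{obs:discard} directly from the definition of a winning condition in the MS game, together with the monotonicity of the matching-pair relation under the addition of pebbles. The essential insight is that a board that currently has no matching partner on the other side can never acquire one: placing additional pebbles on a board can only impose \emph{more} constraints on what counts as a matching pair, so a non-matching board stays non-matching for the rest of the game.

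First I would fix notation and reduce to the key structural fact. Without loss of generality, suppose the board in question, $\langle \bA ~|~ a_1, \ldots, a_k\rangle$, lies on the left side $\cA$; the argument is symmetric for the right side. The claim to establish is the following monotonicity statement: if $\langle \bA ~|~ a_1, \ldots, a_k\rangle$ and $\langle \bB ~|~ b_1, \ldots, b_k\rangle$ do \emph{not} form a matching pair, then for any extensions by one further pebble, $\langle \bA ~|~ a_1, \ldots, a_k, a_{k+1}\rangle$ and $\langle \bB ~|~ b_1, \ldots, b_k, b_{k+1}\rangle$ do not form a matching pair either. This is immediate from the definition: the map witnessing a matching pair on the extended boards restricts to a map on the original $k$ pebbled elements (together with the constants), and if that restriction fails to be an isomorphism of the induced substructures, no extension of it can be one. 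Iterating over the remaining rounds, $\langle \bA ~|~ a_1, \ldots, a_k\rangle$ and every descendant obtained by adding pebbles can never match any board descended from $\langle \bB ~|~ b_1, \ldots, b_k\rangle$.

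Next I would apply this to the full game. By hypothesis, at the current moment no board on the right side matches $\langle \bA ~|~ a_1, \ldots, a_k\rangle$. Over the subsequent rounds, every board arises by repeatedly adding pebbles (and, on $\bD$'s side, by copying, which does not change the matching status of a board with any fixed opponent). By the monotonicity fact applied to each right-side board in turn, no descendant of $\langle \bA ~|~ a_1, \ldots, a_k\rangle$ will ever form a matching pair with any right-side board at the end of round $r$. Consequently, this board and all its descendants contribute nothing toward $\bD$'s winning condition, which requires \emph{some} left-right matching pair at the end. Removing $\langle \bA ~|~ a_1, \ldots, a_k\rangle$ therefore leaves the set of possible final matching pairs unchanged, so the two games (with and without the board) have identical outcomes.

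The only point requiring slight care -- and the closest thing to an obstacle -- is the interaction with $\bS$'s moves: if $\bS$ chooses to play on the left side in some round, he must place a pebble on \emph{every} left board, so deleting a board changes what $\bS$ is forced to do. I would address this by noting that $\bS$'s obligation is per-board and independent across boards; removing a board simply removes one (irrelevant) obligation without constraining his choices on the surviving boards, and by the argument above that removed board could never have helped $\bD$ anyway. Hence neither player's achievable outcomes are affected, completing the proof.
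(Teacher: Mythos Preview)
Your proof is correct and follows the natural line of reasoning. The paper itself does not prove this observation at all, stating explicitly that it is ``an easy observation here without proof''; your argument via monotonicity of the matching-pair relation under pebble extension, together with the irrelevance of the discarded board to either player's winning condition, is exactly the justification one would supply if asked to fill in the details.
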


\medskip \noindent{\bf Linear Orders. } Let $\tau_\mathsf{ord} = \angle{< ;~ \mathsf{min},\mathsf{max}}$ be the vocabulary of orders, where $<$ is a binary predicate, and $\mathsf{min}$ and $\mathsf{max}$ are constant symbols. For every $\ell \geq 1$, we shall use $L_\ell$ to refer to a structure of type $\tau_\mathsf{ord}$, which interprets $<$ as a total linear order on $\ell + 1$ elements, and $\mathsf{min}$ and $\mathsf{max}$ as the first and last elements in that total order respectively. Note that there is only one linear order for any fixed value of $\ell$. When unambiguous, we may suppress the subscript and refer to the linear order as simply $L$.

We define the \emph{length} of a linear order $L$ as the size of its universe minus one (equivalently, as the number of edges if the linear order were represented as a path graph). Hence, the length of $L_\ell$ is $\ell$. Since we only consider $\ell \geq 1$, the length is always positive, and $\mathsf{min}$ and $\mathsf{max}$ are necessarily distinct. Our convention is different from \cite{MScanon1} and \cite{MScanon2}, where the length of a linear order was the number of elements, and the vocabulary had no built-in constants. Note that having $\mathsf{min}$ and $\mathsf{max}$ is purely for convenience; each can be defined and reused at the cost of two quantifiers.

Let $L$ be a linear order with elements $a < b$. The linear order $L[a, b]$ is the induced linear order on all elements from $a$ to $b$, both inclusive. If the variables $x$ and $y$ have been interpreted by $L$ so that $x^L = a$ and $y^L = b$, then we shall use $L[x, y]$ and $L[a, b]$ interchangeably; we adopt a similar convention for constants. If pebbles $\r$ and $\b$ have been placed on $L$ on $a$ and $b$ respectively, we use $L[\r, \b]$ to mean $L[a, b]$.

We will frequently need to consider sets of linear orders. For $\ell \geq 1$, we will use the notation $L_{\leq \ell}$ to denote the set of linear orders of length at most $\ell$, and $L_{> \ell}$ to denote the set of linear orders of length greater than $\ell$.

\medskip \noindent{\bf Strings. } Let $\tau_\mathsf{string} = \langle <, ~ S ~;~ \mathsf{min},\mathsf{max}\rangle$ be the vocabulary of binary strings, where $<$ is a binary predicate, $S$ is a unary predicate, and $\mathsf{min}$ and $\mathsf{max}$ are constant symbols. We encode a string $w = (w_1, \ldots, w_n) \in \{0,1\}^n$ by the $\tau_\mathsf{string}$-structure $\mathbf{B}_w$ having universe $B_w = \{1, \dots, n\}$, relation $<$ interpreted by the linear order on $\{1, \dots, n\}$, relation $S = \{ i ~|~ w_i = 1 \} $, and $\mathsf{min}$ and $\mathsf{max}$ interpreted as $1$ and $n$ respectively.

For an $n$-bit string $w$, and $i,j$ such that $1 \leq i \leq j \leq n$, denote by $w[i, j]$ the substring $w_i\ldots w_j$ of $w$. Note that $w[i, j]$ corresponds to the induced substructure of $\mathbf{B}_w$ on $\{i, \ldots, j\}$. We will often interchangeably talk about the string $w$ and the $\tau_\mathsf{string}$-structure $\mathbf{B}_w$, when the context is clear. As in $\tau_{\mathsf{ord}}$, having $\mathsf{min}$ and $\mathsf{max}$ in the vocabulary is purely for convenience.

\section{Parallel Play}\label{sec:parallel}

In this section, we prove our key lemma, that shows how, in certain cases, $\bS$ can combine his winning strategies in two sub-games, playing them in parallel in a single game that requires no more rounds than the longer of the two sub-games.

To understand why this is helpful, note that in general, if a formula $\varphi$ is of the form $\varphi_1\land\varphi_2$ or $\varphi_1\lor\varphi_2$, the number of quantifiers in $\varphi$ is the sum of the number of quantifiers in $\varphi_1$ and $\varphi_2$, even if the two subformulas have the same quantifier structure. We will see that playing parallel sub-games roughly corresponds to taking a $\varphi$ of the form $\varphi_1\land\varphi_2$ or $\varphi_1\lor\varphi_2$ where the subformulas have the same quantifier prefix, and writing $\varphi$ with the same quantifier prefix as $\varphi_1$ or $\varphi_2$, saving half the quantifiers we normally require.

Suppose $\bS$ has a winning strategy for an instance $(\cA, \cB)$ of the $r$-round MS game. In principle, the choice of which side $\bS$ plays on could depend on $\bD$'s previous responses. However, note that any strategy $\cS$ used by $\bS$ that wins against the oblivious strategy also wins against any other strategy that $\bD$ plays. Therefore, we may WLOG restrict ourselves to strategies used by $\bS$ against $\bD$'s oblivious strategy. It follows that the choice of which side to play on in every round is completely determined by the instance $(\cA, \cB)$, and independent of any of $\bD$'s responses. Let $\cS$ be such a winning strategy for $\bS$. We now define the \emph{pattern} of $\cS$, which specifies which side $\bS$ plays on in each round, when following $\cS$.

\begin{definition}\label{def:pattern}
Suppose $\cA$ and $\cB$ are sets of pebbled structures, and assume that $\bS$ has a winning strategy $\cS$ for the $r$-round $\ms$ game on $(\cA, \cB)$. The \emph{pattern} of $\cS$, denoted $\mathsf{pat}(\cS)$, is 
an $r$-tuple $(Q_1, \ldots, Q_r) \in \{\exists, \forall\}^r$, where:
\begin{equation*}
    Q_i = \begin{cases}
    \exists  & \text{ if $\bS$ plays in $\cA$ in round $i$,} \\
    \forall  & \text{ if $\bS$ plays in $\cB$ in round $i$.}
\end{cases}
\end{equation*}
We say that $\bS$ \emph{wins the game with pattern
$(Q_1,\ldots,Q_r)$} if $\bS$ has a winning strategy $\cS$ for the game in which $\mathsf{pat}(\cS) = (Q_1,\ldots,Q_r)$.
\end{definition}  

The following lemma is implicit in the proof of Theorem \ref{thm:MSfundamental}.

\begin{lemma}\label{lem:pattern}
For any two sets $\cA$ and $\cB$ of pebbled $\tau$-structures, the following are equivalent:
\begin{enumerate}
\item $\bS$ wins the $r$-round $\ms$ game on $(\cA, \cB)$ with pattern $(Q_1,\ldots,Q_r)$.
\item $(\cA,\cB)$ has a separating formula with $r$ quantifiers and quantifier prefix $(Q_1,\ldots,Q_r)$.
\end{enumerate}
\end{lemma}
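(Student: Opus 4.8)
The plan is to prove Lemma~\ref{lem:pattern} by induction on $r$, establishing both directions simultaneously and tracking the quantifier prefix, since this is really a refinement of the Fundamental Theorem (Theorem~\ref{thm:MSfundamental}) that additionally records \emph{which side} $\bS$ plays on in each round. The base case $r = 0$ is immediate: $\bS$ wins the $0$-round game on $(\cA, \cB)$ iff no board in $\cA$ forms a matching pair with any board in $\cB$, which is exactly the condition that a quantifier-free formula (a Boolean combination of atomic formulas over the pebbled variables) separates $\cA$ and $\cB$. The empty prefix is the only possibility, so there is nothing further to check.

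For the inductive step, suppose the claim holds for $r - 1$ rounds and all pairs of structure-sets. I would split on the value of $Q_1$. Consider the case $Q_1 = \exists$ (the $Q_1 = \forall$ case is dual). For the direction $(1) \Rightarrow (2)$: if $\bS$ wins with pattern $(\exists, Q_2, \ldots, Q_r)$, then by definition of the pattern his first move is on the left side $\cA$, placing a new pebble $y_1$ on an element of every board in $\cA$; since we may assume $\bD$ plays obliviously, her response yields on the right every possible placement of $y_1$. This produces a new pair $(\cA', \cB')$ on which $\bS$ wins the $(r-1)$-round game with pattern $(Q_2, \ldots, Q_r)$. By induction there is a separating formula $\psi(x_1, \ldots, x_k, y_1)$ with prefix $(Q_2, \ldots, Q_r)$; I would then set $\varphi = \exists y_1\, \psi$ and verify it separates $(\cA, \cB)$ with the desired prefix $(\exists, Q_2, \ldots, Q_r)$. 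The key point is that $\bS$'s choice to place $y_1$ on a \emph{specific} element of each left board is precisely witnessed by the existential quantifier, while $\bD$'s oblivious response over all right-board placements corresponds to the formula needing to fail on every such placement. For the converse $(2) \Rightarrow (1)$, I would reverse this: given $\varphi = \exists y_1\, \psi$ with prefix $(\exists, Q_2, \ldots, Q_r)$ separating $(\cA, \cB)$, the truth of $\varphi$ on each left structure supplies a witness element on which $\bS$ plays $y_1$, and the falsity of $\varphi$ on each right structure means $\psi$ fails under every placement of $y_1$; this is exactly the oblivious $\bD$ board-set $(\cA', \cB')$, on which $\psi$ separates with prefix $(Q_2, \ldots, Q_r)$, and induction gives $\bS$ a winning strategy there, which prepends to a winning strategy with the full pattern.

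The main obstacle, and the step deserving the most care, is the bookkeeping of the correspondence between $\bD$'s oblivious response and the semantics of the quantifier, making sure the induction hypothesis is applied to the right transformed pair $(\cA', \cB')$ and that the matching-pair condition (not mere isomorphism of the underlying structures) is preserved through the reduction. One must be careful that when $\bS$ plays on $\cA$, the resulting left boards carry the chosen witness while the right boards are expanded by $\bD$ to all placements, and that this asymmetry is faithfully mirrored by $\exists$ binding a variable that is existentially witnessed on the left and universally quantified-away on the right via negation. The dual case $Q_1 = \forall$ is handled symmetrically, with $\bS$ playing on $\cB$ and a leading $\forall$ quantifier, using that $\lnot \forall y_1\, \psi \equiv \exists y_1\, \lnot\psi$ so the separation roles of $\cA$ and $\cB$ swap consistently. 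I expect the argument to otherwise be routine, as it merely threads the prefix through the standard proof of Theorem~\ref{thm:MSfundamental}, which is why the lemma is described as implicit in that proof.
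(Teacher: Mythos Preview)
Your proposal is correct and is exactly the argument the paper has in mind: the paper does not give an explicit proof, merely stating that the lemma ``is implicit in the proof of Theorem~\ref{thm:MSfundamental},'' and your induction on $r$ tracking the leading quantifier is precisely how one threads the prefix through that standard proof. There is nothing substantively different to compare.
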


Note that Lemma \ref{lem:pattern} implies that, as long as there is a separating formula $\varphi$ for $(\cA, \cB)$ with $r$ quantifiers, $\bS$ has a winning strategy for the $r$-round MS game on $(\cA, \cB)$ that ``follows'' $\varphi$; namely, if $\varphi = Q_1\ldots Q_r\psi$, then in round $i$, $\bS$ plays in $\cA$ if $Q_i = \exists$, and in $\cB$ if $Q_i = \forall$. Hence, for the rest of the paper, we will refer to $\bS$ moves in $\cA$ and $\cB$ as \emph{existential} and \emph{universal} moves respectively. We are now ready to state our main lemma from this section.

\begin{lemma}[Parallel Play Lemma]\label{lem:parallelplay}
Let  $\cA$ and $\cB$ be two sets of pebbled structures, and let $r \in \mathbb{N}$. Suppose that $\cA$ and $\cB$ can be partitioned as $\cA = \cA_1 \sqcup \cA_2$ and $\cB = \cB_1 \sqcup \cB_2$
respectively, such that for $1 \leq i \leq 2$, $\bS$ has a winning strategy $\cS_i$ for the $r$-round MS game on $(\cA_i,\cB_i)$, satisfying the following conditions:
\begin{enumerate}
\item Both $\cS_i$'s have the same pattern $P = \mathsf{pat}(\cS_1)=\mathsf{pat}(\cS_2)$.
\item At the end of the sub-games, both of the following are true:
\begin{itemize}
    \item There does not exist a board in $\cA_1$ and a board in $\cB_2$ forming a matching pair.
    \item There does not exist a board in $\cA_2$ and a board in $\cB_1$ forming a matching pair.
\end{itemize}
\end{enumerate}
Then $\bS$ wins the $r$-round MS game on $(\cA,\cB)$ with pattern $P$.
\end{lemma}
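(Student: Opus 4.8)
The plan is to have $\bS$ play the two sub-games literally in parallel. In the combined game on $(\cA,\cB)$, whenever the shared pattern $P=(Q_1,\ldots,Q_r)$ instructs him to play existentially (on the $\cA$-side), he pebbles the boards of $\cA_1$ according to $\cS_1$ and simultaneously the boards of $\cA_2$ according to $\cS_2$; symmetrically, on a universal round he applies $\cS_1$ to the boards of $\cB_1$ and $\cS_2$ to the boards of $\cB_2$. The first point to verify is that this prescription is a legal single move, and this is exactly where Condition~1 is used: since $\mathsf{pat}(\cS_1)=\mathsf{pat}(\cS_2)=P$, in every round both sub-strategies call for a move on the \emph{same} side, so their union is a well-defined move on one side of the combined game. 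Had the patterns disagreed in some round, one strategy would demand an existential move and the other a universal move, and no single move of the combined game could realize both.

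Since (as noted after Definition~\ref{def:pattern}) it suffices for $\bS$ to defeat $\bD$'s oblivious strategy, I would establish, by induction on the round number, the following invariant: after each round, the collection of origin-$\cA_1$ and origin-$\cB_1$ boards appearing in the combined game coincides exactly with the board configuration of the stand-alone sub-game on $(\cA_1,\cB_1)$ after the same number of rounds, and likewise for the origin-$2$ boards and the sub-game on $(\cA_2,\cB_2)$. The key reason this holds is that $\bD$'s oblivious response is computed board-by-board — she copies each board and plays every possible pebble placement on it — so the evolution of the $\cB_1$-boards depends only on $\bS$'s move on them (which is $\cS_1$'s move, identical in both games) and is unaffected by the presence of the $\cB_2$-boards; the same reasoning applies on the $\cA$-side and to the origin-$2$ boards. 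Thus the combined game, restricted to boards of a fixed origin $j$, reproduces the sub-game on $(\cA_j,\cB_j)$ exactly.

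Granting the invariant, the conclusion follows by classifying, at the end of round $r$, every candidate matching pair $(\bX,\bY)$ with $\bX\in\cA$ and $\bY\in\cB$ according to the origins of $\bX$ and $\bY$. If both have origin $1$, they cannot form a matching pair because $\cS_1$ wins the sub-game on $(\cA_1,\cB_1)$; if both have origin $2$, because $\cS_2$ wins on $(\cA_2,\cB_2)$; and the two mixed cases (origin $1$ against origin $2$, and the reverse) are excluded precisely by the two bullets of Condition~2, whose final-state hypotheses refer to exactly this joint parallel evolution. Since no matching pair survives, $\bS$ wins the $r$-round game on $(\cA,\cB)$; and by construction he plays on the side dictated by $P$ in every round, so the pattern of his combined strategy is $P$, as required.

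I expect the main obstacle to be stating the parallel evolution rigorously — in particular, making precise the sense in which Condition~2's phrase ``at the end of the sub-games'' is meaningful, given that $\cA_1$ and $\cB_2$ nominally live in different sub-games. The clean remedy is to define up front a single joint evolution of the four sets $\cA_1,\cB_1,\cA_2,\cB_2$ driven by the common pattern $P$, and to phrase both the invariant and Condition~2 with respect to that one process; the board-locality of the oblivious strategy is exactly what guarantees this joint process restricts to each stand-alone sub-game on its own boards.
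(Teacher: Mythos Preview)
Your proposal is correct and follows essentially the same approach as the paper's proof: have $\bS$ play $\cS_1$ and $\cS_2$ simultaneously (well-defined because the patterns agree), then classify putative final matching pairs by the origins of the two boards, eliminating the same-origin cases by the winningness of each $\cS_i$ and the mixed-origin cases by Condition~2. The paper's argument is much terser and does not spell out the board-locality invariant for the oblivious strategy, but your added care there is sound and does not depart from the paper's line.
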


\begin{proof}
$\bS$ plays the $r$-round MS game on $(\cA,\cB)$ by playing his winning strategy $\cS_1$ on $(\cA_1,\cB_1)$, and his winning strategy $\cS_2$ on $(\cA_2, \cB_2)$, \emph{simultaneously} in parallel. This is a well-defined strategy, since every $\cS_i$ has the same pattern $P$. At the end of the game:
\begin{itemize}
\item for $i = j$, no board from $\cA_i$ forms a matching pair with a board from $\cB_j$, since $\bS$ wins the sub-game $(\cA_i, \cB_i)$.
\item for $i \neq j$, no board from $\cA_i$ forms a matching pair with a board from $\cB_j$, by assumption.
\end{itemize}
Therefore, no matching pair remains after round $r$, and so, $\bS$ wins the game. The pattern for this strategy is $P$ by construction.
\end{proof}

We observe that Lemma \ref{lem:parallelplay} can be generalized in two ways. Firstly, we could split into $k$ sub-games instead of two. Secondly, we can weaken assumption 1 in the statement of the lemma, so that each of the patterns is a subsequence of some $r$-tuple $P = \{\exists, \forall\}^r$. This is because $\bS$ can simply extend the strategy $\cS_i$ with pattern $P_i$ to a strategy $\cS'_i$ with pattern $P$, where for every ``missing'' entry in the tuple $P$, $\bS$ makes a dummy move on the corresponding side. We state a generalized version below without a proof.

\begin{lemma}[Generalized Parallel Play Lemma]\label{lem:genparallelplay}
Let  $\cA$ and $\cB$ be two sets of pebbled structures, and let $r \in \mathbb{N}$. Let $P \in \{\exists, \forall\}^r$ be a sequence of quantifiers of length $r$. Suppose that $\cA$ and $\cB$ can be partitioned as $\cA = \cA_1 \sqcup \ldots \sqcup \cA_k$ and $\cB = \cB_1 \sqcup \ldots \sqcup \cB_k$ respectively, such that for all $1 \leq i \leq k$, $\bS$ has a winning strategy $\cS_i$ for the $r_i$-round MS game on $(\cA_i,\cB_i)$ (where $r_i \leq r$), satisfying the following conditions:
\begin{enumerate}
\item For all $i$, $\mathsf{pat}(\cS_i)$ is a subsequence of $P$.
\item At the end of the sub-games, for $i \neq j$, there does not exist a board in $\cA_i$ and a board in $\cB_j$ forming a matching pair.
\end{enumerate}
Then $\bS$ wins the $r$-round MS game on $(\cA,\cB)$ with pattern $P$.
\end{lemma}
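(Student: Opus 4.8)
The plan is to reduce the Generalized Parallel Play Lemma to the basic Parallel Play Lemma (Lemma~\ref{lem:parallelplay}) by first homogenizing all the sub-game patterns to the common pattern $P$, and then iterating the two-way combination $k-1$ times. The key observation, already sketched in the paragraph preceding the statement, is that a winning strategy $\cS_i$ with $\mathsf{pat}(\cS_i)$ a subsequence of $P$ can be ``padded'' into a winning strategy $\cS_i'$ whose pattern is exactly $P$.

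First I would make the padding step precise. Fix an index $i$. By assumption $\mathsf{pat}(\cS_i)$ is a subsequence of $P = (Q_1, \ldots, Q_r)$, so there is a strictly increasing injection $\iota \colon \{1, \ldots, r_i\} \to \{1, \ldots, r\}$ identifying the rounds of the sub-game with a subset of the rounds $1, \ldots, r$, such that $Q_{\iota(t)}$ matches the side played at the $t$-th move of $\cS_i$. I would define $\cS_i'$ to play, in round $j \in \{1, \ldots, r\}$, as follows: if $j = \iota(t)$ for some $t$, then $\bS$ makes the move prescribed by $\cS_i$ at its $t$-th step (on side $\cA_i$ if $Q_j = \exists$, on side $\cB_i$ if $Q_j = \forall$, consistently with the pattern being a subsequence); if $j$ is not in the image of $\iota$, then $\bS$ makes a \emph{dummy} move on the side dictated by $Q_j$ --- for instance, playing a fresh pebble on an element already pebbled, or on $\mathsf{min}$. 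Since appending pebbles can only refine (never coarsen) the matching-pair relation, and since at the end of the original $r_i$-round sub-game no board of $\cA_i$ matched any board of $\cB_i$, the same non-matching condition holds at the end of the padded $r$-round game; hence $\cS_i'$ is a winning strategy for the $r$-round game on $(\cA_i, \cB_i)$ with $\mathsf{pat}(\cS_i') = P$. Crucially, padding does not create any new matching pairs across different indices either, so condition~2 of the hypothesis is preserved: for $i \neq j$, no board of $\cA_i$ matches a board of $\cB_j$ at the end.

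With all patterns equalized to $P$, I would then induct on $k$. The base case $k = 1$ is immediate, and the case $k = 2$ is exactly Lemma~\ref{lem:parallelplay}, whose two hypotheses (equal patterns, and no cross matching pairs) are precisely what the padding step guarantees. For the inductive step, I would group the partition as $\cA = (\cA_1 \sqcup \cdots \sqcup \cA_{k-1}) \sqcup \cA_k$ and likewise for $\cB$. By the inductive hypothesis applied to the first $k-1$ blocks (all with pattern $P$), $\bS$ wins the $r$-round game on $(\cA_1 \sqcup \cdots \sqcup \cA_{k-1}, \cB_1 \sqcup \cdots \sqcup \cB_{k-1})$ with pattern $P$; call this combined strategy $\cT$, which still leaves no matching pair between $\bigsqcup_{i<k}\cA_i$ and $\bigsqcup_{i<k}\cB_i$. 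Now I would apply Lemma~\ref{lem:parallelplay} to the two-block partition with $\cT$ on the first block and $\cS_k'$ on the second. The two hypotheses hold: both have pattern $P$, and the cross-matching condition between $\bigsqcup_{i<k}\cA_i$ and $\cB_k$ (and between $\cA_k$ and $\bigsqcup_{i<k}\cB_i$) follows from condition~2 of the original hypothesis, since any such cross pair would be a pair $(\cA_i, \cB_k)$ or $(\cA_k, \cB_j)$ with $i, j < k$, all of which are ruled out by assumption.

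The main obstacle I anticipate is making the padding argument airtight regarding the interaction between dummy moves and the matching-pair relation, and ensuring $\bD$'s oblivious responses to dummy moves cannot resurrect a matching pair. The clean way to handle this is to observe, as the paper already notes via Observation~\ref{obs:discard} and the obliviousness of $\bD$, that refining a board by an additional pebble can only destroy matching pairs, never create them; a dummy pebble placed on an already-pebbled element (or on a distinguished constant like $\mathsf{min}$) forces $\bD$'s best response to place the corresponding pebble identically, so the set of surviving matching pairs is unchanged by the dummy round. This reduces the whole lemma to bookkeeping on top of Lemma~\ref{lem:parallelplay}, which is why the authors state it without proof.
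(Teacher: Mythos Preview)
Your proposal is correct and follows precisely the approach the paper sketches in the paragraph preceding the lemma (which is stated without proof): pad each $\cS_i$ with dummy moves at the rounds of $P$ not used by $\mathsf{pat}(\cS_i)$ to obtain a strategy with pattern exactly $P$, then combine the $k$ sub-games in parallel. Your induction on $k$ reducing to Lemma~\ref{lem:parallelplay} is the natural way to formalize the $k$-way split, and your discussion of why dummy moves cannot resurrect matching pairs fills in exactly the detail the authors leave implicit.
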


Note that Lemmas \ref{lem:parallelplay} and \ref{lem:genparallelplay} can be applied in conjunction with Observation \ref{obs:discard} as long as there is at least one structure remaining on either side, since a winning strategy (and therefore its corresponding pattern) is unaffected if some of the pebbled structures in the instance are deleted. Furthermore, in many cases, we can provide a strategy for $\bS$ where condition 2 in Lemmas \ref{lem:parallelplay} and Lemma \ref{lem:genparallelplay} will be automatically met after the first move, and therefore will continue to be satisfied at the end of the game. We shall use these two facts implicitly in the proofs that follow.
\section{Linear Orders}\label{sec:linearorders}


As noted in Section \ref{sec:intro}, the results in this section are similar to those in \cite{MScanon1, MScanon2}, but somewhat more nuanced, leading ultimately to the \emph{quantifier alternation theorems} (Theorems \ref{thm:alternation1} and \ref{thm:alternation2}). Instead of the unwieldy function $g(\cdot)$ studied in those papers, we study the simpler function $q(\cdot)$, which, given an integer $\ell$, returns the minimum number of quantifiers needed to separate $L_{\leq \ell}$ from $L_{>\ell}$. A key result, not appreciated in \cite{MScanon1, MScanon2}, is that the number of quantifiers needed to separate two linear orders of different sizes never exceeds the quantifier rank needed by more than one (Theorem \ref{thm:q_vs_r}).

\smallskip

Let $r(\ell)$ (resp.~$q(\ell)$) be the minimum QR (resp.~QN) needed to separate $L_{\leq \ell}$ and $L_{>\ell}$. Let $q_\forall(\ell)$ (resp.~$q_\exists(\ell)$) be the minimum number of quantifiers needed to separate $L_{\leq \ell}$ and $L_{>\ell}$ with a sentence whose prenex normal form starts with $\forall$ (resp.~$\exists$). Note that $q(\ell) = \min(q_\forall(\ell),q_\exists(\ell))$. The
values of $r(\ell)$ are well understood \cite{ROSENSTEIN:1982}:

\begin{theorem}[Quantifier Rank, \cite{ROSENSTEIN:1982}]\label{rosey-fact}
For $\ell\geq 1$, we have $r(\ell) = 1 + \floor{\log(\ell)}$.
\end{theorem}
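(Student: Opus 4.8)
The plan is to prove this via Ehrenfeucht--Fra\"iss\'e (EF) games, the quantifier-rank analogue of the MS game: by the standard EF correspondence, a sentence of quantifier rank $r$ separating $L_{\leq\ell}$ from $L_{>\ell}$ exists if and only if no order in $L_{\leq\ell}$ is $r$-equivalent (agrees on all sentences of QR $r$, equivalently $\bD$ wins the $r$-round EF game) to any order in $L_{>\ell}$. Since $\tau_\mathsf{ord}$ supplies $\mathsf{min}$ and $\mathsf{max}$ as constants, every board in such a game begins with the two endpoints already matched, so the only degree of freedom is the single open interval between them, which has $\ell-1$ interior elements in $L_\ell$. The whole theorem will rest on one combinatorial lemma about that interval.

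The key step is the \emph{EF threshold lemma}: for $\ell,\ell'\ge 1$ and $r\ge 0$, $\bD$ wins the $r$-round EF game on $(L_\ell,L_{\ell'})$ iff $\ell=\ell'$ or both $\ell,\ell'\ge 2^r$. I would prove it by induction on $r$, strengthening the statement to an arbitrary game position described by its sequence of gap sizes (numbers of interior elements between consecutive pebbles, with the endpoints counted as pebbles): a position is a $\bD$-win for $r$ more rounds iff every pair of corresponding gaps $(g,h)$ is \emph{good}, meaning $g=h$ or $g,h\ge 2^r-1$. The base case $r=0$ is immediate, since $2^0-1=0$ makes every gap good and the initial map is a partial isomorphism. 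In the inductive step the $\bD$-direction is routine: on a gap with $g=h$ she copies $\bS$'s split, and on a gap with $g,h\ge 2^r-1$ a short calculation (using $2^r-2 = 2(2^{r-1}-1)$) shows any split of one side into $(p,q)$ can be answered on the other by a split keeping each part equal or both $\ge 2^{r-1}-1$; untouched gaps stay good since $2^r-1\ge 2^{r-1}-1$.

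The harder $\bS$-direction is where I expect the real work, and it is the main obstacle: given a bad gap pair with $g<h$ and $g\le 2^r-2$, I must exhibit a move after which some sub-gap pair is bad at level $r-1$. The strategy is binary search: $\bS$ bisects the larger interval $h$ into balanced halves $p',q'$, and whatever split $(p,q)$ of $g$ $\bD$ plays, at least one of $(p,p')$, $(q,q')$ is unequal (since $p+q=g-1<h-1=p'+q'$) and has its smaller part $\le 2^{r-1}-2$. The delicate point, which must be checked by case analysis, is that $\bD$ cannot make \emph{both} sub-pairs good: both cannot be ``both-large'' (their sizes would sum to at least $2^r-2>g-1$), and the mixed case (one ``both-large'', one equal) is ruled out by comparing the forced balanced half of $h$ against the bound $g\le 2^r-2$; the boundary case $g=0<h$ is where $\bS$ simply plays an interior point of the $h$-gap that $\bD$ cannot match. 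Pinning the constant down to exactly $2^r$ (equivalently the $2^r-1$ gap threshold) is what makes the final floor/log expression come out right.

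Finally I would assemble the theorem. A QR-$r$ separating sentence forces $L_\ell\not\equiv_r L_{\ell+1}$; conversely, once $\ell<2^r$ the lemma gives that no length $\le\ell$ is $r$-equivalent to any length $>\ell$ (two lengths are $r$-equivalent only if equal or both $\ge 2^r$, while everything in $L_{\le\ell}$ is $<2^r$), so the disjunction of the finitely many QR-$r$ Hintikka sentences of the orders in $L_{\le\ell}$ is an explicit separating sentence. Hence a separating sentence of QR $r$ exists iff $\ell<2^r$, giving $r(\ell)=\min\{r:\ell<2^r\}=\min\{r:2^r\ge \ell+1\}=\lceil\log(\ell+1)\rceil$, which equals $1+\floor{\log(\ell)}$ for every $\ell\ge 1$ by the identity $\lceil\log(\ell+1)\rceil=1+\floor{\log(\ell)}$. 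In particular the lower bound $r(\ell)\ge 1+\floor{\log(\ell)}$ is the $\bD$-direction of the lemma (when $\ell\ge 2^{\floor{\log(\ell)}}$, lengths $\ell$ and $\ell+1$ are $\floor{\log(\ell)}$-equivalent), and the matching upper bound is the $\bS$-direction.
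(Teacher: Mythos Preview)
The paper does not supply its own proof of this theorem: it is quoted as a known fact from Rosenstein's book and is used as a black box throughout. Your proposal is a correct, self-contained proof along the classical lines one would find in that reference (or in any standard treatment of EF games on linear orders): you reduce to an EF-game threshold lemma, prove it by induction via the ``gap'' invariant, and read off the exact formula. The case analysis you outline for the $\bS$-direction is sound; in particular the mixed case (one sub-pair equal, the other both-large) is indeed ruled out because $p=p'\ge\lfloor(h-1)/2\rfloor\ge\lfloor g/2\rfloor$ together with $q\ge 2^{r-1}-1$ forces $g-1=p+q\ge \lfloor g/2\rfloor+2^{r-1}-1$, hence $g\ge 2^r-1$, contradicting $g\le 2^r-2$. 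The final arithmetic identity $\lceil\log(\ell+1)\rceil=1+\lfloor\log\ell\rfloor$ is also correct for all $\ell\ge 1$. So your argument is fine; there is simply nothing in the paper to compare it against beyond the citation.
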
\quad

Since QR lower bounds QN, we have $r(\ell) \leq q(\ell)$ for all $\ell$. On the other hand, 
for each $\ell > 0$, we will show that $\bS$ can always separate $L_{\leq \ell}$ from $L_{>\ell}$ in a multi-structural game of at most $r(\ell) + 1$ rounds, which shows that $q(\ell) \leq r(\ell) + 1$.

For notational convenience, we denote by $\textrm{MSL}_{\exists,r}(\ell)$ an $r$-round MS game on $(L_{\leq \ell}, L_{> \ell})$, in which $\bS$ \emph{must} play an existential first round move. We use $\textrm{MSL}_{\forall,r}(\ell)$ analogously, where the first round move \emph{must} be universal. Observe that, \emph{a priori}, any such game may be winnable by either $\bS$ or $\bD$. Since we are primarily interested in upper bounds, we restrict our attention only to $\bS$-winnable games. We call such games simply \emph{winnable}.

\subsection{The Closest-to-Midpoint with Alternation Strategy}\label{sec:cma}

In this section, we describe a divide-and-conquer recursive strategy for $\bS$ to play winnable game instances $\textrm{MSL}_{Q,r}(\ell)$. 
This strategy will give us upper bounds on $q_\exists(\ell)$ and $q_\forall(\ell)$, which we will then relate to $r(\ell)$.

We define the \emph{closest-to-midpoint} of a linear order $L[x, y]$ as the element halfway between the elements corresponding to $x$ and $y$ if $L[x, y]$ has even length, or the element just left of center 
if $L[x, y]$ has odd length.

The $\bS$-winning strategy is called \emph{Closest-to-Midpoint with Alternation} ($\mathsf{CMA}$). The pattern for this strategy will alternate between $\exists$ and $\forall$, splitting each game recursively into two smaller sub-games that can be played in parallel using Lemma \ref{lem:parallelplay}. In these sub-games, placed pebbles will take on the roles of $\mathsf{min}$ and $\mathsf{max}$. $\bS$ continues in this way until the sub-games are on linear orders of length $2$ or less, at which point he can win them easily.

The idea is for $\bS$ to obey the following two rules throughout, except possibly the last three rounds:
\begin{itemize}
    \item $\bS$ starts on his designated side (determined by $Q$), and then alternates in every round;
    \item on every board, $\bS$ plays on the closest-to-midpoint of a linear order $L[x, y]$, chosen carefully to ensure he essentially ``halves'' the length of the instance every round.
\end{itemize}
Note that one consequence of the second point above is that $\bS$ will \textit{never} play on $\mathsf{max}$.

Before getting to a formal description of the strategy, let us illustrate the main idea through a worked example. Consider the (winnable) game $\textrm{MSL}_{\exists,4}(5)$. In round $1$, $\bS$ plays on the closest-to-midpoint of all boards in $L_{\leq 5}$ (by the two conditions in the $\mathsf{CMA}$ strategy). Before $\bD$'s response, we reach the position shown in Figure \ref{fig:strategy_sample_game}.

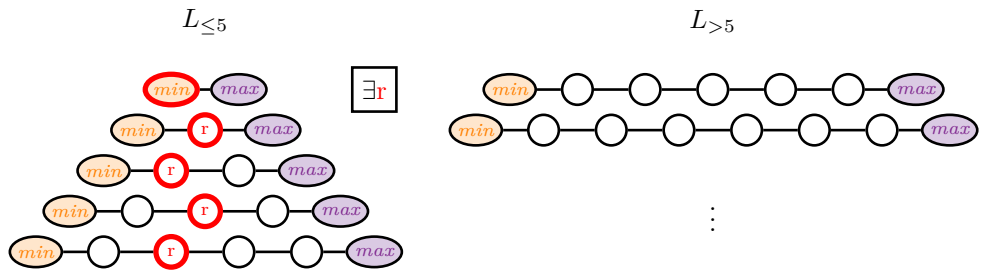
\begin{figure*}[ht]
\begin{center}
\tikzset{Red/.style={circle,draw=red, inner sep=0pt, minimum size=.9cc, line width=2pt}}
\tikzset{RedOuter/.style={circle,draw=red, inner sep=0pt, minimum size=1.2cc, line width=2pt}}
\tikzset{BlueOuter/.style={circle,draw=blue, inner sep=0pt, minimum size=1.25cc, line width=2pt}}
\tikzset{Blue/.style={circle,draw=blue, inner sep=0pt, minimum size=.9cc, line width=2pt}}
\tikzset{Green/.style={circle,draw=dg, inner sep=0pt, minimum size=.9cc, line width=2pt}}
\tikzset{Black/.style={circle,draw=black, inner sep=0pt, minimum size=.9cc, line width=1pt}}
\tikzset{TreeNode/.style={rectangle,draw=black, inner sep=0pt, minimum size=1.3cc, line width=1pt}}
\tikzset{BigTreeNode/.style={rectangle,draw=black, inner sep=0pt, minimum size=1.8cc, line width=1pt}}
\tikzset{Tleft/.style={ellipse,draw=black, inner sep=0pt, minimum size=.9cc,
    line width=1pt,fill=orange!20}}
\tikzset{TRleft/.style={ellipse,draw=red, inner sep=0pt, minimum size=.9cc,
    line width=2pt,fill=orange!20}}
\tikzset{Tright/.style={ellipse,draw=black, inner sep=0pt, minimum size=.9cc,
    line width=1pt,fill=indigo!20}}

\begin{tikzpicture}[scale=.09]
\node at (-35,74) {$L_{\leq 5}$};
\node at (40,74) {$L_{> 5}$};
\node at (40,46) {$\vdots$};

\node [TreeNode] (T0) at(-10,64)  {$\exists \r$};
\node [TRleft] (a0) at (-40,64) {{\scriptsize \mn}};
\node [Tright] (a1) at (-30,64) {{\scriptsize \mx}};
\node [Tleft] (b0) at (10,64) {{\scriptsize \mn}};
\node [Black] (b1) at (20,64) {{\scriptsize $$}};
\node [Black] (b2) at (30,64) {{\scriptsize $$}};
\node [Black] (b3) at (40,64) {{\scriptsize $$}};
\node [Black] (b4) at (50,64) {{\scriptsize $$}};
\node [Black] (b5) at (60,64) {{\scriptsize $$}};
\node [Tright] (b6) at (70,64) {{\scriptsize \mx}};

\foreach \from/\to in {a0/a1,b0/b1,b1/b2,b2/b3,b3/b4,b4/b5,b5/b6}
\draw[-,line width=1pt,color=black] (\from) -- (\to);

\node [Tleft] (1a0) at (-45,58) {{\scriptsize \mn}};
\node [Red]   (1a1) at (-35,58) {{\scriptsize $\r$}};
\node [Tright] (1a2) at (-25,58) {{\scriptsize \mx}};
\node [Tleft] (1b0) at (5,58) {{\scriptsize \mn}};
\node [Black] (1b1) at (15,58) {{\scriptsize $$}};
\node [Black] (1b2) at (25,58) {{\scriptsize $$}};
\node [Black] (1b3) at (35,58) {{\scriptsize $$}};
\node [Black] (1b4) at (45,58) {{\scriptsize $$}};
\node [Black] (1b5) at (55,58) {{\scriptsize $$}};
\node [Black] (1b6) at (65,58) {{\scriptsize $$}};
\node [Tright] (1b7) at (75,58) {{\scriptsize \mx}};
\foreach \from/\to in {1a0/1a1,1a1/1a2,1b0/1b1,1b1/1b2,1b2/1b3,1b3/1b4,1b4/1b5,1b5/1b6,1b6/1b7}
\draw[-,line width=1pt,color=black] (\from) -- (\to);

\node [Tleft] (2a0) at (-50,52) {{\scriptsize \mn}};
\node [Red] (2a1) at (-40,52) {{\scriptsize $\r$}};
\node [Black] (2a2) at (-30,52) {{\scriptsize $$}};
\node [Tright] (2a3) at (-20,52) {{\scriptsize \mx}};
\foreach \from/\to in {2a0/2a1,2a1/2a2,2a2/2a3}
\draw[-,line width=1pt,color=black] (\from) -- (\to);

\node [Tleft] (3a0) at (-55,46) {{\scriptsize \mn}};
\node [Black] (3a1) at (-45,46) {{\scriptsize $$}};
\node [Red] (3a2) at (-35,46) {{\scriptsize $\r$}};
\node [Black] (3a3) at (-25,46) {{\scriptsize $$}};
\node [Tright] (3a4) at (-15,46) {{\scriptsize \mx}};
\foreach \from/\to in {3a0/3a1,3a1/3a2,3a2/3a3,3a3/3a4}
\draw[-,line width=1pt,color=black] (\from) -- (\to);

\node [Tleft] (4a0) at (-60,40) {{\scriptsize \mn}};
\node [Black] (4a1) at (-50,40) {{\scriptsize $$}};
\node [Red] (4a2) at (-40,40) {{\scriptsize $\r$}};
\node [Black] (4a3) at (-30,40) {{\scriptsize $$}};
\node [Black] (4a4) at (-20,40) {{\scriptsize $$}};
\node [Tright] (4a5) at (-10,40) {{\scriptsize \mx}};
\foreach \from/\to in {4a0/4a1,4a1/4a2,4a2/4a3,4a3/4a4,4a4/4a5}
\draw[-,line width=1pt,color=black] (\from) -- (\to);
\end{tikzpicture}
\end{center}
\caption{The position after $\bS$'s round $1$ move in the game $\textrm{MSL}_{\exists,4}(5)$. The pebble $\r$ is on the closest-to-midpoint of every board on the left.}
\label{fig:strategy_sample_game}
\end{figure*}  

Now assume $\bD$ responds obliviously. We can first use Observation \ref{obs:discard} to discard all boards on the right with $\r$ on $\mathsf{max}$. By virtue of $\bS$'s first move, every board $\langle L ~|~ a_1\rangle$ on the left satisfies \emph{both} $L[\mathsf{min}, \r] \leq 2$, and $L[\r, \mathsf{max}] \leq 3$. Now consider any board $\langle L' ~|~ a'_1\rangle$ on the right. Note that either $L'[\mathsf{min}, \r] > 2$, or $L'[\r, \mathsf{max}] > 3$. Partition the right side as $\cB_1 \sqcup \cB_2$, where every $\langle L' ~|~ a'_1\rangle \in \cB_1$ satisfies $L'[\mathsf{min}, \r] > 2$, and every $\langle L' ~|~ a'_1\rangle \in \cB_2$ satisfies $L'[\r, \mathsf{max}] > 3$.

In round $2$, $\bS$ makes a universal move (by the first condition in the $\mathsf{CMA}$ strategy). In all boards in $\cB_1$, he plays pebble $\b$ on the closest-to-midpoint of $L'[\mathsf{min}, \r]$; similarly, in all boards in $\cB_2$, he plays pebble $\b$ on the closest-to-midpoint of $L'[\r, \mathsf{max}]$. Note that in either case, $\bS$ plays $\b$ on an element which is not on $\r$, $\mathsf{min}$, or $\mathsf{max}$.

After $\bD$ responds obliviously, we can use Observation \ref{obs:discard} to discard all boards on the left where $\b$ is on $\mathsf{min}$, $\mathsf{max}$, or $\r$. Since in particular this discards all boards on the left with $\r$ on $\mathsf{min}$, we can again use Observation \ref{obs:discard} to discard all boards from the right which have $\r$ on $\mathsf{min}$. Every remaining board in $\cB_1$ (resp.~$\cB_2$) corresponds to the isomorphism class $\mathsf{min} < \b < \r < \mathsf{max}$ (resp.~$\mathsf{min} < \r < \b < \mathsf{max}$). The remaining boards on the left also correspond to exactly one of those classes. Partition the left side as $\cA_1 \sqcup \cA_2$ accordingly.

Now, because of this difference in isomorphism classes, we will never obtain a matching pair from $\cA_1$ and $\cB_2$ (or from $\cA_2$ and $\cB_1$). Furthermore, for the rest of the game, $\bS$ will \emph{only} play inside $L[\mathsf{min}, \r]$ on all boards in $\cA_1$ and $\cB_1$, and inside $L[\r, \mathsf{max}]$ on all boards in $\cA_2$ and $\cB_2$. Suppose, in response to such a move on $\cA_1$, $\bD$ plays outside the range $L[\mathsf{min}, \r]$ on a board from $\cB_1$; the resulting board cannot form a partial match with any board from $\cA_1$ (since there is a discrepancy with $\r$), or with any board from $\cA_2$ (as observed already). Therefore, this board from $\cB_1$ can be discarded using Observation \ref{obs:discard}. A similar argument applies if $\bD$ ever responds outside the corresponding range in $\cB_2$, $\cA_1$, or $\cA_2$.

It follows that the sub-game $(\cA_1, \cB_1)$ (resp.~$(\cA_2, \cB_2$) corresponds \emph{exactly} to the game $\textrm{MSL}_{\forall, 3}(2)$ (resp.~$\textrm{MSL}_{\forall, 3}(3)$) where $\bS$ has already made his first move using the $\mathsf{CMA}$ strategy by playing a universal move on the closest-to-midpoints of the (relevant) linear orders. Since $\bS$ will alternate sides throughout, the patterns for both sub-game strategies will be the same.

We can now apply Lemma \ref{lem:parallelplay}. Observe that the lengths of the instances in the sub-games have been roughly halved, at the cost of a single move. The game then proceeds as shown in Figure \ref{fig:small_game_tree}. The leaves of the tree correspond to base cases (analyzed in Section \ref{sec:cmaformal}). The pattern of the strategy is preserved along all branches.
\begin{figure}[ht]
    \centering
    \includegraphics[scale=0.6]{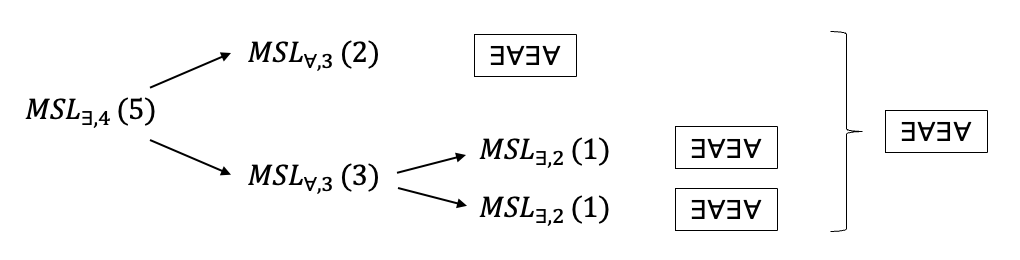}
    \caption{The $\textrm{MSL}_{\exists,4}(5)$ game tree. Each leaf is decorated with the associated quantifier prefix. All paths can be played in parallel using Lemma \ref{lem:genparallelplay} using the pattern $(\exists, \forall, \exists, \forall)$.}
    \label{fig:small_game_tree}
\end{figure}

\subsection{Formalizing the Strategy}\label{sec:cmaformal}

The first step in formalizing the $\mathsf{CMA}$ strategy for $\bS$ is to define four base cases, which we shall call \textit{irreducible} games. We assert the following (see Proposition \ref{prop:irreduciblepatterns} in Appendix \ref{app:lo-proofs}).
\begin{enumerate}
    \item $\textrm{MSL}_{\forall, 1}(1)$ is winnable with the pattern $(\forall)$.
    \item $\textrm{MSL}_{\exists, 2}(1)$ is winnable with the pattern $(\exists, \forall)$.
    \item $\textrm{MSL}_{\forall, 2}(2)$ is winnable with the pattern $(\forall, \forall)$.
    \item $\textrm{MSL}_{\forall, 3}(2)$ is winnable with the pattern $(\forall, \exists, \forall)$.
\end{enumerate}

The game $\textrm{MSL}_{\exists, 1}(1)$ is not winnable and hence not considered.

We now give a formalization of the inductive step. For a given quantifier $Q \in \{\exists, \forall\}$ and its complementary quantifier $\bar{Q}$, consider the game $\textrm{MSL}_{Q,k}(\ell)$. Note that if $\bS$ employs the $\mathsf{CMA}$ strategy 
the game splits into the two sub-games $\textrm{MSL}_{\bar{Q},k-1}(\ell')$ and $\textrm{MSL}_{\bar{Q},k-1}(\ell'')$. We designate this split as:
\begin{equation*}
    \textrm{MSL}_{Q,k}(\ell) \rightarrow  \textrm{MSL}_{\bar{Q},k-1}(\ell') \oplus \textrm{MSL}_{\bar{Q},k-1}(\ell'').
\end{equation*}
We will show in the proof of Lemma \ref{lem:cma-is-well-defined} that these sub-games can be played recursively, in parallel. When $\bS$ reaches an irreducible sub-game, he plays the winning patterns asserted above. We claim the following about the rules for splitting. The proof is in Appendix \ref{apx:linearorders}.

\begin{restatable}[Splitting Rules]{claim}{splitrules}\label{claim:splitrules}
    For $k \geq 3$, we have:
\begin{eqnarray} \label{msl-cases}
    \textrm{(i)~MSL}_{\exists,k}(2\ell) \rightarrow  \textrm{MSL}_{\forall,k-1}(\ell) \oplus \textrm{MSL}_{\forall,k-1}(\ell),~~~~~~~~~~~~~~~~~~~\ell \geq 1 \notag\\
    \textrm{(ii)~MSL}_{\exists,k}(2\ell + 1) \rightarrow  \textrm{MSL}_{\forall,k-1}(\ell) \oplus \textrm{MSL}_{\forall,k-1}(\ell+1),~~~~~~~~~\ell \geq 1 \\
    \textrm{(iii)~MSL}_{\forall,k}(2\ell) \rightarrow \textrm{MSL}_{\exists,k-1}(\ell) \oplus \textrm{MSL}_{\exists,k-1}(\ell-1),~~~~~~~~~~~~~~\ell \geq 2 \notag \\
    \textrm{(iv)~MSL}_{\forall,k}(2\ell + 1) \rightarrow  \textrm{MSL}_{\exists,k-1}(\ell) \oplus \textrm{MSL}_{\exists,k-1}(\ell),~~~~~~~~~~~~~~\ell \geq 1 \notag
\end{eqnarray}
\end{restatable}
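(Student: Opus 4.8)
The plan is to treat all four cases through a single computation: track how one closest-to-midpoint move transforms the lengths of the two induced sub-orders, and then read off the thresholds $\ell',\ell''$ of the resulting sub-games. Write $m$ for the threshold of $\textrm{MSL}_{Q,k}(m)$, so the left side is $L_{\leq m}$ (the ``small'' side) and the right side is $L_{> m}$ (the ``big'' side). The single fact I would record first is that playing on the closest-to-midpoint of a linear order of length $t$ (the element in position $\lfloor t/2\rfloor$) splits it into a left piece $[\mathsf{min},\r]$ of length $\lfloor t/2\rfloor$ and a right piece $[\r,\mathsf{max}]$ of length $\lceil t/2\rceil$. An existential round ($Q=\exists$) has $\bS$ playing closest-to-midpoint on every board of $L_{\leq m}$ while $\bD$ answers obliviously on $L_{>m}$; a universal round swaps the two roles. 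After the move I would discard, via Observation~\ref{obs:discard}, every board with $\r$ on $\mathsf{max}$ (and, symmetrically, $\r$ on $\mathsf{min}$), leaving only boards with $\r$ strictly interior. The clean targets to aim for are the uniform threshold formulas $\bigl(\lfloor m/2\rfloor,\ \lceil m/2\rceil\bigr)$ in the existential cases and $\bigl(\lceil m/2\rceil-1,\ \lfloor m/2\rfloor\bigr)$ in the universal cases, which one checks specialize exactly to $(\ell,\ell)$, $(\ell,\ell+1)$, $(\ell-1,\ell)$, and $(\ell,\ell)$ for cases (i)--(iv) respectively.

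\textbf{Existential cases (i), (ii).} Here I would first bound $\bS$'s own side: every left board has length $t\leq m$, so its pieces satisfy $\lfloor t/2\rfloor\leq\lfloor m/2\rfloor$ and $\lceil t/2\rceil\leq\lceil m/2\rceil$, i.e.\ \emph{both} pieces are small. Then I would establish the pigeonhole on $\bD$'s side: a right board has length $m'>m$, and since $\lfloor m'/2\rfloor+\lceil m'/2\rceil=m'>m$, it cannot have both $\lfloor m'/2\rfloor\leq\lfloor m/2\rfloor$ and $\lceil m'/2\rceil\leq\lceil m/2\rceil$; hence at least one piece is big. Restricting to the sub-order $[\mathsf{min},\r]$ (with $\mathsf{min},\r$ reinterpreted as the fresh $\mathsf{min},\mathsf{max}$) identifies the corresponding sub-game with $\textrm{MSL}_{\forall,k-1}(\lfloor m/2\rfloor)$, since the left sub-boards have length $\leq\lfloor m/2\rfloor$ and the relevant right sub-boards length $>\lfloor m/2\rfloor$; the sub-order $[\r,\mathsf{max}]$ likewise yields $\textrm{MSL}_{\forall,k-1}(\lceil m/2\rceil)$. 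The next quantifier is $\bar Q=\forall$ by the alternation rule, and the round count drops from $k$ to $k-1$, giving precisely the right-hand sides of (i) and (ii).

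\textbf{Universal cases (iii), (iv).} These are dual. Now $\bS$ plays closest-to-midpoint on $L_{>m}$, so for every right board of length $m'>m$ \emph{both} its pieces are big, namely $\lfloor m'/2\rfloor\geq\lceil m/2\rceil$ and $\lceil m'/2\rceil\geq\lfloor m/2\rfloor+1$. The pigeonhole is then applied on $\bD$'s side $L_{\leq m}$: a left board of length $t\leq m$ has $\lfloor t/2\rfloor+\lceil t/2\rceil=t\leq m$, so it cannot satisfy both $\lfloor t/2\rfloor\geq\lceil m/2\rceil$ and $\lceil t/2\rceil\geq\lfloor m/2\rfloor+1$, and hence has at least one small piece. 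Restriction to $[\mathsf{min},\r]$ and $[\r,\mathsf{max}]$ as above identifies the two sub-games with $\textrm{MSL}_{\exists,k-1}(\lceil m/2\rceil-1)$ and $\textrm{MSL}_{\exists,k-1}(\lfloor m/2\rfloor)$; the asymmetry $\ell-1$ versus $\ell$ in (iii) is exactly the gap between $\lceil m'/2\rceil$ and $\lfloor m'/2\rfloor$ at odd $m'$. The side conditions ($\ell\geq 2$ in (iii), $\ell\geq 1$ elsewhere) are precisely what keep both resulting thresholds $\geq 1$ so that the sub-games are well defined.

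\textbf{Main obstacle.} The genuine subtlety I expect is that the split is \emph{not} a literal one-round partition of both sides: after $\bS$'s round-one move the pebbled substructures on $\{\mathsf{min},\r,\mathsf{max}\}$ are isomorphic $3$-chains on every board, so no matching pairs have yet been broken and only $\bD$'s side is classified (by which piece is big). I would therefore argue, exactly as in the worked $\textrm{MSL}_{\exists,4}(5)$ example, that $\bS$ completes the partition of the other side and secures condition~2 of Lemma~\ref{lem:parallelplay} during the \emph{first round of the sub-games} (round two of the original), where the order of $\b$ relative to $\r$ forces boards in $\cA_1$ and $\cB_2$ (resp.\ $\cA_2$ and $\cB_1$) into distinct isomorphism classes that can never match. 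The remaining care is bookkeeping: verifying that, as the original lengths range over all of $L_{\leq m}$ and $\r$ over all oblivious responses, the induced sub-orders realize every admissible length, so each $\cA_i,\cB_i$ is the \emph{full} instance $L_{\leq\ell'},L_{>\ell'}$ up to the degenerate length-$0$ board (where $\r=\mathsf{min}$), which is discarded. Once these points are settled, the two sub-games share the alternating pattern and may be run in parallel via Lemma~\ref{lem:genparallelplay}, completing the reduction.
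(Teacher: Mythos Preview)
Your proposal is correct and follows essentially the same route as the paper: a closest-to-midpoint first move, a pigeonhole argument showing at least one piece on $\bD$'s side violates the threshold, and completion of the partition into isomorphism classes during the first sub-game round (exactly as in the paper's worked $\textrm{MSL}_{\exists,4}(5)$ example and its proofs of cases (i) and (iii) via Figures~\ref{parallel path case1 fig}--\ref{path thm case2 fig}). Your presentation is slightly more uniform, packaging all four cases through the single pair of formulas $(\lfloor m/2\rfloor,\lceil m/2\rceil)$ and $(\lceil m/2\rceil-1,\lfloor m/2\rfloor)$, whereas the paper treats (i) and (iii) separately and leaves (ii), (iv) as analogous.

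One notational slip to fix: on $\bD$'s side the pebble $\r$ is placed obliviously, so the two piece lengths are an arbitrary pair $(a,\,m'-a)$ (respectively $(a,\,t-a)$), not $(\lfloor m'/2\rfloor,\lceil m'/2\rceil)$. Your pigeonhole sentence ``since $\lfloor m'/2\rfloor+\lceil m'/2\rceil=m'>m$, it cannot have both $\lfloor m'/2\rfloor\leq\lfloor m/2\rfloor$ and $\lceil m'/2\rceil\leq\lceil m/2\rceil$'' should read ``since $a+(m'-a)=m'>m=\lfloor m/2\rfloor+\lceil m/2\rceil$, it cannot have both $a\leq\lfloor m/2\rfloor$ and $m'-a\leq\lceil m/2\rceil$'', and analogously in the universal case. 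The argument itself is unaffected.
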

Of course, the $\mathsf{CMA}$ strategy starts out seemingly promisingly, splitting with both initial sub-games starting on the same side; we must ensure that the strategy continues to be \emph{well-defined}, i.e., this continues throughout the recursion stack, especially since the sub-games can have different lengths. We show this in Lemma \ref{lem:cma-is-well-defined}, whose proof is in Appendix \ref{apx:linearorders}.

\begin{restatable}{lemma}{cmawelldefined}\label{lem:cma-is-well-defined}
The $\mathsf{CMA}$ strategy is well-specified. Moreover, for $k \geq 3$, if $\textrm{MSL}_{Q,k}(\ell) \rightarrow  \textrm{MSL}_{\bar{Q},k-1}(\ell_1) \oplus \textrm{MSL}_{\bar{Q},k-1}(\ell_2)$ with $\ell_1 \geq \ell_2$, then the pattern of $\bS$'s winning strategy for $\textrm{MSL}_{Q,k}(\ell)$ is $Q$ concatenated with the pattern for the winning strategy for $\textrm{MSL}_{\bar{Q},k-1}(\ell_1)$.
\end{restatable}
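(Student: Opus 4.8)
The plan is to prove both assertions simultaneously by strong induction on the number of rounds $k$, taking the four irreducible games of Proposition~\ref{prop:irreduciblepatterns} as the base cases. For the base cases there is no splitting, so the strategy is trivially well-specified and the stated patterns are exactly those asserted there. The inductive content is precisely the ``moreover'' clause, and I would prove it alongside an auxiliary monotonicity invariant, in the following dependency order at each level: first establish the recursive pattern formula at level $k$ (which requires only monotonicity at level $k-1$), and then establish monotonicity at level $k$ (which uses the just-proved formula at level $k$ together with monotonicity at level $k-1$); this avoids any circularity.

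For the inductive step, fix $\textrm{MSL}_{Q,k}(\ell)$ with $k \geq 3$ and apply Claim~\ref{claim:splitrules} to obtain the split $\textrm{MSL}_{Q,k}(\ell) \rightarrow \textrm{MSL}_{\bar Q,k-1}(\ell_1) \oplus \textrm{MSL}_{\bar Q,k-1}(\ell_2)$ with $\ell_1 \geq \ell_2$. After $\bS$'s round-$1$ move on the closest-to-midpoints and $\bD$'s oblivious response, the analysis in Section~\ref{sec:cma} (together with Observation~\ref{obs:discard}) shows that the surviving boards partition into two groups lying in distinct isomorphism classes, so condition~2 of Lemma~\ref{lem:parallelplay} (no cross matching pair between $\cA_1,\cB_2$ or $\cA_2,\cB_1$) holds automatically and persists. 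By the induction hypothesis, each sub-game $\textrm{MSL}_{\bar Q,k-1}(\ell_i)$ is well-specified and winnable, with a pattern $P(\bar Q,k-1,\ell_i)$. The one remaining ingredient is that these two sub-patterns are \emph{compatible}, i.e., playable in parallel; granting this, the Generalized Parallel Play Lemma (Lemma~\ref{lem:genparallelplay}) with common pattern $P := Q\cdot P(\bar Q,k-1,\ell_1)$ yields a winning strategy for $\textrm{MSL}_{Q,k}(\ell)$ with exactly that pattern, which is the assertion.

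The crux is therefore the \textbf{monotonicity invariant}: for fixed $Q$ and $k$ and any two valid lengths $\ell_2 \leq \ell_1$, the pattern $P(Q,k,\ell_2)$ is a subsequence of $P(Q,k,\ell_1)$. This follows cleanly from the explicit form of the splitting rules. Reading off Claim~\ref{claim:splitrules}, the length of the \emph{longer} branch is $h_Q(\ell)$, where $h_\exists(\ell)=\lceil \ell/2\rceil$ and $h_\forall(\ell)=\lfloor \ell/2\rfloor$; both maps are non-decreasing and send consecutive integers to equal-or-consecutive integers. Since $P(Q,k,\ell)=Q\cdot P(\bar Q,k-1,h_Q(\ell))$ by the already-proved formula at level $k$, monotonicity of $h_Q$ gives $h_Q(\ell_2)\leq h_Q(\ell_1)$, and the induction hypothesis at level $k-1$ makes $P(\bar Q,k-1,h_Q(\ell_2))$ a subsequence of $P(\bar Q,k-1,h_Q(\ell_1))$; prepending the common letter $Q$ preserves the subsequence relation. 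Applied at the split with the same $\bar Q,k-1$ and $\ell_1\geq\ell_2$, this gives that $P(\bar Q,k-1,\ell_2)$ is a subsequence of $P(\bar Q,k-1,\ell_1)$, hence of $P$, which is exactly the compatibility used in the previous paragraph.

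The main obstacle I anticipate is the base-level bookkeeping rather than the recursion itself. First, one must verify that the descent always terminates at one of the four \emph{winnable} irreducibles and never reaches the excluded game $\textrm{MSL}_{\exists,1}(1)$; this amounts to checking, using the length constraints $\ell\geq 1$ or $\ell\geq 2$ attached to the rules of Claim~\ref{claim:splitrules}, that every length produced for a given $(Q,k)$ stays within the $\mathsf{CMA}$-winnable range. Second, the subsequence relation in the monotonicity invariant must be checked directly at the boundary, where two consecutive lengths can bottom out at \emph{different} irreducibles whose patterns (for instance $(\forall,\forall)$ and $(\forall,\exists,\forall)$) are not identical; one must confirm that the shorter-length pattern still embeds as a subsequence of the longer one there. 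Both are finite, case-based checks, but they are exactly the places where the near-alternating pattern of $\mathsf{CMA}$ could in principle break, so they require the most care.
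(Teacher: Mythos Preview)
Your proposal is correct but takes a genuinely different route from the paper's proof. The paper inducts on the length $\ell$ and carries a sharper structural invariant through the induction: that each sub-game pattern alternates except possibly for a trailing $\forall\forall$, and that the two sub-game patterns differ in length by at most one. The merge is then handled through four parity/quantifier cases, invoking a bespoke claim that shows how to splice together a strategy ending in $(\ldots,\forall,\exists,\forall)$ with one ending in $(\ldots,\forall,\forall)$ by inserting a dummy existential move into the latter. You instead induct on $k$, keep only the weaker subsequence-monotonicity invariant in $\ell$, and delegate all merging to Lemma~\ref{lem:genparallelplay}. This is more streamlined: the recursion $P(Q,\ell)=Q\cdot P(\bar Q,h_Q(\ell))$ together with monotonicity of $h_Q$ does essentially all the work, with no case split on parity. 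What you give up is the explicit near-alternation description of the $\mathsf{CMA}$ pattern that falls out of the paper's argument; since Theorems~\ref{thm:alternation1} and~\ref{thm:alternation2} are proved independently anyway, that is not a cost for the lemma itself. Your identification of the boundary checks---irreducibles against the first non-irreducibles, in particular verifying that $(\forall,\forall)$ embeds in $(\forall,\exists,\forall)$---is exactly right, and those finitely many checks are the only places your subsequence relation does not follow from the recursion. One cosmetic remark: your notation $P(Q,k,\ell)$ suggests a dependence on $k$ that is not really there (the $\mathsf{CMA}$ pattern depends only on $(Q,\ell)$); your induction on $k$ is effectively an induction on the recursion depth, which is fine, but you may want to say so explicitly.
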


\subsection{Bounding and Characterizing the Pattern}\label{sec:boundingqn}

Define $q^*_\exists(\ell)$ (resp.~$q^*_\forall(\ell)$) as the minimum $r \in \mathbb{N}$ such that $\bS$ wins the game $\textrm{MSL}_{\exists,r}(\ell)$ (resp.~$\textrm{MSL}_{\forall,r}(\ell)$) using the $\mathsf{CMA}$ strategy. Of course, we must have $q_\exists(\ell) \leq q^*_\exists(\ell)$ and $q_\forall(\ell) \leq q^*_\forall(\ell)$. Let $q^*(\ell) = \min(q^*_\exists(\ell), q^*_\forall(\ell))$. The following lemma (whose proof is omitted) follows from the complete description of the strategy from Section \ref{sec:cmaformal}.

\begin{lemma}\label{lem:q*-vals}
We have $q^*_\forall(1) = 1$, $q^*_\exists(1) = 2$, and $q^*_\forall(2) = 2$. Also:
\begin{align*}
        &q^*_\exists(2\ell) = q^*_\forall(\ell) + 1~~~~\textrm{for }\ell \geq 1, 
        &&&q^*_\exists(2\ell+1) = q^*_\forall(\ell+1) + 1~~~~\textrm{for }\ell \geq 1, \\
        &q^*_\forall(2\ell) = q^*_\exists(\ell) + 1~~~~\textrm{for }\ell \geq 2,
        &&&q^*_\forall(2\ell+1) = q^*_\exists(\ell) + 1~~~~~~~~~\textrm{for }\ell \geq 1.
\end{align*}
\end{lemma}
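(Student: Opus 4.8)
The plan is to derive everything from the splitting rules (Claim~\ref{claim:splitrules}) together with Lemma~\ref{lem:cma-is-well-defined}, which between them reduce $q^*$ on an instance to $q^*$ on its two $\mathsf{CMA}$ sub-games. First I would dispatch the base cases. The values $q^*_\forall(1)=1$, $q^*_\exists(1)=2$, and $q^*_\forall(2)=2$ are read off directly from the irreducible games of Section~\ref{sec:cmaformal}: games 1--3 exhibit winning $\mathsf{CMA}$ patterns of the stated lengths, while minimality follows because $\textrm{MSL}_{\exists,1}(1)$ is explicitly not winnable (forcing $q^*_\exists(1)\geq 2$) and, by Theorem~\ref{rosey-fact}, $r(2)=2$ lower-bounds any separation of $L_{\leq 2}$ from $L_{>2}$ (forcing $q^*_\forall(2)\geq 2$, since $\textrm{QR}\leq\textrm{QN}$ and $\mathsf{CMA}$ produces a genuine separating formula). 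These are precisely the arguments on which no splitting rule applies; in particular rule (iii) requires $\ell\geq 2$, i.e.\ argument $\geq 4$, so $q^*_\forall(2)$ must be handled by hand.

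For the inductive step, the key observation is that the $\mathsf{CMA}$ strategy on $\textrm{MSL}_{Q,k}(\ell)$ spends one round of type $Q$ and then plays the two sub-games $\textrm{MSL}_{\bar Q,k-1}(\ell_1)$ and $\textrm{MSL}_{\bar Q,k-1}(\ell_2)$ in parallel via Lemma~\ref{lem:parallelplay}. Hence the combined game is won in $k$ rounds exactly when both sub-games are won in $k-1$ rounds, so $q^*_Q(\ell) = 1 + \max\bigl(q^*_{\bar Q}(\ell_1),\, q^*_{\bar Q}(\ell_2)\bigr)$. Plugging in the four splits of Claim~\ref{claim:splitrules} then yields the four recurrences: splits (i) and (iv) have $\ell_1=\ell_2$, giving $q^*_\exists(2\ell)=q^*_\forall(\ell)+1$ and $q^*_\forall(2\ell+1)=q^*_\exists(\ell)+1$ with no further work; splits (ii) and (iii) have unequal lengths, and Lemma~\ref{lem:cma-is-well-defined} tells us the governing pattern is that of the \emph{longer} sub-game ($\ell+1$ in (ii), $\ell$ in (iii)), giving $q^*_\exists(2\ell+1)=q^*_\forall(\ell+1)+1$ and $q^*_\forall(2\ell)=q^*_\exists(\ell)+1$.

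The one place that needs care---and the step I expect to be the main obstacle---is justifying that in the unequal splits (ii) and (iii) the longer sub-game really is the binding constraint, i.e.\ that $q^*_\forall(\ell)\leq q^*_\forall(\ell+1)$ and $q^*_\exists(\ell-1)\leq q^*_\exists(\ell)$. This is an instance of monotonicity of $q^*_\exists$ and $q^*_\forall$, which is not logically prior to the recurrences. I would resolve the apparent circularity by proving monotonicity and the recurrences \emph{together} by a single strong induction on $\ell$: assuming both statements for all smaller arguments, the recurrences express $q^*_Q(\ell)$ in terms of $q^*_{\bar Q}$ at arguments roughly $\ell/2<\ell$, where monotonicity is already available (or, for the smallest arguments, read off from the base values), so the maximum is attained at the longer sub-game and the recurrence holds; monotonicity at $\ell$ then follows by comparing the resulting closed forms for consecutive arguments, e.g.\ $q^*_\exists(2m)=1+q^*_\forall(m)$ and $q^*_\exists(2m+1)=1+q^*_\forall(m+1)\geq 1+q^*_\forall(m)$. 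The winnability of the \emph{smaller} sub-game within the same number of rounds, needed to legitimately invoke parallel play, is exactly this monotonicity, so once it is in hand the application of Lemma~\ref{lem:parallelplay} is unconditional and the proof closes.
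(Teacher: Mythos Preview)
Your proposal is correct and follows the route the paper intends: the paper omits the proof entirely, merely stating that the lemma ``follows from the complete description of the strategy from Section~\ref{sec:cmaformal},'' and what you have written is exactly a careful unpacking of that description via Claim~\ref{claim:splitrules} and Lemma~\ref{lem:cma-is-well-defined}. One small remark: your joint induction handling monotonicity is sound but slightly more work than necessary, since Lemma~\ref{lem:cma-is-well-defined} already guarantees (in its proof in the appendix) that the pattern of the full game is $Q$ concatenated with the pattern of the \emph{larger} sub-game, which is precisely the statement that $\max\bigl(q^*_{\bar Q}(\ell_1), q^*_{\bar Q}(\ell_2)\bigr) = q^*_{\bar Q}(\max(\ell_1,\ell_2))$ in each split---so you may simply cite that lemma directly rather than re-deriving monotonicity alongside the recurrences.
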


From Lemma \ref{lem:q*-vals} it is possible to recursively compute $q^*_\forall(\ell)$ and $q^*_\exists(\ell)$, and therefore $q^*(\ell)$ for all values of $\ell \geq 1$. These values are provided for $\ell \leq 127$ in Table \ref{q*-table}.

\begin{table}[ht] 
\begin{center}
\begin{tabular}{|c|c|c|c|c|}\hline 
$\ell$ & $q^*_\forall(\ell)$ & $q^*_\exists(\ell)$ & $q^*(\ell)$ & $r(\ell)$ \\ \hline \hline
  1 & 1 & 2& 1 & 1\\ \hline
  2 & 2 & 2 & 2 & 2 \\ \hline
  3  & 3 & 3 & 3 & 2\\ \hline
  4 & 3 & 3& 3 & 3\\ \hline
  5 & 3 & 4& 3& 3\\ \hline
  6-7 & 4 & 4 & 4 & 3\\ \hline
  8-9 & 4 & 4 & 4 & 4\\ \hline
  10 & 5 & 4 & 4 & 4\\ \hline
  11-15 & 5 & 5 & 5 & 4\\ \hline
  16-18 & 5 & 5 & 5 & 5\\ \hline
  19-21 & 5 & 6 & 5 & 5\\ \hline
  22-31 & 6 & 6 & 6 & 5\\ \hline
  32-37 & 6 & 6 & 6 & 6\\ \hline
  38-42 & 7 & 6 & 6 & 6\\ \hline
  43-63 & 7 & 7 & 7 & 6\\ \hline
  64-75 & 7 & 7 & 7 & 7\\ \hline
  76-85 & 7 & 8 & 7 & 7\\ \hline
  86-127 & 8 & 8 & 8 & 7\\ \hline
\end{tabular}
\caption{Values of $q^*_\forall(\ell), q^*_\exists(\ell), q^*(\ell)$ and $r(\ell)$ for $1\leq \ell \leq 127.$} 
\label{q*-table}
\end{center}
\end{table}

We now state and prove the main result of this section.

\begin{restatable}{theorem}{mainlos} \label{thm:q_vs_r}
For all $\ell \geq 1$, we have:
\begin{equation*}
    r(\ell) \leq q(\ell) \leq r(\ell) + 1.
\end{equation*}
\end{restatable}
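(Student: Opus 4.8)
\subsection*{Proof plan}

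The lower bound $r(\ell) \le q(\ell)$ is immediate and already observed in the text: a separating sentence with $q(\ell)$ quantifiers has quantifier rank at most $q(\ell)$, so $r(\ell) \le q(\ell)$. All the work lies in the upper bound $q(\ell) \le r(\ell)+1$. Since $q(\ell) \le q^*(\ell) = \min(q^*_\exists(\ell),q^*_\forall(\ell))$ by definition of the $\mathsf{CMA}$ quantities, it suffices to prove
\[
q^*_\exists(\ell) \le r(\ell)+1 \quad\text{and}\quad q^*_\forall(\ell) \le r(\ell)+1 \qquad(\ell \ge 1)
\]
and then take minima. The plan is a strong induction on $\ell$, feeding the recurrences of Lemma~\ref{lem:q*-vals} into the closed form $r(\ell) = 1 + \lfloor \log \ell\rfloor$ of Theorem~\ref{rosey-fact}. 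The single arithmetic fact that drives everything is that doubling raises $r$ by exactly one, namely $r(2\ell) = r(2\ell+1) = r(\ell)+1$ for all $\ell \ge 1$ (the second equality because $2\ell+1$ lies strictly between $2^{k+1}$ and $2^{k+2}$ whenever $2^k \le \ell < 2^{k+1}$).

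With this in hand, three of the four recurrences close in one line. For instance $q^*_\exists(2\ell) = q^*_\forall(\ell)+1 \le (r(\ell)+1)+1 = r(2\ell)+1$ by the induction hypothesis, and the identical computation handles $q^*_\forall(2\ell)=q^*_\exists(\ell)+1$ and $q^*_\forall(2\ell+1)=q^*_\exists(\ell)+1$, since in each of these the argument on the right is exactly $\ell$ and $r$ grows by only one under doubling.

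The one genuinely delicate case, and the main obstacle, is the recurrence $q^*_\exists(2\ell+1) = q^*_\forall(\ell+1)+1$, because here the argument is \emph{incremented} to $\ell+1$ before the bound is applied. When $\ell+1$ is not a power of two we have $r(\ell+1)=r(\ell)$ and the naive estimate $q^*_\forall(\ell+1)\le r(\ell+1)+1 = r(\ell)+1$ is enough. But when $\ell+1 = 2^k$, the extra level in $r(\ell+1) = r(\ell)+1$ would cost an additional quantifier and break the bound. The fix is to strengthen the induction hypothesis with the auxiliary claim that $r$ is attained \emph{exactly} (with no ``$+1$'') at powers of two: for every $k \ge 1$,
\[
q^*_\exists(2^k) = q^*_\forall(2^k) = r(2^k).
\]
This tighter equality is itself preserved by the recurrences --- e.g.\ $q^*_\forall(2^k) = q^*_\exists(2^{k-1})+1 = r(2^{k-1})+1 = r(2^k)$ for $k \ge 2$ --- and it is precisely what rescues the troublesome step: when $\ell+1 = 2^k$ we get $q^*_\exists(2\ell+1) = q^*_\forall(2^k)+1 = r(2^k)+1 = r(2\ell+1)+1$, landing on the boundary rather than overshooting.

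Finally I would discharge the small base cases directly from the stated values $q^*_\forall(1)=1$, $q^*_\exists(1)=2$, $q^*_\forall(2)=2$ (with $q^*_\exists(2)=q^*_\forall(1)+1=2=r(2)$), verifying at each that both the upper bounds and the power-of-two equalities hold; here it is essential to note that $q^*_\forall(2)$ must be treated as a base case, since the recurrence $q^*_\forall(2\ell)=q^*_\exists(\ell)+1$ of Lemma~\ref{lem:q*-vals} requires $\ell \ge 2$. Checking that every recurrence invoked respects its side condition ($\ell \ge 1$ or $\ell \ge 2$), the induction closes, and taking minima gives $q^*(\ell) \le r(\ell)+1$, hence $q(\ell)\le r(\ell)+1$, which together with the lower bound completes the proof.
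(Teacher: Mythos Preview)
Your proof is correct and follows essentially the same approach as the paper: both pivot on the equality $q^*_\exists(2^k)=q^*_\forall(2^k)=r(2^k)=k+1$ at powers of two, combined with the recurrences of Lemma~\ref{lem:q*-vals}. The paper establishes the power-of-two equality separately (Proposition~\ref{prop:q*-powers-of-two}) and then concludes via a short monotonicity argument (all three functions agree at $2^k$, are monotone, and step by one between successive powers, hence differ by at most one); you instead fold the power-of-two equality into a strengthened induction hypothesis and discharge each recurrence case directly, without ever invoking monotonicity. Your handling of the delicate case $q^*_\exists(2\ell+1)=q^*_\forall(\ell+1)+1$ when $\ell+1$ is a power of two is exactly right, and your care about the side condition $\ell\ge 2$ in the $q^*_\forall(2\ell)$ recurrence is well placed.
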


\begin{proof}
The first inequality, $r(\ell) \leq q(\ell)$, is obvious. For the second, we will show that $q^*_\exists(\ell)$ and $q^*_\forall(\ell)$ are both bounded above by $r(\ell) + 1$ (and since $q(\ell) \leq q^*(\ell) = \min(q^*_\exists(\ell), q^*_\forall(\ell))$, so too for $q(\ell)$). Lemma \ref{lem:q*-vals} shows that the assertion is true for $\ell \leq 2$. Now, it can be shown recursively (see, e.g., Proposition \ref{prop:q*-powers-of-two} in Appendix \ref{apx:linearorders}) that $q^*_\forall(2^k) = q^*_\exists(2^k) = k+1$ for $k \geq 1$. By Theorem \ref{rosey-fact}, we also know that $r(2^k) = k+1$ for $k \geq 1$. So the three functions, $r(\cdot)$, $q^*_\forall(\cdot)$, and $q^*_\exists(\cdot)$, all equal each other at successive powers of two, and increase by one between these successive powers. Since all three functions are monotonic, they differ from one another by at most one. Therefore, we have $q^*_\exists(\ell) \leq r(\ell) + 1$ and $q^*_\forall(\ell) \leq r(\ell) + 1$.
\end{proof}

We wrap up this section with two results that will be useful in Section \ref{sec:strings}. For their proofs, please see Appendix \ref{apx:linearorders}.


\begin{restatable}[Alternation Theorem, Smaller vs.~Larger]{theorem}{alternationone}\label{thm:alternation1} 
For every $\ell \geq 1$, there is a separating sentence $\sigma_\ell$ for $(L_{\leq \ell}, L_{> \ell})$ with $q^*(\ell)$ quantifiers (and so at most $\log(\ell) + 2$ quantifiers), such that the quantifier prefix of $\sigma_\ell$ strictly alternates and ends with a $\forall$. 
\end{restatable}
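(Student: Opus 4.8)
The plan is to realize $\sigma_\ell$ as the separating sentence read off a $\mathsf{CMA}$ winning strategy. Let $Q_0=Q_0(\ell)$ be $\exists$ if $q^*(\ell)$ is even and $\forall$ if $q^*(\ell)$ is odd; this is exactly the leading quantifier of the unique strictly-alternating, $\forall$-ending prefix of length $q^*(\ell)$. By Lemma \ref{lem:pattern} it suffices to show that $\bS$ wins $\textrm{MSL}_{Q_0,q^*(\ell)}(\ell)$ via a $\mathsf{CMA}$ strategy whose pattern strictly alternates (it will automatically end in $\forall$). The quantifier bound is then immediate: the proof of Theorem \ref{thm:q_vs_r} gives $q^*(\ell)\le r(\ell)+1$, and Theorem \ref{rosey-fact} gives $r(\ell)=1+\lfloor\log(\ell)\rfloor\le\log(\ell)+1$.

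First I would record how the pattern depends on the recursion. By Lemma \ref{lem:cma-is-well-defined} the pattern of $\textrm{MSL}_{Q,k}(\ell)$ is $Q$ concatenated with the pattern of its larger sub-game, so the whole pattern is read along a single ``spine'' of the recursion tree, terminating at one of the four irreducible games of Proposition \ref{prop:irreduciblepatterns}. Their patterns are $(\forall)$, $(\exists,\forall)$, $(\forall,\forall)$, and $(\forall,\exists,\forall)$: every one ends in $\forall$, so every $\mathsf{CMA}$ pattern ends in $\forall$; and every one strictly alternates except $(\forall,\forall)$, the pattern of the single irreducible game $\textrm{MSL}_{\forall,2}(2)$. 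Since each concatenation appends a child pattern (beginning with $\bar Q$) to $Q$, the full pattern strictly alternates if and only if the spine does not terminate at $\textrm{MSL}_{\forall,2}(2)$. Avoiding this one base case is the whole difficulty.

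The avoidance follows from a parity observation that needs no induction. Along any spine terminating at $\textrm{MSL}_{\forall,2}(2)$ --- a universal node at budget $2$ --- the quantifier alternates as the budget increases, so the node at budget $j$ is universal exactly when $j$ is even; hence the top of any such spine, sitting at budget $k$, has leading quantifier $\forall$ iff $k$ is even. But our candidate game sits at budget $k=q^*(\ell)$ with leading quantifier $Q_0=\exists$ iff $q^*(\ell)$ is even --- the \emph{opposite} convention. Therefore $\textrm{MSL}_{Q_0,q^*(\ell)}(\ell)$ can never be the top of a spine ending at $\textrm{MSL}_{\forall,2}(2)$, so its spine terminates at one of the alternating base cases and its pattern strictly alternates, as required. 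The one remaining ingredient is that $Q_0$ actually attains the minimum, i.e.\ $q^*_{Q_0}(\ell)=q^*(\ell)$, which makes $\textrm{MSL}_{Q_0,q^*(\ell)}(\ell)$ a genuine $\mathsf{CMA}$ win; I would establish this by strong induction on $\ell$ from the recurrences of Lemma \ref{lem:q*-vals}.

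The step I expect to be the main obstacle is precisely this last fact, that the parity-determined quantifier attains $q^*(\ell)$. For even arguments it is clean: the recurrences give $q^*(2m)=q^*(m)+1$, and the inductive hypothesis immediately shows that whichever of $\exists,\forall$ has the parity matching $q^*(2m)$ realizes the minimum. For odd arguments $2m+1$ the relevant quantity is $\min\!\big(q^*_\forall(m+1),q^*_\exists(m)\big)+1$, which mixes the two neighbours $m$ and $m+1$ and forces a short case analysis --- comparing $q^*_\forall(m+1)$ with $q^*_\exists(m)$ and tracking their parities via the inductive hypothesis --- to confirm both that the minimum is attained on the correct side and that its parity is as claimed. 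Once this bookkeeping is in place, Lemma \ref{lem:pattern} converts the alternating $\mathsf{CMA}$ strategy for $\textrm{MSL}_{Q_0,q^*(\ell)}(\ell)$ into the desired sentence $\sigma_\ell$, completing the proof.
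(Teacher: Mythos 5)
Your architecture is genuinely different from the paper's. The paper proves the theorem by induction on $\ell$, building $\sigma_\ell$ from $\sigma_{\lfloor \ell/2\rfloor}$ or $\sigma_{\lceil \ell/2\rceil}$ by prepending one quantifier, with a case analysis on the sign of $q^*_\exists(\ell)-q^*_\forall(\ell)$. You instead isolate a single numerical claim --- that the quantifier $Q_0$ whose parity matches $q^*(\ell)$ satisfies $q^*_{Q_0}(\ell)=q^*(\ell)$ --- and get alternation in one stroke from the spine observation: by Lemma \ref{lem:cma-is-well-defined} the pattern is read down a single branch of the recursion that alternates at every internal step, so it fails to alternate only if it bottoms out at $\textrm{MSL}_{\forall,2}(2)$, and your parity count correctly shows that any spine ending there has top quantifier $\forall$ iff its top budget is even, which is incompatible with the choice of $Q_0$. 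That half is sound, and it is a clean repackaging of where the difficulty really lives.

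The gap is in the numerical claim, precisely at the step you flagged as a ``short case analysis.'' For $m=2\ell+1$ with $q^*(m)$ even you must rule out $q^*_\forall(\ell+1)>q^*_\exists(\ell)$. Your inductive hypothesis (``the parity-matching quantifier attains the minimum'') disposes of the sub-case $q^*_\exists(\ell)<q^*_\forall(\ell)$, but when $q^*_\exists(\ell)=q^*_\forall(\ell)$ it carries no information --- both quantifiers attain $q^*(\ell)$, so the hypothesis is vacuously satisfied --- and the bad configuration ``$q^*_\exists(\ell)=q^*_\forall(\ell)$ odd while \emph{both} $q^*_\exists$ and $q^*_\forall$ increase from $\ell$ to $\ell+1$'' is fully consistent with your hypothesis applied at $\ell$ and at $\ell+1$ (the latter would then just assert $q^*_\exists(\ell+1)=q^*(\ell+1)$, which holds in that configuration). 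Yet in this configuration $q^*_\exists(m)=q^*_\forall(\ell+1)+1>q^*(m)$, falsifying your claim. Excluding it requires the finer mod-$4$ periodicity of Claim \ref{claim:q*inductionseparate}: $q^*_\forall$ can only increase at $\ell\to\ell+1$ when $\ell\equiv 1\pmod 4$ and $q^*_\exists$ only when $\ell\equiv 2\pmod 4$, so the two functions never jump simultaneously (for $\ell\geq 5$). This is exactly the ingredient the paper's own proof of the odd case reaches for (``implying by Claim \ref{claim:q*inductionseparate} that $\ell'\equiv 1\pmod 4$\,\dots''), and Lemma \ref{lem:q*-vals} plus parity bookkeeping alone will not substitute for it. With that claim added, your argument closes and yields a proof that is arguably more modular than the paper's; without it, the induction does not go through.
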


\begin{restatable}[Alternation Theorem, One vs.~All]{theorem}{alternationtwo}\label{thm:alternation2} 
For every $\ell \geq 1$, there is a sentence $\varphi_\ell$ separating $L_\ell$ from all other linear orders having an alternating quantifier prefix (ending with a $\forall$) and consisting of $q^*(\ell)+2$ quantifiers (and so at most $\log(\ell) + 4$ quantifiers).
\end{restatable}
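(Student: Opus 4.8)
The plan is to build $\varphi_\ell$ as the conjunction of two sentences supplied by Theorem~\ref{thm:alternation1}. For $\ell \geq 2$, set $\varphi_\ell := \sigma_\ell \wedge \lnot\sigma_{\ell-1}$. Since $\sigma_\ell$ is true exactly on $L_{\leq\ell}$ and $\sigma_{\ell-1}$ is true exactly on $L_{\leq\ell-1}$, the conjunct $\lnot\sigma_{\ell-1}$ is true exactly on $L_{>\ell-1}$, i.e.\ on the linear orders of length at least $\ell$. Thus $\varphi_\ell$ holds precisely on the orders that are simultaneously of length at most $\ell$ and at least $\ell$, which is $L_\ell$ alone, so $\varphi_\ell$ separates $L_\ell$ from all other linear orders. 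The case $\ell = 1$ is degenerate: there are no shorter orders, so $\varphi_1 := \sigma_1$ already works and can be padded with dummy quantifiers to the claimed length. Correctness is therefore immediate, and the entire content of the theorem lies in the quantifier count.

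Written naively, $\varphi_\ell$ has $q^*(\ell) + q^*(\ell-1) \approx 2\log\ell$ quantifiers, roughly twice the target. The plan to avoid this is to merge the two conjuncts into a single quantifier prefix using parallel play, exactly the situation described after Lemma~\ref{lem:parallelplay}: a conjunction of two formulas sharing a quantifier prefix can be written with that one prefix. Concretely, view the count as a game on $(\{L_\ell\},\, L_{>\ell} \sqcup L_{\leq\ell-1})$ and partition the right side into $\cB_1 = L_{>\ell}$ and $\cB_2 = L_{\leq\ell-1}$. By Theorem~\ref{thm:alternation1}, $\bS$ wins $(\{L_\ell\}, \cB_1)$ with a strategy $\cS_1$ whose alternating pattern $P_1$ ends in $\forall$ and has length $q^*(\ell)$ (this is $\sigma_\ell$), and he wins $(\{L_\ell\}, \cB_2)$ with a strategy $\cS_2$ whose alternating pattern $P_2$ ends in $\exists$ and has length $q^*(\ell-1)$ (this is $\lnot\sigma_{\ell-1}$). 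The decisive observation is that both underlying $\mathsf{CMA}$ strategies make the \emph{same} closest-to-midpoint moves on $L_\ell$: $\cS_1$ plays these moves to certify that the halves of $L_\ell$ are short enough, and $\cS_2$ plays the very same midpoints (as the images of the universal moves of $\sigma_{\ell-1}$ on the longer side, computed on the board $L_\ell$) to certify that they are long enough. Hence the two strategies never disagree about what to play on $\{L_\ell\}$; they differ only in their moves on $\cB_1$ versus $\cB_2$, which are disjoint.

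Because the moves on the shared left structure coincide, $\bS$ can run both strategies simultaneously: on an existential round he makes the common midpoint move (played on $\{L_\ell\}$ and on all of $\bD$'s copies, which become partitioned by isomorphism class exactly as in the $\mathsf{CMA}$ worked example of Section~\ref{sec:cma}), and on a universal round he plays $\cS_1$ on $\cB_1$ and $\cS_2$ on $\cB_2$; Observation~\ref{obs:discard} prunes any copy on which $\bD$ leaves the relevant subinterval. It remains to fuse $P_1$ and $P_2$ into one pattern $P$ that strictly alternates and ends in $\forall$. Since one pattern ends in $\forall$ and the other in $\exists$, and since $q^*(\ell-1) \leq q^*(\ell) \leq q^*(\ell-1)+1$, two dummy quantifiers suffice to align both patterns (and their existential rounds) inside a common alternating $P$ of length $q^*(\ell) + 2$ ending in $\forall$. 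Lemma~\ref{lem:pattern} then converts this winning pattern into a separating sentence with $q^*(\ell)+2$ quantifiers and the desired alternating prefix, and Theorems~\ref{thm:q_vs_r} and~\ref{rosey-fact} give $q^*(\ell) + 2 \leq (r(\ell)+1) + 2 = \lfloor\log\ell\rfloor + 4 \leq \log\ell + 4$.

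The main obstacle is this final fusion step, together with the claim that the two $\mathsf{CMA}$ recursions really do select identical midpoints on $L_\ell$ at aligned rounds and that the resulting combined play is valid on the shared (singleton-turned-copied) left side. Making this precise requires an induction that runs the recursions for threshold $\ell$ and threshold $\ell-1$ in lockstep, using the splitting rules of Claim~\ref{claim:splitrules} and the well-definedness of $\mathsf{CMA}$ from Lemma~\ref{lem:cma-is-well-defined} to verify, level by level, that the midpoint moves on $L_\ell$ agree, that no copy of $L_\ell$ ever matches a board on the wrong side (the cross-conditions of parallel play), and that the mismatch in the final quantifier combined with the at-most-one gap between $q^*(\ell)$ and $q^*(\ell-1)$ never forces more than two extra quantifiers. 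I expect this bookkeeping, rather than any conceptual difficulty, to be the crux of the formal argument.
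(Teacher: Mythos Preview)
Your decomposition—take $\varphi_\ell = \sigma_\ell \wedge \lnot\sigma_{\ell-1}$ and merge the two conjuncts via parallel play—is exactly the paper's plan. The gap is in how the merge is carried out.

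You propose to align the alternating patterns $P_1$ and $P_2$ inside a common $P$ and exploit that both strategies ``make the same closest-to-midpoint moves on $L_\ell$''. But this claim does not survive the interleaving. Because $P_1$ ends in $\forall$ and $P_2$ ends in $\exists$, aligning them inside any alternating $P$ forces an offset: when $\cS_1$ is at its $k$th move, $\cS_2$ is at its $(k\pm 1)$th. Concretely, after a $\forall$ round of $\cS_1$, $\bD$ has already populated the left side with copies of $L_\ell$ carrying a pebble $\r$; the next $\cS_1$-move on such a copy is the midpoint of a \emph{sub}-interval determined by $\r$, whereas the aligned $\cS_2$-move (its first) is the midpoint of the \emph{whole} order. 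These are different elements, so $\bS$ has no single existential move serving both sub-games. More structurally, Lemmas~\ref{lem:parallelplay} and~\ref{lem:genparallelplay} require a partition of $\cA$ as well as $\cB$, and your singleton $\cA=\{L_\ell\}$ admits none at the outset; you are tacitly invoking an unstated ``shared-left'' variant of parallel play whose hypothesis (agreement on the left throughout) you have not established and which, as above, appears to fail.

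The paper sidesteps all of this with one idea you are missing: spend a universal round to \emph{tag} the two halves of $\cB$ with distinct isomorphism classes. On every board in one half, $\bS$ plays the pebble on $\mathsf{max}$; on every board in the other half he plays the first $\mathsf{CMA}$ move of whichever $\cS_i$ begins with $\forall$, which by design is never on $\mathsf{max}$. After $\bD$'s oblivious response, the copies of $L_\ell$ split cleanly into $\cA_1$ (pebble $\neq\mathsf{max}$) and $\cA_2$ (pebble $=\mathsf{max}$), and now Lemma~\ref{lem:genparallelplay} applies directly with the cross-conditions guaranteed by the tag. Since that first universal move simultaneously serves as the first move of one sub-game, the overhead is at most two quantifiers, giving $q^*(\ell)+2$.
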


\section{Strings}\label{sec:strings}

In this section, we pursue our main objective: string separation results, in order to characterize the complexity of Boolean functions. We would like to bound the number of quantifiers required for these separations as a function of both the length $n$ of the strings, as well as the sizes of the sets.

In general, we would like to separate a set of $n$-bit strings from the set of all other $n$-bit strings; recall from Section \ref{sec:intro} that we can think of this as separating the $1$ instances from the $0$ instances for a Boolean function on $n$-bit inputs. To do so, we first need to develop a basic technique for \emph{distinguishing} one string from another.

\begin{restatable}[One vs.~One]{proposition}{onevsonen}\label{prop:one-vs-one-n}
    \textbf{Upper Bound:} For every pair $w, w'$ of $n$-bit strings such that $w \neq w'$, there is a sentence $\varphi_{w, w'}$ with $\log(n) + 6$ quantifiers separating $(\{w\}, \{w'\})$. This sentence $\varphi_{w, w'}$ (in prenex form) has an alternating quantifier prefix ending with $\forall$.\\
    \textbf{Lower Bound:} For all sufficiently large $n$, there exist two $n$-bit strings $w, w'$, such that separating them requires $\lfloor\log(n)\rfloor$ quantifiers.
\end{restatable}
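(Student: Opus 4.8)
For the upper bound, the plan is to reduce distinguishing the two strings to pinning down a single position using the linear-order machinery of Section~\ref{sec:linearorders}. Since $w \neq w'$, fix a position $i$ with $w_i \neq w'_i$ and let $b = w_i$. If $i = 1$ or $i = n$ the differing bit sits on the constant $\mathsf{min}$ or $\mathsf{max}$, and $S(\mathsf{min})$ (resp.\ $S(\mathsf{max})$) separates the strings with no quantifiers, so assume $1 < i < n$. Note that $\min(i-1,\,n-i) \le \lfloor (n-1)/2\rfloor$, so one of the two segments $[\mathsf{min}, i]$, $[i, \mathsf{max}]$ is a linear order of length $\ell^\ast \le n/2$. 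I would introduce a single leading existential pebble $x$, assert the quantifier-free condition $S^b(x)$ (with $S^1 = S$ and $S^0 = \lnot S$), and then assert that the shorter segment incident to $x$ has length exactly $\ell^\ast$. This last assertion is a relativization of the one-vs-all separating sentence $\varphi_{\ell^\ast}$ from Theorem~\ref{thm:alternation2} to $[\mathsf{min}, x]$ (or $[x, \mathsf{max}]$), replacing the constant $\mathsf{max}$ (or $\mathsf{min}$) by the variable $x$; guarding each quantifier by $y \le x$ (or $y \ge x$) does not change the quantifier count. Since $\varphi_{\ell^\ast}$ uses at most $\log(\ell^\ast) + 4 \le \log(n) + 3$ quantifiers with an alternating prefix ending in $\forall$, prepending $\exists x$ (and at most one dummy quantifier to repair any parity mismatch) yields a sentence with at most $\log(n) + 6$ quantifiers whose prefix strictly alternates and ends in $\forall$. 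Correctness is immediate: position $i$ is the \emph{unique} element whose shorter incident segment has the prescribed length, and $S^b$ holds there in $w$ but fails there in $w'$.

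For the lower bound, the natural route is through quantifier rank: since quantifier rank never exceeds quantifier number, it suffices to produce, for each large $n$, a pair $w \neq w'$ that no low-quantifier-rank sentence can separate. I would take $w,w'$ to differ only by sliding a single feature (a lone $1$, or the boundary of a $0$-block followed by a $1$-block) one step, placed as centrally as possible, so the feature splits each string into two gaps of length $\approx n/2$ that differ between $w$ and $w'$ by one. The Ehrenfeucht--Fra\"{i}ss\'{e} analysis then reduces gap by gap to the linear-order composition fact underlying Theorem~\ref{rosey-fact} (namely $L_a \equiv_p L_b$ iff $a = b$ or $a,b \ge 2^p$): $\bD$ survives $p$ rounds precisely when all four gaps have length $\ge 2^p$, which persists up to $p \approx \log(n/2)$. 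This forces the quantifier rank, and hence the quantifier number, of any separating sentence to be at least $\lfloor \log n\rfloor$.

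The step I expect to be the main obstacle is nailing the \emph{exact} constant $\lfloor \log n\rfloor$ in the lower bound. Because the feature partitions a length-$n$ string into two gaps summing to $n-1$, the binding gap has length at most $(n-1)/2$, so the Ehrenfeucht--Fra\"{i}ss\'{e} argument literally yields $\lfloor \log(n-2)\rfloor$ quantifiers, which equals $\lfloor \log n\rfloor$ for every $n$ except those just above a power of two. Recovering the final quantifier in those residual cases cannot be done with the EF game alone: it requires the extra power of the multi-structural game, where $\bD$'s ability to copy boards lets her hedge across branches and maintain a matching pair for one round longer than quantifier rank would permit --- the same phenomenon that makes $q(\ell)$ strictly exceed $r(\ell)$ for certain $\ell$ in Theorem~\ref{thm:q_vs_r} and Table~\ref{q*-table}. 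Concretely, for those $n$ I would analyze the $\ms$ game on $(\{w\}, \{w'\})$ directly, showing that $\bD$'s copies let her survive the extra round and thereby pushing the bound to the full $\lfloor \log n\rfloor$.
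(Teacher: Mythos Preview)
Your upper-bound argument is essentially the paper's: both place a leading existential pebble on a position where $w$ and $w'$ disagree and then invoke the one-vs-all linear-order sentence of Theorem~\ref{thm:alternation2} to pin that position down. The paper always works with the prefix segment $[\mathsf{min},\r]$, while you pick the shorter of the two incident segments; this buys at most a constant, and the resulting bound and alternation pattern coincide.

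For the lower bound you also land on the paper's construction---two strings that differ only by sliding a lone $1$ one step near the middle---and the same gap-based reduction to linear orders. The paper fixes $n=2^k+2$, takes $w=0^{2^{k-1}}100^{2^{k-1}}$ and $w'=0^{2^{k-1}}010^{2^{k-1}}$, and argues in two pieces: if $\bS$ confines play to one side of the $1$'s he is effectively playing on $(L_{2^{k-1}},L_{2^{k-1}-1})$, which needs $k=\lfloor\log n\rfloor$ rounds by Theorem~\ref{rosey-fact}; if he straddles, $\bD$ mirrors side-by-side and survives. Your ``main obstacle'' paragraph is well observed: the EF gap analysis you sketch only reaches $\lfloor\log(n-2)\rfloor$, which falls one short of $\lfloor\log n\rfloor$ exactly when $n\in\{2^k,2^k+1\}$. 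Note that the paper's choice $n=2^k+2$ sidesteps precisely these residual values, and its proof as written does not treat them either---so you are not missing anything the paper supplies. Your proposed remedy, arguing directly in the $\ms$ game that $\bD$'s copying buys the extra round, is a reasonable instinct (and is consistent with the $q(\ell)>r(\ell)$ phenomenon in Table~\ref{q*-table}), but it is speculative as stated: you would need to exhibit an explicit $\bD$ strategy on the specific pair $(w,w')$ surviving $\lfloor\log n\rfloor-1$ rounds, not just appeal to the general possibility.
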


\begin{proof}
    \textbf{Upper Bound:} Let $w, w' \in \{0,1\}^n$ be any two distinct $n$-bit strings. There is an index $i \in [n]$ such that $w_i \neq w'_i$. Let $\cA = \{w\}$ and $\cB = \{w'\}$. We will show that $\bS$ wins the MS game on $(\cA, \cB)$ in $\log(n) + 6$ rounds.
    
    In round $1$, $\bS$ plays pebble $\r$ on the $\cA$ side, on the element $w_i$ in $w$, creating the pebbled string $\langle w ~|~ w_i\rangle$. Assume $\bD$ responds obliviously on the $\cB$ side. We can now immediately use Observation \ref{obs:discard} to discard the resulting pebbled string $\langle w' ~|~ w'_i \rangle \in \cB$, where the pebble $\r$ is on the element $w'_i$. Every remaining board in $\cB$ is of the form $\langle w' ~|~ w'_j\rangle$, for $j \neq i$. Note that the substring $w'[1, j]$ has length $j$, which is different from $i$, the length of the substring $w[1, i]$ of $w \in \cA$. So now, $\bS$ can simply play the strategy from Theorem \ref{thm:alternation2} to separate a linear order of length $i$ from all other linear orders, which he wins in $\log(n) + 4$ rounds with an alternating pattern. This gives us the desired result, after at most one more dummy move to preserve alternation.
    
    \textbf{Lower Bound:} Let $\ell = 2^k + 2$ for $k > 1$, and let $w = 0^{2^{k-1}} 100^{2^{k-1}}$ and $w' = 0^{2^{k-1}} 010^{2^{k-1}}$. 
    If $\bS$ plays entirely on one side of the respective $1$s then he is effectively playing the MS game on $(L_{2^{k-1}}, L_{2^{k-1}-1})$. By Theorem \ref{rosey-fact}, we have $r(2^{k-1}) = k = \lfloor \log(\ell) \rfloor$. Since QR lower bounds QN, the MS game played in this fashion requires at least $\lfloor \log(\ell)\rfloor$ rounds to win.

    Now suppose that instead of playing entirely on the same side of the respective $1$s, $\bS$ plays on both sides of a $1$ and/or on the $1$ during these $\lfloor \log(\ell)\rfloor$ rounds. In this case, $\bD$ can play obliviously to the left of the 1 when $\bS$ plays to the left of the 1, obliviously to the right of the 1 when $\bS$ plays to the right of the 1, and on the $1$ whenever $\bS$ plays on the $1$, thereby keeping matching pairs simultaneously on both sides. The lower bound follows.
\end{proof}

%




We also need another helpful lemma, whose proof is in Appendix \ref{apx:strings}.

\begin{restatable}{lemma}{stirling}\label{lem:stirling}
    Let $f \colon \mathbb{N} \to \mathbb{N}$ be a function satisfying $\lim_{n \to \infty}f(n) = \infty$, and let $t \geq 2$ be any integer. Then, for some number $N(t)$ depending on $t$, for all $n \geq N(t)$, we have $\lceil\log_t(f(n))\rceil! \geq f(n)$.
\end{restatable}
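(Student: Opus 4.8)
The plan is to prove that $\lceil\log_t(f(n))\rceil! \geq f(n)$ for all sufficiently large $n$, by exploiting the fact that the factorial grows super-exponentially while $\log_t(f(n))$ grows only logarithmically in $f(n)$. The key reframing is to set $m = \lceil\log_t(f(n))\rceil$, so that by definition $m \geq \log_t(f(n))$, which rearranges to $t^m \geq f(n)$. It therefore suffices to show that $m! \geq t^m$ for all sufficiently large $m$, since then $m! \geq t^m \geq f(n)$. The hypothesis $\lim_{n\to\infty} f(n) = \infty$ guarantees that $m = \lceil\log_t(f(n))\rceil \to \infty$ as $n \to \infty$, so forcing $n$ large forces $m$ large.

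The heart of the argument is thus the purely numerical claim: for each fixed integer $t \geq 2$, there is a threshold $M(t)$ such that $m! \geq t^m$ whenever $m \geq M(t)$. This is a standard fact — the factorial dominates any fixed exponential. First I would establish it, for instance via Stirling's approximation $m! \sim \sqrt{2\pi m}\,(m/e)^m$, from which $m! \geq (m/e)^m$; then $m! \geq t^m$ follows as soon as $m/e \geq t$, i.e. $m \geq et$, giving an explicit threshold $M(t) = \lceil et\rceil$. Alternatively one can argue elementarily: once $m > et$, each of the last few factors in $m! = 1\cdot 2 \cdots m$ exceeds $t$, and a short induction confirms $m! \geq t^m$ for $m$ beyond a $t$-dependent point.

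To assemble the final statement, I would choose $N(t)$ so that $n \geq N(t)$ implies $f(n) \geq t^{M(t)}$, which in turn forces $m = \lceil\log_t(f(n))\rceil \geq M(t)$; such an $N(t)$ exists precisely because $f(n) \to \infty$. Then for all $n \geq N(t)$ the chain $\lceil\log_t(f(n))\rceil! = m! \geq t^m \geq f(n)$ closes the proof, and the threshold $N(t)$ depends only on $t$ as required.

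The main obstacle is essentially bookkeeping rather than conceptual depth: one must be careful that the threshold truly depends only on $t$ and not on $f$, and that the direction of each inequality — especially the ceiling inequality $m \geq \log_t(f(n))$ giving $t^m \geq f(n)$ — is used correctly. The only genuine content is the factorial-beats-exponential lemma, which I expect to dispatch quickly via Stirling or a direct product estimate.
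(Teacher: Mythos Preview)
Your proposal is correct and follows essentially the same route as the paper: set $m = \lceil\log_t(f(n))\rceil$, invoke Stirling's lower bound $m! \geq (m/e)^m$, and use the threshold $m \geq et$ (equivalently $f(n) \geq t^{et}$) to obtain $m! \geq t^m \geq f(n)$. The paper compresses the chain into a single displayed inequality while you separate out the numerical sublemma $m! \geq t^m$, but the content is identical.

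One remark on your bookkeeping concern: the threshold $N(t)$ \emph{does} depend on $f$ (it is the point beyond which $f(n) \geq t^{et}$), despite the notation. The paper's proof has the same dependence; in every application $f$ is a fixed concrete function, so this causes no trouble.
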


We now start with our string separation problems. The first problem we will consider will be when there is a single $n$-bit string in $\cA$, and the $2^n - 1$ remaining $n$-bit strings in $\cB$. Note that this corresponds to our Boolean function of interest being an indicator function.

\begin{restatable}[One vs.~All]{theorem}{onevsalln}\label{thm:one-vs-all-n}
For all $n$, and for every $\varepsilon > 0$, it is possible to separate each $n$-bit string from all other $n$-bit strings by a sentence with $(1 + \varepsilon)\log(n) + O_\varepsilon(1)$ quantifiers. This sentence (in prenex form) starts with a $\forall$, then has at most $\varepsilon\log(n) + 1$ occurrences of $\exists$, and then ends with an alternating quantifier prefix of length at most $\log(n) + O_\varepsilon(1)$.
\end{restatable}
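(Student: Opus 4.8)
The plan is to separate a single target string $w$ from all other $n$-bit strings by exploiting the fact that the $1$-bits of $w$ sit at fixed positions, and using parallel play to break up the comparison into many cheap sub-games that all share a common quantifier prefix. First I would fix the target string $w$ and consider the positions $1 = p_0 < p_1 < \cdots < p_m = n$ where $p_1, \ldots, p_{m-1}$ are chosen to partition $w$ into consecutive \emph{blocks}, so that each block is a substring of bounded length. The intended strategy for $\bS$ begins with a single $\forall$ move that places a pebble simultaneously on the block boundaries (or more precisely, a setup move that creates the anchoring structure), followed by a bounded number of $\exists$ moves that isolate the relevant block in which a candidate string $w'$ differs from $w$. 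The key observation is that once the anchors are in place, distinguishing $w$ from $w'$ reduces to distinguishing their behavior within a single block, and each such sub-game can be handled by the One vs.~One machinery of Proposition~\ref{prop:one-vs-one-n} together with the alternation theorems of Section~\ref{sec:linearorders}.

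Concretely, I would split the right side $\cB$ (all strings $w' \neq w$) according to \emph{how} they differ from $w$: either $w'$ has a $1$ where $w$ has a $0$ (or vice versa) in some specific block, or the relevant boundary/length information differs. After $\bS$'s opening $\forall$ move anchors the structure, the strings in $\cB$ partition into classes indexed by the block of first disagreement, and within each class the sub-game is an instance of separating linear orders of slightly different effective lengths. By Theorem~\ref{thm:alternation2}, each such sub-game has a separating sentence with an alternating prefix of length at most $\log(n) + O(1)$ ending in $\forall$, and crucially all these prefixes can be made identical (by padding with dummy moves via the Generalized Parallel Play Lemma, Lemma~\ref{lem:genparallelplay}). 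The number of blocks controls the number of $\exists$ moves needed to \emph{select} the correct block; to drive this count down to $\varepsilon \log(n)$, I would choose the block size as a function of $\varepsilon$, so that the number of blocks is roughly $n / (\text{block size})$ and the cost of indexing a block is logarithmic in the number of blocks. Tuning the block size so that this indexing cost is at most $\varepsilon \log(n) + 1$ is where the parameter $O_\varepsilon(1)$ enters.

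The trade-off at the heart of the argument is between the cost of \emph{locating} the block of disagreement (which scales with the number of blocks, hence the $\varepsilon \log(n)$ existential moves) and the cost of \emph{resolving} the disagreement inside a block (handled by the alternating $\log(n) + O_\varepsilon(1)$ suffix). I would verify that the two phases can be stitched together with a shared pattern: the opening $\forall$, then the $\exists$-heavy selection phase, then the alternating resolution phase. The Parallel Play Lemma requires that condition~2 (no cross-matching between sub-games $\cA_i$ and $\cB_j$ for $i \neq j$) holds; this should follow because distinct blocks of first disagreement, once anchored, fall into distinct isomorphism classes, exactly as in the worked $\mathsf{CMA}$ example where the two sub-games live in disjoint intervals $L[\mathsf{min}, \r]$ and $L[\r, \mathsf{max}]$.

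The main obstacle I anticipate is the bookkeeping needed to guarantee that \emph{all} the parallel sub-games across \emph{all} disagreement classes share a single common quantifier prefix of the prescribed shape $(\forall, \exists^{\leq \varepsilon \log n + 1}, \text{alternating})$. The alternation theorems give each sub-game an alternating prefix \emph{ending} in $\forall$, but the lengths differ across blocks, so I would need to pad shorter sub-games with dummy moves on the appropriate side, relying on Lemma~\ref{lem:genparallelplay}'s allowance that each $\mathsf{pat}(\cS_i)$ need only be a \emph{subsequence} of the global pattern $P$. The delicate point is that dummy moves must not accidentally create new matching pairs across classes; I would argue this is safe using Observation~\ref{obs:discard}, discarding any board that fails to match after each move. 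A secondary subtlety is ensuring the selection phase genuinely costs only $\varepsilon \log(n) + 1$ existential quantifiers rather than $\log(\text{number of blocks})$ full alternating moves, which is exactly the saving that parallel play buys us over the naive approach of indexing each block independently.
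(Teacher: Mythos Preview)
Your proposal takes a different route from the paper and, as written, has a genuine gap.

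The paper does not use a block decomposition. Its first $\forall$ move places, on each $w' \in \cB$, a single pebble at an index where $w'$ disagrees with $w$; after $\bD$'s oblivious response, $\cA$ consists of $n$ copies of $w$, one with the pebble at each of the $n$ positions. The crucial step---absent from your plan---is the \emph{permutation trick} backed by Lemma~\ref{lem:stirling}: $\bS$ now plays $m = \lceil \log_t n \rceil$ existential moves, laying these $m$ pebbles down in a \emph{distinct permutation} (order type) on each of the $n$ boards in $\cA$. Since $m! \geq n$ for large $n$, every one of the $n$ boards receives its own isomorphism class. Choosing $t \geq 2^{1/\varepsilon}$ makes $m \leq \varepsilon\log n + 1$. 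Each resulting sub-game is then a clean one-vs-all \emph{linear order} instance (separate position $i$ from all $j \neq i$), and Theorem~\ref{thm:alternation2} handles all of them in parallel with a common alternating prefix of length $\log n + O(1)$.

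Your block approach has two problems. First, the opening $\forall$ move cannot ``place a pebble simultaneously on the block boundaries'': a universal move places exactly one pebble per board in $\cB$, and it is $\bD$ who then responds on $\cA$. Second, and more seriously, even granting a sensible first move, your selection phase spends $\varepsilon\log n$ existential moves to index one of $n^\varepsilon$ blocks; this leaves each block-class on the $\cA$ side with roughly $n^{1-\varepsilon}$ boards (one per position of the round-$1$ pebble inside that block), not one. The sub-game in each class is therefore many-vs-many, not one-vs-all, and Theorem~\ref{thm:alternation2} (which separates a \emph{single} $L_\ell$ from all other linear orders) does not apply as you claim. You would still need to distinguish those $n^{1-\varepsilon}$ positions within each block, and your budget does not cover this. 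The permutation trick is precisely what collapses this residual work: $m$ pebbles placed in distinct orders yield $m!$, not $2^m$, isomorphism classes, so $\varepsilon\log n$ existential moves already suffice to isolate all $n$ positions in one shot.
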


\begin{proof}[Proof Sketch (see Appendix \ref{apx:strings} for full proof)]
Fix any $\varepsilon > 0$, and fix any integer $t \geq 2^{1/\varepsilon}$. By Lemma \ref{lem:stirling}, we know there is some integer $N(t)$, such that for all $n \geq N(t)$, $\lceil\log_t(n)\rceil! \geq n$. For any such $n$, fix an arbitrary $w \in \{0, 1\}^n$, and let $\cA = \{w\}$, and $\cB = \{0, 1\}^n - \{w\}$.

Consider the MS game on $(\cA, \cB)$. Every $w' \in \cB$ differs from $w$ in at least one bit. In round $1$, $\bS$ plays a universal move, placing a pebble on each $w' \in \cB$ on an index that disagrees with $w$ at that index. Assume $\bD$ responds obliviously, so that there are $n$ resulting pebbled strings in $\cA$. For the next $\lceil\log_t(n)\rceil$ rounds, $\bS$ plays only existential moves, placing the $\lceil\log_t(n)\rceil$ pebbles in distinct permutations on the $n$ strings in $\cA$, creating $n$ distinct isomorphism classes\footnote{An \emph{isomorphism class} is a maximal set of partially isomorphic pebbled structures.} by Lemma \ref{lem:stirling}. Once we discard structures from the two sides using Observation \ref{obs:discard}, we are now left with $n$ isomorphism classes, each of them defining a \textbf{one-vs-all} sub-game; in each of these sub-games, the round $1$ pebble is placed at a different index in the single string on the left from any string on the right. Therefore, $\bS$ can view this as a game simply about lengths, and can employ any \textbf{one-vs-all} linear order strategy. The entire game therefore reduces to $n$ parallel instances of \textbf{one-vs-all} sub-games on linear orders.

By Lemma \ref{lem:genparallelplay} and Theorem \ref{thm:alternation2}, $\bS$ can now win these parallel games in $\log(n) + 4$ further moves. Together with the initial universal move and the preprocessing moves, the total number of rounds is:
\begin{equation*}
\lceil\log_t(n)\rceil + \log(n) + 5 \leq \frac{\log(n)}{\log(t)} + \log(n) + 6 = \log(n)\left(1 + \frac{1}{\log(t)}\right) + 6 \leq (1 + \varepsilon)\log(n) + 6.
\end{equation*}
Note that $N(t)$ depends only on $t$, which in turn depends only on $\varepsilon$. So when $n < N(t)$, the number of quantifiers can be absorbed directly into the $O_\varepsilon(1)$ additive term.
\end{proof}

\def\offset{0.6}

\begin{figure}[ht]
\centering
\begin{tikzpicture}

\node [BLANK] at (-1,5*\offset) {1};
\node [BLANK] at (-0.5,5*\offset) {0};
\node [BLANK] at (0,5*\offset) {0};
\node [BLANK] at (0.5,5*\offset) {1};
\node [BLANK] at (1,5*\offset) {1};

\node [BLANK] at (-1,4*\offset) {1};
\node [BLANK] at (-0.5,4*\offset) {1};
\node [BLANK] at (0,4*\offset) {0};
\node [BLANK] at (0.5,4*\offset) {0};
\node [BLANK] at (1,4*\offset) {0};

\node [BLANK] at (-1,3*\offset) {0};
\node [BLANK] at (-0.5,3*\offset) {1};
\node [BLANK] at (0,3*\offset) {1};
\node [BLANK] at (0.5,3*\offset) {0};
\node [BLANK] at (1,3*\offset) {1};

\node [BLANK] at (-1,2*\offset) {0};
\node [BLANK] at (-0.5,2*\offset) {1};
\node [BLANK] at (0,2*\offset) {0};
\node [BLANK] at (0.5,2*\offset) {1};
\node [BLANK] at (1,2*\offset) {0};

\node [BLANK] at (-1,\offset) {1};
\node [BLANK] at (-0.5,\offset) {1};
\node [BLANK] at (0,\offset) {1};
\node [BLANK] at (0.5,\offset) {1};
\node [BLANK] at (1,\offset) {1};

\node [BLANK] at (-1,0) {1};
\node [BLANK] at (-0.5,0) {1};
\node [BLANK] at (0,0) {0};
\node [BLANK] at (0.5,0) {1};
\node [BLANK] at (1,0) {1};

\end{tikzpicture}
\begin{tikzpicture}
\node [BLANK] at (0, 0) {};
\node [BLANK] at (0, 5*\offset) {};
\node [BLANK] at (-0.4, 0) {};
\node [BLANK] at (0.4, 0) {};
\draw[->, very thick] (-0.3,2.5*\offset) -- (0.3,2.5*\offset);
\end{tikzpicture}
\begin{tikzpicture}

\node [BLANK] at (-1,5*\offset) {1};
\node [RED] at (-0.5,5*\offset) {0};
\node [BLANK] at (0,5*\offset) {0};
\node [BLANK] at (0.5,5*\offset) {1};
\node [BLANK] at (1,5*\offset) {1};

\node [BLANK] at (-1,4*\offset) {1};
\node [RED] at (-0.5,4*\offset) {1};
\node [BLANK] at (0,4*\offset) {0};
\node [BLANK] at (0.5,4*\offset) {0};
\node [BLANK] at (1,4*\offset) {0};

\node [BLANK] at (-1,3*\offset) {0};
\node [BLANK] at (-0.5,3*\offset) {1};
\node [RED] at (0,3*\offset) {1};
\node [BLANK] at (0.5,3*\offset) {0};
\node [BLANK] at (1,3*\offset) {1};

\node [BLANK] at (-1,2*\offset) {0};
\node [BLANK] at (-0.5,2*\offset) {1};
\node [RED] at (0,2*\offset) {0};
\node [BLANK] at (0.5,2*\offset) {1};
\node [BLANK] at (1,2*\offset) {0};

\node [BLANK] at (-1,\offset) {1};
\node [BLANK] at (-0.5,\offset) {1};
\node [BLANK] at (0,\offset) {1};
\node [RED] at (0.5,\offset) {1};
\node [BLANK] at (1,\offset) {1};

\node [BLANK] at (-1,0) {1};
\node [BLANK] at (-0.5,0) {1};
\node [BLANK] at (0,0) {0};
\node [RED] at (0.5,0) {1};
\node [BLANK] at (1,0) {1};

\end{tikzpicture}
\begin{tikzpicture}
\node [BLANK] at (0, 0) {};
\node [BLANK] at (0, 5*\offset) {};
\node [BLANK] at (-0.4, 0) {};
\node [BLANK] at (0.4, 0) {};
\draw[->, very thick] (-0.3,2.5*\offset) -- (0.3,2.5*\offset);
\end{tikzpicture}
\begin{tikzpicture}

\node [BLANK] at (-1,5*\offset) {1};
\node [RED] at (-0.5,5*\offset) {0};
\node [BLUE] at (0,5*\offset) {0};
\node [BLANK] at (0.5,5*\offset) {1};
\node [BLANK] at (1,5*\offset) {1};

\node [BLANK] at (-1,4*\offset) {1};
\node [RED] at (-0.5,4*\offset) {1};
\node [BLANK] at (0,4*\offset) {0};
\node [BLUE] at (0.5,4*\offset) {0};
\node [BLANK] at (1,4*\offset) {0};

\node [BLANK] at (-1,3*\offset) {0};
\node [BLUE] at (-0.5,3*\offset) {1};
\node [RED] at (0,3*\offset) {1};
\node [BLANK] at (0.5,3*\offset) {0};
\node [BLANK] at (1,3*\offset) {1};

\node [BLANK] at (-1,2*\offset) {0};
\node [BLANK] at (-0.5,2*\offset) {1};
\node [RED] at (0,2*\offset) {0};
\node [BLUE] at (0.5,2*\offset) {1};
\node [BLANK] at (1,2*\offset) {0};

\node [BLANK] at (-1,\offset) {1};
\node [BLUE] at (-0.5,\offset) {1};
\node [BLANK] at (0,\offset) {1};
\node [RED] at (0.5,\offset) {1};
\node [BLANK] at (1,\offset) {1};

\node [BLANK] at (-1,0) {1};
\node [BLANK] at (-0.5,0) {1};
\node [BLUE] at (0,0) {0};
\node [RED] at (0.5,0) {1};
\node [BLANK] at (1,0) {1};

\end{tikzpicture}
\begin{tikzpicture}
\node [BLANK] at (0, 0) {};
\node [BLANK] at (0, 5*\offset) {};
\node [BLANK] at (-0.4, 0) {};
\node [BLANK] at (0.4, 0) {};
\draw[->, very thick] (-0.3,2.5*\offset) -- (0.3,2.5*\offset);
\end{tikzpicture}
\begin{tikzpicture}

\node [BLANK] at (-1,5*\offset) {1};
\node [RED] at (-0.5,5*\offset) {0};
\node [BLUE] at (0,5*\offset) {0};
\node [GREEN] at (0.5,5*\offset) {1};
\node [BLANK] at (1,5*\offset) {1};

\node [BLANK] at (-1,4*\offset) {1};
\node [RED] at (-0.5,4*\offset) {1};
\node [GREEN] at (0,4*\offset) {0};
\node [BLUE] at (0.5,4*\offset) {0};
\node [BLANK] at (1,4*\offset) {0};

\node [BLANK] at (-1,3*\offset) {0};
\node [BLUE] at (-0.5,3*\offset) {1};
\node [RED] at (0,3*\offset) {1};
\node [GREEN] at (0.5,3*\offset) {0};
\node [BLANK] at (1,3*\offset) {1};

\node [BLANK] at (-1,2*\offset) {0};
\node [GREEN] at (-0.5,2*\offset) {1};
\node [RED] at (0,2*\offset) {0};
\node [BLUE] at (0.5,2*\offset) {1};
\node [BLANK] at (1,2*\offset) {0};

\node [BLANK] at (-1,\offset) {1};
\node [BLUE] at (-0.5,\offset) {1};
\node [GREEN] at (0,\offset) {1};
\node [RED] at (0.5,\offset) {1};
\node [BLANK] at (1,\offset) {1};

\node [BLANK] at (-1,0) {1};
\node [GREEN] at (-0.5,0) {1};
\node [BLUE] at (0,0) {0};
\node [RED] at (0.5,0) {1};
\node [BLANK] at (1,0) {1};

\end{tikzpicture}
\caption{Illustration of the technique used by $\bS$ to partition a set of structures into isomorphism classes. Here $\bS$ plays three pebble moves to break the set of six strings into distinct isomorphism classes: $\r < \b < \g$, $\r < \g < \b$, and so on. Note that three pebbling moves suffice to give each string its own isomorphism class since $3!=6$.
}
\label{fig:isomorphismclasses}
\end{figure}
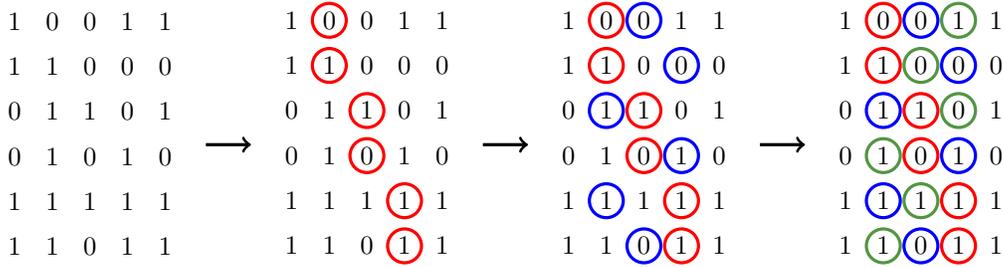

The next problem we will consider has polynomially many $n$-bit strings in $\cA$, and the remaining $n$-bit strings in $\cB$. This will correspond to our Boolean function of interest being a \emph{sparse} function. Note that this immediately implies Theorem \ref{thm-B} in Section \ref{sec:intro}.

\begin{restatable}[Polynomially Many vs.~All]{theorem}{polymanyvsall}\label{thm:3+epsilon-upper}
Let $f \colon \mathbb{N} \to \mathbb{N}$ be a function satisfying $\lim_{n \rightarrow \infty} f(n) = \infty$ and $f(n) = O(n^k)$ for some constant $k$. Then, for all $n$, and for every $\varepsilon > 0$, it is possible to separate each set of $f(n)$ $n$-bit strings from all other $n$-bit strings by a sentence with $(1 + \varepsilon)\log(n) + O_{k, \varepsilon}(1)$ quantifiers.
\end{restatable}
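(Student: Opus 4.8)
The plan is to mirror the structure of the one-vs-all argument (Theorem~\ref{thm:one-vs-all-n}): first spend a small number of \emph{existential} preprocessing moves to break the left-hand set $\cA$ of $f(n)$ strings into singletons carrying pairwise-distinct isomorphism types, and then run the $f(n)$ resulting one-vs-all games \emph{in parallel} via the Generalized Parallel Play Lemma (Lemma~\ref{lem:genparallelplay}). Fix $\varepsilon > 0$, set $\delta = \varepsilon/2$, and choose an integer $t \geq 2$ large enough that $k/\log(t) \leq \varepsilon/2$. Write $\cA = \{w^{(1)}, \ldots, w^{(f(n))}\}$ and $\cB = \{0,1\}^n \setminus \cA$.

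First I would handle the preprocessing. Let $m = \lceil \log_t(f(n))\rceil$; since $f(n) < 2^n$ and $t \geq 2$ we have $m \leq n$, and by Lemma~\ref{lem:stirling} we have $m! \geq f(n)$ for all $n$ beyond a threshold $N(t)$ depending only on $t$. $\bS$ plays $m$ existential moves, placing the $m$ pebbles on each $w^{(i)}$ so as to realize a distinct permutation (order type) among the $m! \geq f(n)$ available ones; since an isomorphism must send the pebble of each color to the pebble of the same color, two boards whose pebbles occur in different relative orders cannot be isomorphic, so each $w^{(i)}$ now sits in its own isomorphism class $\tau_i$. After $\bD$'s oblivious response, every board in $\cB$ either matches exactly one $\tau_i$ or matches none; discarding the latter by Observation~\ref{obs:discard}, I obtain a partition $\cB = \cB_1 \sqcup \cdots \sqcup \cB_{f(n)}$, where $\cB_i$ collects the pebbled boards of type $\tau_i$. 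Because the $\tau_i$ are pairwise distinct, no board of $\cA_i := \{w^{(i)}\}$ can match a board of $\cB_j$ for $i \neq j$, so condition~2 of Lemma~\ref{lem:genparallelplay} is met (and persists, since later moves can only destroy matching pairs).

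Next I would argue that each sub-game $(\cA_i, \cB_i)$ is a one-vs-all instance. Every $w' \in \cB_i$ agrees with $w^{(i)}$ on the $m$ pebbled positions (they form a matching pair) yet differs from it as a string, hence differs at some \emph{unpebbled} index; thus the winning strategy of Theorem~\ref{thm:one-vs-all-n} for $(\{w^{(i)}\}, \{0,1\}^n \setminus \{w^{(i)}\})$ applies while the $m$ pre-placed pebbles are merely carried along — requiring agreement on these extra pebbles can only shrink the set of matching pairs, so the strategy remains winning on the sub-collection $\cB_i$. By Theorem~\ref{thm:one-vs-all-n} each such strategy has the \emph{same} pattern shape, namely a leading $\forall$, then at most $\delta\log(n)+1$ existentials, then an alternating block of length at most $\log(n) + O_\delta(1)$, so all $f(n)$ patterns are subsequences of one common pattern $P'$. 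Prepending the $m$ shared existential moves gives a global pattern $P = \exists^{m}P'$, and Lemma~\ref{lem:genparallelplay} combines the sub-games into a single winning strategy for $(\cA, \cB)$ with pattern $P$.

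Finally, the quantifier count is $m + |P'| \leq \big(k/\log(t)\big)\log(n) + (1+\delta)\log(n) + O_{t,\delta}(1) \leq (1+\varepsilon)\log(n) + O_{k,\varepsilon}(1)$ by the choices of $\delta$ and $t$; for the finitely many $n < N(t)$ the whole separation is a constant-size problem absorbed into the $O_{k,\varepsilon}(1)$ term. Applying the Fundamental Theorem (Theorem~\ref{thm:MSfundamental}) turns this winning strategy into the desired separating sentence. The main obstacle — and the step deserving the most care — is the parallel composition: one must check that the pre-placed pebbles neither block the first (universal) move of each one-vs-all sub-game nor create spurious matches, and that the sub-game patterns genuinely embed into a common pattern so that Lemma~\ref{lem:genparallelplay} applies; the length bookkeeping (tuning $t$ against $k$ and $\varepsilon$) is then routine.
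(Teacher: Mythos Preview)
Your proposal is correct and follows essentially the same approach as the paper: spend $\lceil\log_t(f(n))\rceil$ existential preprocessing moves to give each of the $f(n)$ strings in $\cA$ its own isomorphism class via distinct pebble permutations (invoking Lemma~\ref{lem:stirling}), then run the resulting one-vs-all sub-games in parallel via Lemma~\ref{lem:genparallelplay} and Theorem~\ref{thm:one-vs-all-n} with parameter $\varepsilon/2$, tuning $t$ against $k$ and $\varepsilon$ so that the preprocessing cost $k\log(n)/\log(t)$ is at most $(\varepsilon/2)\log(n)$. Your treatment is in fact slightly more explicit than the paper's about why the pre-placed pebbles do not interfere with the one-vs-all sub-strategies and why the sub-game patterns embed in a common pattern; the arithmetic and the absorption of small $n$ into the $O_{k,\varepsilon}(1)$ term match the paper's exactly.
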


\begin{proof}[Proof Sketch (see Appendix \ref{apx:strings} for full proof)]
Assume $n > 2$, and pick a sufficiently large constant $k$ such that $f(n) \leq  n^k$ for all $n$. Next, pick $\varepsilon > 0$. Let $t \geq 4$ be a large enough integer so that $t \geq 2^{2k/\varepsilon}$. By Lemma \ref{lem:stirling}, we know there is some integer $N(t)$, such that for all $n \geq N(t)$, we have $\lceil\log_t(f(n))\rceil! \geq f(n)$. $\bS$ once again plays $\lceil\log_t(f(n))\rceil$ existential moves, separating the $f(n)$ strings in $\cA$ into distinct isomorphism classes by using different permutations. Now, as in the proof of Theorem \ref{thm:one-vs-all-n}, $\bS$ has reduced the games to $f(n)$ parallel \textbf{one-vs-all} string separation instances. So now, using Theorem \ref{thm:one-vs-all-n}, he can win these instances in parallel, using $(1 + \varepsilon/2)\log(n) + 6$ quantifiers for all $n \geq \max(N(t), N'(\varepsilon))$, for some $N'(\varepsilon)$ depending only on $\varepsilon$. The total number of rounds used by $\bS$ is:
\begin{align*}
\lceil\log_t(f(n))\rceil + (1 + \varepsilon/2)\log(n) + 6 & \leq (1 + \varepsilon/2)\log(n) + k\log_t(n) + 7 \\
& = (1 + \varepsilon/2)\log(n) + \frac{k\log(n)}{2k/\varepsilon} + 7 \\
& \leq (1 + \varepsilon)\log(n) + 7.
\end{align*}
Again, $N(t)$ depends only on $t$, which depends only on $k$ and $\varepsilon$, whereas $N'(\varepsilon)$ depends only on $\varepsilon$. So when $n < \max(N(t), N'(\varepsilon))$, the number of quantifiers can be absorbed into an additive term that depends only on $k$ and $\varepsilon$, giving us the $O_{k, \varepsilon}(1)$ term.
\end{proof}

Our final results concern separating arbitrary sets of $n$-bit strings from their complements. As discussed in Section \ref{sec:intro}, this corresponds exactly to defining arbitrary Boolean functions. Note that this will immediately imply Theorem \ref{thm-A} in Section \ref{sec:intro}.

\begin{restatable}[Arbitrary vs.~Arbitrary --- \emph{Upper Bound}]{theorem}{anyvsanyupper}\label{thm:all-vs-all upper bound}
For all $n$, and for every $\varepsilon > 0$, any two disjoint sets of $n$-bit strings are separable by a sentence with
$(1 + \varepsilon)\frac{n}{\log(n)} + O_\varepsilon(1)$ quantifiers.
\end{restatable}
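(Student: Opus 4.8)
The plan is to reduce the statement to defining a single Boolean function and then to reuse, essentially verbatim, the strategy behind the sparse case (Theorem~\ref{thm:3+epsilon-upper}), changing only the count of preprocessing moves. First I would note that since $\cA$ and $\cB$ are disjoint, $\cB \subseteq \cA^{\mathrm C}$, and any sentence separating $(\cA,\cA^{\mathrm C})$ also separates $(\cA,\cB)$; so I may assume $\cB=\cA^{\mathrm C}$, i.e.\ we are defining the Boolean function whose $1$-set is $\cA$. Writing $s=|\cA|\le 2^{n}$ (and, negating the sentence if $\cA$ is the larger side, even $s\le 2^{n-1}$), the task becomes a single ``arbitrary-vs-all'' game with $s$ strings on the left.

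For Phase~1 I would have $\bS$ open the MS game on $(\cA,\cB)$ with $p$ existential moves, where $p$ is the least integer with $p!\ge s$. Exactly as in Figure~\ref{fig:isomorphismclasses}, $\bS$ places these $p$ pebbles in a \emph{distinct} order-type on each of the $s$ strings of $\cA$; this is possible precisely because $p!\ge s$ (and $p\le n$, so there is room). After $\bD$ responds obliviously and we discard non-matching boards via Observation~\ref{obs:discard}, each surviving isomorphism class carries one fixed order-type of the first $p$ pebbles, hence contains exactly one string of $\cA$ together with some collection of strings of $\cB$. The point of the choice of $p$ is that it is small: a Stirling estimate gives $\log(p!)=p\log p - O(p)$, and substituting $p=\lceil(1+\varepsilon/2)\frac{n}{\log n}\rceil$ yields $\log(p!) = (1+\varepsilon/2)\,n\bigl(1-\frac{\log\log n}{\log n}+o(1)\bigr) \ge n$ for all large $n$, so $p!\ge 2^{n}\ge s$ and the least such $p$ satisfies $p \le (1+\varepsilon/2)\frac{n}{\log n} + O_\varepsilon(1)$ (the finitely many small $n$ are absorbed into $O_\varepsilon(1)$).

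For Phase~2 the position is now $s$ independent \textbf{one-vs-all} sub-games, one per surviving class. By Theorem~\ref{thm:one-vs-all-n} each is winnable in $(1+\varepsilon)\log n + O(1)$ rounds with a pattern of the prescribed shape (an initial $\forall$, a few $\exists$'s, then an alternation ending in $\forall$), and after padding to a common super-pattern they all share one pattern $P'$. Because distinct classes carry distinct order-types on the first $p$ pebbles, no board of one class can match a board of another, so condition~2 of the Generalized Parallel Play Lemma (Lemma~\ref{lem:genparallelplay}) holds automatically; applying it with the overall pattern $\exists^{p}P'$ wins the whole game in $p + (1+\varepsilon)\log n + O(1)$ rounds. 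Since $\log n = o(n/\log n)$, this is $(1+\varepsilon)\frac{n}{\log n} + O_\varepsilon(1)$, and Theorem~\ref{thm:MSfundamental} converts the winning strategy into the required separating sentence.

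The main obstacle — and the only genuinely new idea relative to the sparse proof — is the choice of $p$. The sparse argument used $\lceil\log_t s\rceil$ preprocessing moves, but for $s$ as large as $2^{n}$ this is $\Theta(n)$, recovering only the trivial bound; the $\log n$-factor savings come entirely from instead solving $p!\ge s$ directly and exploiting $\log(p!)\sim p\log p$, so that a $p\sim n/\log n$ already indexes $2^{n}$ permutations. Everything else is the routine Stirling estimate above together with the bookkeeping that the lower-order one-vs-all cleanup composes in parallel under a common pattern.
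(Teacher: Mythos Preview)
Your proposal is correct and follows essentially the same approach as the paper: the paper also chooses roughly $n/\log n$ existential preprocessing moves (parameterized as $m=\lceil n/\log_r(n)\rceil$ for a suitable $r>2$), uses a Stirling estimate to show $m!\ge 2^n$ so that the permutations of these pebbles give each string in $\cA$ its own order-type, and then finishes by running the one-vs-all strategy of Theorem~\ref{thm:one-vs-all-n} in parallel via Lemma~\ref{lem:genparallelplay}. Your framing of $p$ as the least integer with $p!\ge s$ is a cosmetic variant of the same computation.
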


\begin{proof}[Proof Sketch (see Appendix \ref{apx:strings} for full proof)]
We first observe that for any real number $r > 2$, $\bS$ can play $m := \lceil n/\log_r(n)\rceil$ preprocessing existential moves, putting different permutations of these $m$ pebbles on the strings in $\cA$ (i.e., the left side). A Stirling's approximation argument similar to Lemma \ref{lem:stirling} shows that there is some $N(r)$ such that for all $n \geq N(r)$, this number $m$ of preprocessing moves suffices to give each string in $\cA$ its own isomorphism class. Note that once this is done, $\bS$ has partitioned the original instance into $|\cA|$ disjoint instances of \textbf{one-vs-all} games.

Now, given $\varepsilon > 0$, we first choose $r > 2$ small enough that $\log(r) < 1 + \varepsilon/2$. $\bS$ now plays the preprocessing existential moves as described above to obtain $|\cA|$ parallel \textbf{one-vs-all} instances. Now, by Theorem \ref{thm:one-vs-all-n}, he can win these instances in parallel using Lemma \ref{lem:genparallelplay}, using $(1 + \varepsilon/2)\log(n) + 6$ rounds for all $n \geq \max(N(r), N'(\varepsilon))$, for some $N'(\varepsilon)$ depending only on $\varepsilon$. The total number of rounds needed, therefore, is:
\begin{align*}
    m + (1 + \varepsilon/2)\log(n) + 6 &\leq \frac{n}{\log(n)}\cdot\log(r) + \left(1 + \frac{\varepsilon}{2}\right)\log(n) + 7 \\
    &\leq \left(1 + \frac{\varepsilon}{2}\right)\left(\frac{n}{\log(n)} + \log(n)\right) + 7 \\
    &< (1 + \varepsilon)\frac{n}{\log(n)} + 7
\end{align*}
for all $n \geq \max(N(r), N'(\varepsilon), N''(\varepsilon))$, where for all $n \geq N''(\varepsilon)$, we have $(1 + \varepsilon/2)\log(n) < (\varepsilon/2)\frac{n}{\log(n)}$. Since each of $N(r)$, $N'(\varepsilon)$, and $N''(\varepsilon)$ depends only on $\varepsilon$, the number of quantifiers for smaller $n$ is absorbed into the $O_\varepsilon(1)$ term.
\end{proof}

Remarkably, we cannot improve the upper bound in Theorem \ref{thm:all-vs-all upper bound} by any significant amount. The following proposition establishes this by means of a counting argument, also showing that Theorem \ref{thm-A} is tight.

\begin{restatable}[Arbitrary vs.~Arbitrary --- \emph{Lower Bound}]{theorem}{anyvsanylower}\label{prop:all-vs-all lower bound}
For all sufficiently large $n$, there is a nonempty set of $n$-bit strings, $\cA \subsetneq \{0, 1\}^n$, such that every separating sentence $\varphi$ for $(\cA, \{0, 1\}^n - \cA)$ must have at least $n/\log(n)$ quantifiers.
\end{restatable}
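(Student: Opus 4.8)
The plan is to establish this lower bound by a counting argument, in the same spirit as the Riordan--Shannon lower bound for formula size referenced in the introduction. The idea is to upper bound the number of distinct subsets of $\{0,1\}^n$ that can be defined by \emph{any} sentence with at most $q$ quantifiers, and to show that this number is strictly smaller than the total number $2^{2^n}$ of subsets of $\{0,1\}^n$ once $q$ reaches roughly $n/\log(n)$. Any subset left over is then a witness $\cA$ requiring strictly more than $q$ quantifiers, and since both $\emptyset$ and $\{0,1\}^n$ are trivially definable, the witness is automatically a nonempty proper subset.

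First I would reduce the count to a coarse amount of syntactic data. Putting a sentence into prenex normal form does not increase its quantifier count, so every sentence with at most $q$ quantifiers is equivalent to one of the form $Q_1 x_1 \cdots Q_q x_q\,\psi(x_1,\ldots,x_q)$ with $Q_i \in \{\exists,\forall\}$ and $\psi$ quantifier-free (padding with dummy quantifiers if needed). Over a length-$n$ string, the truth value of $\psi$ under an assignment depends only on the \emph{quantifier-free type} of that assignment: the weak order of the $q$ pebbled positions relative to one another and to $\mathsf{min}$ and $\mathsf{max}$, together with the $S$-value at each position. Since quantifier-free formulas with the same accepted type-set are equivalent on strings, the subset of $\{0,1\}^n$ defined by the sentence is completely determined by the prefix $(Q_1,\ldots,Q_q)$ (with $2^q$ choices) and by the set of quantifier-free types that $\psi$ accepts. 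If $T$ denotes the number of quantifier-free types of $q$ variables, the number of definable subsets is therefore at most $2^q \cdot 2^{T}$.

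The next step is to bound $T$. The order-type component is a weak ordering of $q+2$ elements, of which there are at most the ordered Bell number of $q+2$, namely at most $2^{(q+2)\log(q+2) + O(q)}$; the $S$-value component contributes a further factor of at most $2^q$. Hence $\log T \leq (q+2)\log(q+2) + O(q) = (1 + o(1))\,q\log(q)$, the crucial point being that the coefficient of $q\log(q)$ is exactly $1$. Taking $q = \lceil n/\log(n)\rceil - 1$ (so $q < n/\log(n)$), we get
\begin{equation*}
q\log(q) = \frac{n}{\log(n)}\bigl(\log(n) - \log\log(n)\bigr)(1 + o(1)) = n - \Theta\!\left(\frac{n\log\log(n)}{\log(n)}\right),
\end{equation*}
so that $\log T \leq n - \Theta(n\log\log(n)/\log(n)) < n$ with substantial room to spare. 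Consequently $q + T < 2^n$ for all large $n$, and the number of definable subsets is at most $2^{q+T} < 2^{2^n}$, yielding a set $\cA$ not definable with $q$ quantifiers and hence requiring at least $\lceil n/\log(n)\rceil \geq n/\log(n)$ quantifiers.

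The main obstacle, and the only place requiring real care, is pinning down the leading constant in the bound on $T$: one must verify that the number of quantifier-free types really is $2^{(1+o(1))q\log(q)}$ with coefficient exactly $1$ on $q\log(q)$, rather than $2^{c\,q\log(q)}$ for some $c>1$. This is precisely what makes $n/\log(n)$ the correct threshold, matching the $(1+\varepsilon)n/\log(n)$ upper bound of Theorem \ref{thm:all-vs-all upper bound}; a looser constant here would only produce a lower bound of the form $c\,n/\log(n)$. All remaining inequalities carry comfortable slack, so no delicate optimization is needed beyond this single count.
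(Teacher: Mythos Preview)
Your proposal is correct and follows essentially the same counting argument as the paper: bound the number of inequivalent prenex sentences with at most $q$ quantifiers by $2^q\cdot 2^T$ where $T$ is the number of quantifier-free types, show $\log T = (1+o(1))q\log q$, and compare against the $2^{2^n}$ subsets of $\{0,1\}^n$. Your treatment of the type count (weak orderings of the $q$ variables together with $\mathsf{min}$ and $\mathsf{max}$, plus the $S$-values) is slightly more careful than the paper's, which bounds orderings by $k!$ alone, but the resulting asymptotics and the conclusion are identical.
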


\begin{proof}[Proof Sketch (see Appendix \ref{apx:strings} for full proof)]
If we require $k$ quantifiers to separate any instance on $n$-bit strings (for sufficiently large $n$), we can start by counting the number of pairwise inequivalent sentences that can be written with $k$ quantifiers. Such a sentence has a quantifier prefix of length at most $k$ ($\leq 2^{k+1}$ possibilities), followed by a quantifier-free part, which is a disjunction of types ($2^{k^k}$ possibilities). This puts the total number of possible such formulas to be at most $2^k\cdot 2^{2^{k\log(k)}}$. We need this number to be at least $2^{2^n} - 2$, to account for all nonempty instances of the form $(\cA, \{0, 1\}^n - \cA)$, which require pairwise inequivalent sentences to separate. Solving this shows that we need $k \geq n/\log(n)$.
\end{proof}

\section{Conclusions \& Open Problems }\label{sec:conclusion}

We obtained nontrivial quantifier upper bounds with matching lower bounds (up to $(1 + \varepsilon)$ factors) for a variety of string separation problems. All our upper bounds arise as a result of using the technique of parallel play.

Throughout this work, with very few exceptions, we used MS games to obtain \textit{upper} bounds. It might seem unnecessary to exhibit upper bounds using game arguments, when it ordinarily suffices to exhibit separating sentences. However, the sentences implicitly arising from our game techniques are highly nontrivial to construct. 
In the case of QR, since taking disjunctions and conjunctions do not increase the quantifier rank, one can build up complex sentences out of simpler ones without paying any cost; we lose this convenience with QN, and therefore need more nuanced techniques, such as parallel play.

Natural directions to extend this work include the following:
\begin{enumerate}
\item It would be illuminating to understand the QN required to express particular string and graph properties. While our lower bound for the \textbf{one-vs-one} problem (Proposition \ref{prop:one-vs-one-n}) gave a pair of strings requiring $\log(n)$ quantifiers to separate, the counting argument in Proposition \ref{prop:all-vs-all lower bound} does not exhibit a \textit{specific} instance on $n$-bit strings that provably requires $n/\log(n)$ quantifiers to separate. Note that by (\ref{neils-inlcusions}), if we can find any property that requires $\omega(\log(n))$ quantifiers to capture, then that property lies outside of NL.

\item Is it possible to use our upper bound in Theorem \ref{thm:all-vs-all upper bound} to obtain Lupanov's upper bound of $(1+\varepsilon)2^n/\log(n)$ on the minimum formula size needed separate two sets in propositional logic (or vice versa)?

\item It is known for ordered structures that with $O(\log n)$ quantifiers, one can express the $\bit$ predicate, or equivalently, all standard arithmetic operations on elements of the universe  \cite{IMMERMAN:1999}. In particular, with $\bit$, some properties that would otherwise require $\log(n)$ quantifiers can be expressed using $O(\log(n)/\log\log(n))$ quantifiers. Understanding the use of $\bit$ and other numeric relations would be valuable.
\end{enumerate}

\section*{Acknowledgements}

The authors acknowledge Ryan Williams for numerous helpful discussions and conversations, Sebastian Pfau for an observation that improved the statement of the Parallel Play Lemma, and the anonymous reviewers for comments and suggestions that improved the quality of this manuscript. Rik Sengupta was supported by NSF CCF-1934846.

\newpage

\bibliographystyle{plainurl}
\bibliography{bibliography}

\newpage

\appendix

\section{Technical Content from Section \ref{sec:linearorders}}\label{apx:linearorders}

\begin{proposition}\label{prop:irreduciblepatterns}
The irreducible games are winnable by $\bS$ with the patterns asserted at the start of Section \ref{sec:cmaformal}.
\end{proposition}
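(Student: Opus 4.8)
The plan is to treat the four irreducible games one at a time, in each case exhibiting an explicit $\bS$ strategy and checking that it realizes the asserted pattern. Throughout I would use the fact (from the discussion of the oblivious strategy in Section~\ref{sec:prelims}) that it suffices for $\bS$ to play against oblivious $\bD$ and to prune non-matching boards via Observation~\ref{obs:discard}. The one conceptual point I would invoke repeatedly is that, for a linear order, the induced substructure on a set of pebbled elements together with $\mathsf{min},\mathsf{max}$ is determined entirely by the \emph{order type} of those elements (which coincidences occur, and in what order); the gaps between consecutive pebbles are invisible to the matching-pair test and matter only as ``room'' for pebbles played in later rounds. Thus a matching pair survives iff the left and right boards induce the same order type on all pebbled points, and I would reason entirely in terms of order types.

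The two purely universal cases are immediate. In $\textrm{MSL}_{\forall,1}(1)$, every right board has length $\geq 2$ and hence at least one interior element; $\bS$ plays $\r$ there, realizing the order type $\mathsf{min} < \r < \mathsf{max}$, which the single left board $L_1$ (containing only $\mathsf{min},\mathsf{max}$) cannot reproduce. In $\textrm{MSL}_{\forall,2}(2)$, every right board has length $\geq 3$, hence at least four elements; $\bS$ plays $\r$ and $\b$ to realize $\mathsf{min} < \r < \b < \mathsf{max}$ (four distinct points), which neither $L_1$ nor $L_2$ can reproduce, since each has at most three elements. Both arguments need no analysis of $\bD$'s responses — $\bS$ simply forces more distinct pebbled positions than the short side can supply — giving the patterns $(\forall)$ and $(\forall,\forall)$.

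The cases containing an existential move require tracking the interaction. For $\textrm{MSL}_{\exists,2}(1)$: in round $1$ (existential, left) $\bS$ plays $\r$ on $\mathsf{min}$ of $L_1$; discarding via Observation~\ref{obs:discard} leaves on the right exactly the boards with $\r = \mathsf{min}$, each of which has an interior element since its length is $\geq 2$. In round $2$ (universal, right) $\bS$ plays $\b$ on such an interior element, realizing $\mathsf{min} = \r < \b < \mathsf{max}$, which $L_1$ cannot match; this gives the pattern $(\exists,\forall)$. The genuinely delicate base case is $\textrm{MSL}_{\forall,3}(2)$, where the pattern is forced to be the \emph{alternating} $(\forall,\exists,\forall)$ rather than the shorter $(\forall,\forall)$. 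Here I would proceed as follows: in round $1$ (universal) play $\r$ on the closest-to-midpoint of each right board, which (since every right board has length $\geq 3$) leaves $[\r,\mathsf{max}]$ of length $\geq 2$, and observe that the only surviving left boards are copies of $L_2$ with $\r$ on its midpoint, for which $[\r,\mathsf{max}]$ has length exactly $1$; in round $2$ (existential) play $\b$ on the same element as $\r$ on every surviving left board, forcing the unique left order type $\mathsf{min} < \r = \b < \mathsf{max}$ and hence keeping on the right only boards with $\b = \r$; in round $3$ (universal) play $\g$ on an interior point of $[\r,\mathsf{max}]$ on the right, realizing $\mathsf{min} < \r = \b < \g < \mathsf{max}$, which $\bD$ cannot match because the surviving left half $[\r,\mathsf{max}]$ has no interior point. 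Equivalently, after the forced first universal move this is exactly the already-handled $\textrm{MSL}_{\exists,2}(1)$ game played on the longer half $[\r,\mathsf{max}]$ (with $\r$ as local minimum), so concatenating its $(\exists,\forall)$ strategy yields $(\forall,\exists,\forall)$.

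The main obstacle is precisely this last case. The difficulty is to make a \emph{single} existential move followed by a \emph{single} universal move work \emph{uniformly} over all right boards $L_3, L_4, \ldots$, whose two halves $[\mathsf{min},\r]$ and $[\r,\mathsf{max}]$ have different and varying lengths. The resolution is to aim the exploit only at the half that is guaranteed long — namely $[\r,\mathsf{max}]$, of length $\geq 2$ for every right board — and to use the existential move merely to pin $\b$ onto $\r$ so that a matching pair is still live entering round $3$. The only bookkeeping I would need to be careful about is ruling out alternative $\bD$ responses in rounds $2$ and $3$ (a match through the short left half, or through an unwanted pebble coincidence such as $\b=\mathsf{min}$ or $\g=\mathsf{max}$); each such response changes the induced order type and is eliminated by Observation~\ref{obs:discard}, which I would verify by the short order-type case analysis sketched above.
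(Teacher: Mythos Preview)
Your proof is correct; for cases 1--3 it coincides with the paper's. For case~4, $\textrm{MSL}_{\forall,3}(2)$, you take a different (and more elaborate) route. You play the CMA-style closest-to-midpoint in round~1 to guarantee that $[\r,\mathsf{max}]$ is long on the right, then use the existential round to pin $\b$ onto $\r$, and finally exploit the long half with a targeted universal; you correctly recognize the last two rounds as the already-handled $\textrm{MSL}_{\exists,2}(1)$ game on $[\r,\mathsf{max}]$. The paper's argument is shorter: it simply inserts a \emph{dummy} existential move (play $\b$ on $\mathsf{min}$) between the two universal moves of the $(\forall,\forall)$ strategy for $\textrm{MSL}_{\forall,2}(2)$; since those two universal moves already force four distinct labeled points on every right board, the sandwiched existential carries no content and no tracking of halves is needed. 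Your version has the virtue of being thematically aligned with the CMA recursion (and makes the link to the $\textrm{MSL}_{\exists,2}(1)$ sub-game explicit), while the paper's version illustrates the general principle that an extra round can always be absorbed by a dummy move.
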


\begin{proof}
We consider the irreducible games one by one.
\begin{enumerate}
    \item The game $\textrm{MSL}_{\forall, 1}(1)$ is winnable; $\bS$ makes a universal move by playing on any element other than $\mathsf{min}$ and $\mathsf{max}$ on each board on the right. There is no valid response by $\bD$ on the single board on the left. The pattern is $(\forall)$.
    \item The game $\textrm{MSL}_{\exists, 2}(1)$ is winnable; $\bS$ makes a dummy existential move (by playing as a matter of convention on $\mathsf{min}$), and then reverts to the strategy above for $\textrm{MSL}_{\forall, 1}(1)$ for his second move. The pattern is $(\exists, \forall)$.
    \item The game $\textrm{MSL}_{\forall, 2}(2)$ is winnable; $\bS$ makes two successive universal moves by playing on two arbitrary distinct elements other than $\mathsf{min}$ and $\mathsf{max}$ on each board on the right. $\bD$ cannot match this on the boards on the left. We remark that $\textrm{MSL}_{\forall, 2}(2)$ is not winnable by $\bS$ if he plays in any other fashion. The pattern is $(\forall, \forall)$.
    \item The game $\textrm{MSL}_{\forall, 3}(2)$ is winnable; $\bS$ follows the same strategy as in $\textrm{MSL}_{\forall, 2}(2)$ in rounds $1$ and $3$, except that he makes a dummy existential move in round $2$ (by playing as a matter of convention on $\mathsf{min}$). The pattern is $(\forall, \exists, \forall)$.
\end{enumerate}
This concludes the proof.
\end{proof}

\splitrules*

\begin{proof} \label{app:lo-proofs}
We prove Claim \ref{claim:splitrules} in cases (i) and (iii). The other two cases are similar. Note that our analysis of $\textrm{MSL}_{\exists,4}(5)$ was an example of case (ii).

Consider case (i). Figure \ref{parallel path case1 fig} shows the gameplay through to the configuration immediately after $\bS$'s round $2$ move. Once $\bD$ makes her oblivious response, we can discard some of the boards following Observation \ref{obs:discard}, and note that the game splits into two games: $(\cA_1, \cB_1)$ corresponding to the isomorphism class $\mathsf{min} < \r < \b < \mathsf{max}$, and $(\cA_2, \cB_2)$ corresponding to the isomorphism class $\mathsf{min} < \b < \r < \mathsf{max}$. The game proceeds within the linear orders $L[\r, \mathsf{max}]$ in $(\cA_1, \cB_1)$, and within the linear orders $L[\mathsf{min}, \r]$ in $(\cA_2, \cB_2)$, using Observation \ref{obs:discard} to discard any responses outside those ranges. By construction, these both correspond to $\textrm{MSL}_{\forall,k-1}(\ell)$ games.

\begin{figure*}[ht]
\begin{center}
\tikzset{Tleft/.style={ellipse,draw=black, inner sep=0pt, minimum size=.9cc,
    line width=1pt,fill=orange!20}}
\tikzset{TRleft/.style={ellipse,draw=red, inner sep=0pt, minimum size=.9cc,
    line width=2pt,fill=orange!20}}
\tikzset{Tright/.style={ellipse,draw=black, inner sep=0pt, minimum size=.9cc,
    line width=1pt,fill=indigo!20}}
  \tikzset{Red/.style={circle,draw=red, inner sep=0pt, minimum size=.9cc, line width=2pt}}
\tikzset{Blue/.style={circle,draw=blue, inner sep=0pt, minimum size=.9cc, line width=2pt}}
\tikzset{Green/.style={circle,draw=dg, inner sep=0pt, minimum size=.9cc, line width=2pt}}
\tikzset{Black/.style={circle,draw=black, inner sep=0pt, minimum size=.9cc, line width=1pt}}
\tikzset{TreeNode/.style={rectangle,draw=black, inner sep=0pt, minimum size=1.3cc, line width=1pt}}
\tikzset{BigTreeNode/.style={rectangle,draw=black, inner sep=0pt, minimum size=1.8cc, line width=1pt}}

\begin{tikzpicture}[scale=.09]
\node at (-38,50) {$L_{\leq 2\ell}$};
\node at (42,50) {$L_{> 2\ell}$};

\node [TreeNode] (T0) at(0,40)  {};
   \node at (T0) {$\exists \r$};
  \node [TreeNode] (T1) at (0,24) {$\forall \b$};
  \draw[line width=1pt,->,color=black] (T0) -- (T1);

\node at (-52,42.25) {$\leq \ell$};
\node at (-24,42.25) {$\leq \ell$};
\node at (56,42) {$>\ell$};
\node [Tleft] (1a1) at (-66,40) {{\scriptsize \mn}};
\node [Red] (1amid) at (-38,40) {{\scriptsize $\r$}};
\node [Tright] (1a8) at (-10,40) {{\scriptsize \mx}};
\node [Tright] (1b9) at (74,40) {{\scriptsize \mx}};
\node [Red]   (1b2) at (38,40) {{\scriptsize $\r$}};
\node [Tleft] (1b1) at (10,40) {{\scriptsize \mn}};
\foreach \from/\to in {1a1/1amid,1amid/1a8,1b1/1b2,1b2/1b9}
\draw[-,line width=1pt,color=black] (\from) -- (\to);

\node at (28,36) {$>\ell$};
\node [Tright] (1c9) at (74,34) {{\scriptsize \mx}};
\node [Red]   (1c2) at (46,34) {{\scriptsize $\r$}};
\node [Tleft] (1c1) at (10,34) {{\scriptsize \mn}};
\foreach \from/\to in {1a1/1amid,1amid/1a8,1c1/1c2,1c2/1c9}
\draw[-,line width=1pt,color=black] (\from) -- (\to);

\node [Tright] (2b9) at (74,24) {{\scriptsize \mx}};
\node [Red]   (2b2) at (38,24) {{\scriptsize $\r$}};
\node [Blue]   (2b3) at (56,24) {{\scriptsize $\b$}};
\node [Tleft] (2b1) at (10,24) {{\scriptsize \mn}};

\node [Tright] (2c9) at (74,18) {{\scriptsize \mx}};
\node [Red]   (2c2) at (46,18) {{\scriptsize $\r$}};
\node [Blue]  (2c3) at (28,18) {{\scriptsize $\b$}};
\node [Tleft] (2c1) at (10,18) {{\scriptsize \mn}};
\foreach \from/\to in {2b1/2b2,2b2/2b3,2b3/2b9,2c1/2c3,2c3/2c2,2c2/2c9}
\draw[dotted,line width=1pt,color=black] (\from) -- (\to);
\end{tikzpicture}
\end{center}
\caption{The first round and a half of the $\textrm{MSL}_{\exists,k}(2\ell)$ game
  according to the $\mathsf{CMA}$ strategy.}
\label{parallel path case1 fig}
\end{figure*}

Now consider case (iii). Figure \ref{path thm case2 fig} shows the gameplay through to the configuration immediately after $\bS$'s round $2$ move. Once $\bD$ makes her oblivious response, we can discard some of the boards following Observation \ref{obs:discard}, and note that the game splits into two games: $(\cA_1, \cB_1)$ corresponding to the isomorphism class $\mathsf{min} < \b < \r < \mathsf{max}$, and $(\cA_2, \cB_2)$ corresponding to the isomorphism class $\mathsf{min} < \r < \b < \mathsf{max}$. The game proceeds within the linear orders $L[\mathsf{min}, \r]$ in $(\cA_1, \cB_1)$, and within the linear orders $L[\r, \mathsf{max}]$ in $(\cA_2, \cB_2)$, using Observation \ref{obs:discard} to discard any responses outside those ranges. By construction, these correspond to an $\textrm{MSL}_{\exists,k-1}(\ell - 1)$ game and an $\textrm{MSL}_{\exists,k-1}(\ell)$ respectively.

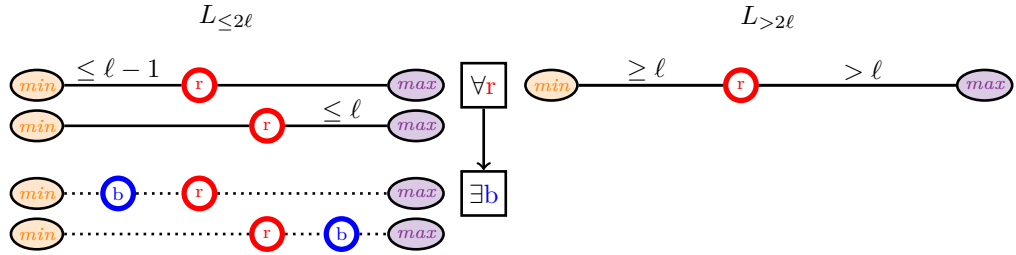
\begin{figure*}[ht]
\begin{center}
\tikzset{Tleft/.style={ellipse,draw=black, inner sep=0pt, minimum size=.9cc,
    line width=1pt,fill=orange!20}}
\tikzset{TRleft/.style={ellipse,draw=red, inner sep=0pt, minimum size=.9cc,
    line width=2pt,fill=orange!20}}
\tikzset{Tright/.style={ellipse,draw=black, inner sep=0pt, minimum size=.9cc,
    line width=1pt,fill=indigo!20}}
\tikzset{Red/.style={circle,draw=red, inner sep=0pt, minimum size=.9cc, line width=2pt}}
\tikzset{Blue/.style={circle,draw=blue, inner sep=0pt, minimum size=.9cc, line width=2pt}}
\tikzset{Green/.style={circle,draw=dg, inner sep=0pt, minimum size=.9cc, line width=2pt}}
\tikzset{Black/.style={circle,draw=black, inner sep=0pt, minimum size=.9cc, line width=1pt}}
\tikzset{TreeNode/.style={rectangle,draw=black, inner sep=0pt, minimum size=1.3cc, line width=1pt}}
\tikzset{BigTreeNode/.style={rectangle,draw=black, inner sep=0pt, minimum size=1.8cc, line width=1pt}}

\begin{tikzpicture}[scale=.09]
\node at (-38,50) {$L_{\leq 2\ell}$};
\node at (42,50) {$L_{> 2\ell}$};
\node [TreeNode] (T0) at(0,40)  {};

   \node at (T0) {$\forall \r$};
  \node [TreeNode] (T1) at (0,24) {$\exists \b$};
  \draw[line width=1pt,->,color=black] (T0) -- (T1);

\node at (56,42.25) {$>\ell$};
\node at (24,42.25) {$\geq\ell$};
\node at (-54,42.25) {$\leq \ell-1$};
\node [Tright] (1a1) at (74,40) {{\scriptsize \mx}};
\node [Red] (1amid) at (38,40) {{\scriptsize $\r$}};
\node [Tleft] (1a8) at (10,40) {{\scriptsize \mn}};
\node [Tright] (1b9) at (-10,40) {{\scriptsize \mx}};
\node [Red]   (1b2) at (-42,40) {{\scriptsize $\r$}};
\node [Tleft] (1b1) at (-66,40) {{\scriptsize \mn}};
\foreach \from/\to in {1a1/1amid,1amid/1a8,1b1/1b2,1b2/1b9}
\draw[-,line width=1pt,color=black] (\from) -- (\to);

\node at (-21,36.25) {$\leq\ell$};
\node [Tright] (1c9) at (-10,34) {{\scriptsize \mx}};
\node [Red]   (1c2) at (-32,34) {{\scriptsize $\r$}};
\node [Tleft] (1c1) at (-66,34) {{\scriptsize \mn}};
\foreach \from/\to in {1a1/1amid,1amid/1a8,1c1/1c2,1c2/1c9}
\draw[-,line width=1pt,color=black] (\from) -- (\to);

\node [Tright] (2b9) at (-10,24) {{\scriptsize \mx}};
\node [Red]   (2b2) at (-42,24) {{\scriptsize $\r$}};
\node [Blue]   (2b3) at (-54,24) {{\scriptsize $\b$}};
\node [Tleft] (2b1) at (-66,24) {{\scriptsize \mn}};

\node [Tright] (2c9) at (-10,18) {{\scriptsize \mx}};
\node [Red]   (2c2) at (-32,18) {{\scriptsize $\r$}};
\node [Blue]  (2c3) at (-21,18) {{\scriptsize $\b$}};
\node [Tleft] (2c1) at (-66,18) {{\scriptsize \mn}};
\foreach \from/\to in {2b1/2b3,2b3/2b2,2b2/2b9,2c1/2c2,2c2/2c3,2c3/2c9}
\draw[dotted,line width=1pt,color=black] (\from) -- (\to);
\end{tikzpicture}
\end{center}
\caption{The first round and a half of the $\textrm{MSL}_{\forall,k}(2\ell)$ game
  according to the $\mathsf{CMA}$ strategy.}

\label{path thm case2 fig}
\end{figure*}

\end{proof}

\cmawelldefined*

\begin{proof}
We first make a simple claim.

\begin{claim} \label{claim:simple}
Suppose $\cA$ and $\cB$ are two sets of pebbled structures that can be partitioned into $\cA = \cA_1 \cup \cA_2$ and $\cB = \cB_1 \cup \cB_2$, such that no pebbled structure in $\cA_i$ forms a matching pair with a pebbled structure in $\cB_j$, for $i \neq j$. Suppose that $\bS$ has a strategy $\cS_1$ to win the $(r+3)$-round MS-game on $(\cA_1, \cB_1)$ with pattern $\mathsf{pat}(\cS_1) = (Q_1, \ldots, Q_r, \forall, \exists, \forall)$, and also has a strategy $\cS_2$ to win the $(r+2)$-round MS-game on $(\cA_2, \cB_2)$ with pattern $\mathsf{pat}(\cS_2) = (Q_1, \ldots, Q_k, \forall, \forall)$. Then, $\bS$ has a strategy $\cS$ to win the $(r+3)$-round MS-game on $(\cA, \cB)$ satisfying $\mathsf{pat}(\cS) = \mathsf{pat}(\cS_1)$.
\end{claim}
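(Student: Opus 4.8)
The plan is to obtain $\cS$ as a direct application of the Generalized Parallel Play Lemma (Lemma~\ref{lem:genparallelplay}) with $k = 2$, taking the target pattern to be $P := \mathsf{pat}(\cS_1) = (Q_1, \ldots, Q_r, \forall, \exists, \forall)$, a sequence of length $r+3$. The partitions $\cA = \cA_1 \cup \cA_2$ and $\cB = \cB_1 \cup \cB_2$ are already supplied, and the winning strategies $\cS_1$ (on the $(r+3)$-round game $(\cA_1, \cB_1)$) and $\cS_2$ (on the $(r+2)$-round game $(\cA_2, \cB_2)$, whose pattern I read as $(Q_1, \ldots, Q_r, \forall, \forall)$ for the shared prefix to line up) are in hand. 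Since both sub-game lengths, $r+3$ and $r+2$, are at most the target length $r+3$, it suffices to verify the two hypotheses of Lemma~\ref{lem:genparallelplay}.

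For the first hypothesis I must show each $\mathsf{pat}(\cS_i)$ is a subsequence of $P$. This is immediate for $\cS_1$, whose pattern is literally $P$. The crux is that $\mathsf{pat}(\cS_2) = (Q_1, \ldots, Q_r, \forall, \forall)$ embeds into $P = (Q_1, \ldots, Q_r, \forall, \exists, \forall)$ as a subsequence. I would exhibit the embedding explicitly: align the shared prefix $Q_1, \ldots, Q_r$ with the first $r$ coordinates of $P$, match the first trailing $\forall$ of $\cS_2$ with the $\forall$ in position $r+1$ of $P$, and match the second trailing $\forall$ of $\cS_2$ with the final $\forall$ in position $r+3$, skipping over the intervening $\exists$ in position $r+2$. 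Concretely, by the extension mechanism in the remark following Lemma~\ref{lem:parallelplay}, $\bS$ runs $\cS_2$ in lockstep with $\cS_1$, inserting a single dummy existential move in round $r+2$ on the $(\cA_2, \cB_2)$ side so that $\cS_2$ is stretched to the full pattern $P$ without changing its outcome.

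For the second hypothesis I need that at the end of the parallel play no board in $\cA_i$ matches a board in $\cB_j$ for $i \neq j$. The hypothesis of the claim guarantees this at the \emph{start}, and I would argue it is preserved throughout play. The key observation is that any end-of-game board in $\cA_1$, restricted to the pebbles present at the start of these sub-games, is a copy of one of the original boards of $\cA_1$, and likewise any end-of-game board in $\cB_2$ restricts to an original board of $\cB_2$ (and symmetrically for $\cA_2$ versus $\cB_1$). A matching pair on the full pebble set would in particular force a matching pair on this initial sub-collection of pebbles; since no such initial cross-matching pair exists by assumption, none can arise at the end. This is exactly the "condition~2, once true, stays true" phenomenon noted after Lemma~\ref{lem:genparallelplay}.

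With both hypotheses verified, Lemma~\ref{lem:genparallelplay} yields a winning strategy $\cS$ for $\bS$ in the $(r+3)$-round MS game on $(\cA, \cB)$ with $\mathsf{pat}(\cS) = P = \mathsf{pat}(\cS_1)$, which is the desired conclusion. I expect the only genuinely non-routine step to be the subsequence embedding of $\mathsf{pat}(\cS_2)$ into $P$ via the $\forall\forall \hookrightarrow \forall\exists\forall$ pattern; this is precisely the feature that lets the shorter $(r+2)$-round game be synchronized with the longer $(r+3)$-round game, and it is the reason the four irreducible base-case patterns were chosen as they were. Everything else is bookkeeping.
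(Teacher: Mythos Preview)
Your proposal is correct and follows essentially the same approach as the paper: both invoke the Generalized Parallel Play Lemma with target pattern $P = \mathsf{pat}(\cS_1)$, embedding $\mathsf{pat}(\cS_2) = (Q_1,\ldots,Q_r,\forall,\forall)$ into $(Q_1,\ldots,Q_r,\forall,\exists,\forall)$ by inserting a dummy existential move in round $r+2$. Your added justification that the cross-matching condition at the start persists to the end is a helpful clarification the paper leaves implicit.
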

\begin{proof}[Proof of Claim \ref{claim:simple}]
To see this, note that for $\cS$, $\bS$ simply plays the two subgames $(\cA_1, \cB_1)$ and $(\cA_2, \cB_2)$ in parallel using Lemma \ref{lem:genparallelplay}, according to the common prefix of the pattern, for the first $r+1$ rounds. In round $r + 2$, he plays according to $\cS_1$ in $(\cA_1, \cB_1)$, and makes arbitrary dummy existential moves in $(\cA_2, \cB_2)$. In round $r + 3$, he then finishes both subgames according to the final universal move dictated by $\cS_1$ in $(\cA_1, \cB_1)$, and $\cS_2$ in $(\cA_2, \cB_2)$.
\end{proof}
Back to the proof of the main lemma. Assume inductively that the lemma holds for all MSP games up to length $\ell$. Assume additionally that the pattern of both subgames alternate, either ending in one or two universal quantifiers, and that the longer subgame (if there is one) takes at most one more round than the shorter one. We shall show that both the lemma and this assumption continue to hold for the game of length $\ell + 1$.

There are four cases depending on whether we are considering an $\exists$ or $\forall$ game, and whether $\ell$ is even or odd.

\medskip

\noindent \textbf{Case 1: $\exists$ game, $\ell$ odd.} The game under consideration is $\textrm{MSL}_{\exists, r}(\ell+1)$. Since $\ell + 1$ is even, we split into two subgames that are both $\textrm{MSL}_{\forall, r-1}((\ell+1)/2)$, which clearly have the same pattern.

\smallskip

\noindent \textbf{Case 2: $\exists$ game, $\ell$ even.} The game under consideration is $\textrm{MSL}_{\exists, r}(\ell+1)$. Since $\ell+1$ is odd, we split into the subgames $\textrm{MSL}_{\forall, r-1}(\ell/2)$ and $\textrm{MSL}_{\forall, r-1}(\ell/2 + 1)$. If the subgames have strategies with the same number of moves, then they both must have the pattern $(\forall, \exists, \ldots, \forall, \exists, \forall)$ or $(\forall, \exists, \ldots, \forall, \forall)$. In this case, the lemma holds. Otherwise, the strategy for the $\textrm{MSL}_{\forall, r - 1}(\ell/2 + 1)$ game has one more round than the other game (by assumption), and then the two patterns for the subgames can line up in one of the two following ways (in each case with the longer pattern on top):
\begin{align*}
    &\qquad\qquad (\forall, \exists, \cdots, \forall, \exists, \forall) && (\forall, \exists, \cdots, \forall, \exists, \forall, \forall) \\
    &\qquad\qquad (\forall, \exists, \cdots, \forall, \forall) && (\forall, \exists, \cdots, \forall, \exists, \forall)
\end{align*}
In the first case, we are done by Claim \ref{claim:simple}. In the second case, $\bS$ just plays an arbitrary universal move at the end of the second subgame, and the lemma once again follows.

\medskip

\noindent \textbf{Case 3: $\forall$ game, $\ell$ odd.} The game under consideration is $\textrm{MSL}_{\forall, r}(\ell+1)$. Since $\ell+1$ is even, we split into the subgames $\textrm{MSL}_{\exists, r-1}((\ell+1)/2)$ and $\textrm{MSL}_{\exists, r-1}((\ell+1)/2 - 1)$. The analysis is now very similar to Case 2. If the subgames have strategies with the same number of moves, then they both have the pattern $(\exists, \forall, \ldots, \exists, \forall)$ or $(\exists, \forall, \ldots, \exists, \forall, \forall)$, and the lemma follows. Otherwise, we have the possibilities:
\begin{align*}
    &\qquad\qquad (\exists, \forall, \cdots, \exists, \forall, \forall) && (\exists, \forall, \cdots, \exists, \forall, \exists, \forall) \\
    &\qquad\qquad (\exists, \forall, \cdots, \exists, \forall) && (\exists, \forall, \cdots, \exists, \forall, \forall)
\end{align*}
And the analysis of these two cases is just like in Case 2, whereby the lemma follows.

\smallskip

\noindent \textbf{Case 4:  $\forall$ game, $\ell$ even.} The game under consideration is $\textrm{MSL}_{\forall, r}(\ell+1)$. Since $\ell+1$ is odd, we split into two subgames of $\textrm{MSL}_{\exists, r-1}(\ell/2)$, which have the same pattern.

\medskip

The requisite alternation pattern is clearly maintained, from the definition of the strategy. The assumption that the longer subgame takes at most one more round than the shorter subgame follows by noting that the lengths of the patterns are monotonic in $\ell$ and never increase by more than one.
\end{proof}

\begin{restatable}{proposition}{qpowersoftwo}\label{prop:q*-powers-of-two}
For all $k \geq 1$, we have $q^*_\forall(2^k) = q^*_\forall(2^{k-1}) + 1$. Similarly, for all $k \geq 2$, we have $q^*_\exists(2^k) = q^*_\exists(2^{k-1}) + 1$.
\end{restatable}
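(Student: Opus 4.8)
The plan is to read the two relevant recurrences off Lemma \ref{lem:q*-vals}, specialize them to powers of two, and run a short coupled induction. Writing $a_k := q^*_\forall(2^k)$ and $b_k := q^*_\exists(2^k)$, the rules $q^*_\forall(2\ell) = q^*_\exists(\ell)+1$ (valid for $\ell \geq 2$) and $q^*_\exists(2\ell) = q^*_\forall(\ell)+1$ (valid for $\ell \geq 1$) become
\begin{align*}
a_k &= b_{k-1} + 1 \qquad (k \geq 2), \\
b_k &= a_{k-1} + 1 \qquad (k \geq 1).
\end{align*}
The base values supplied by Lemma \ref{lem:q*-vals} (equivalently, Table \ref{q*-table}) are $a_0 = 1$, $b_0 = 2$, and $a_1 = 2$; note also that $b_1 = a_0 + 1 = 2$.

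First I would establish the auxiliary fact that $a_k = b_k$ for every $k \geq 1$, by induction on $k$. The base case $k = 1$ holds since $a_1 = b_1 = 2$. For the inductive step with $k \geq 2$, both displayed recurrences apply, and so $a_k = b_{k-1} + 1 = a_{k-1} + 1 = b_k$, where the middle equality is the inductive hypothesis applied at index $k-1 \geq 1$. This proves the auxiliary claim.

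With $a_k = b_k$ in hand, both statements follow at once. For the universal case, the recurrence gives $a_k = b_{k-1} + 1 = a_{k-1} + 1$ for every $k \geq 2$ (using the auxiliary fact at index $k-1 \geq 1$), while the remaining case $k = 1$ is checked directly from the base values, $a_1 = 2 = a_0 + 1$; together these yield $q^*_\forall(2^k) = q^*_\forall(2^{k-1}) + 1$ for all $k \geq 1$. For the existential case, the recurrence gives $b_k = a_{k-1} + 1 = b_{k-1} + 1$ for every $k \geq 2$ (again invoking $a_{k-1} = b_{k-1}$), which is exactly $q^*_\exists(2^k) = q^*_\exists(2^{k-1}) + 1$ for $k \geq 2$.

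The only subtlety — and the reason the universal statement must single out $k = 1$ — is the side condition $\ell \geq 2$ on the rule $q^*_\forall(2\ell) = q^*_\exists(\ell)+1$: it does not apply when $\ell = 1$, so the increment $a_1 = a_0 + 1$ cannot be produced by the recurrence and must instead be read off directly from the base case $q^*_\forall(2) = 2$, $q^*_\forall(1) = 1$. No analogous gap arises for the existential statement, which is only asserted for $k \geq 2$, so there the recurrence covers every case uniformly. I expect this bookkeeping of validity ranges to be the only place where care is genuinely needed; the rest is a two-line induction.
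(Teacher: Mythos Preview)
Your proof is correct. You specialize the cross-type one-step recurrences from Lemma~\ref{lem:q*-vals} to powers of two and run a coupled induction, first establishing the auxiliary equality $q^*_\forall(2^k) = q^*_\exists(2^k)$ for $k \geq 1$ and then reading off both increment statements; the bookkeeping around the side condition $\ell \geq 2$ at $k=1$ is handled correctly.

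The paper takes a different route: it first derives a \emph{two-step same-type} recurrence (Claim~\ref{claim:q*inductionseparate}), namely $q^*_\forall(\ell) = 2 + q^*_\forall(\lfloor(\ell+2)/4\rfloor)$ and $q^*_\exists(\ell) = 2 + q^*_\exists(\lfloor(\ell+1)/4\rfloor)$ for general $\ell$, by composing the rules of Lemma~\ref{lem:q*-vals} once, and then specializes this to $\ell = 2^k$ to jump from $2^k$ directly to $2^{k-2}$. Your argument is more direct for this particular proposition, avoiding the intermediate claim entirely. The paper's detour, however, buys a general recurrence valid for all $\ell$ (not just powers of two), which it reuses later in the proof of Theorem~\ref{thm:alternation1}; your auxiliary fact $q^*_\forall(2^k) = q^*_\exists(2^k)$ is specific to powers of two and not similarly reusable.
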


\begin{proof}
We start by showing a claim.

\begin{claim}\label{claim:q*inductionseparate} 
We have $q^*_\exists(\ell ) = 2 + q^*_\exists(\lfloor(\ell +1)/4\rfloor)$ for all $\ell \geq 5$. Similarly, we have $q^*_\forall(\ell ) = 2 + q^*_\forall(\lfloor(\ell  + 2)/4\rfloor)$ for all $\ell  \geq 3$.
\end{claim}

\begin{proof}
It follows directly from Lemma \ref{lem:q*-vals} that $q^*_\exists(\ell) = q^*_\forall(\lceil\ell/2\rceil) + 1$ for $\ell \geq 2$. Similarly, $q^*_\forall(\ell) = q^*_\exists(\lfloor\ell/2\rfloor) + 1$ for $\ell \geq 3$. It follows that for $\ell \geq 1$:
    \begin{equation}
        q^*_\exists(4\ell + 1) = q^*_\exists(4\ell + 2) = q^*_\forall(2\ell + 1) + 1 = q^*_\exists(\ell) + 2. \label{enum1}
    \end{equation}
Similarly, for $\ell \geq 2$:
    \begin{equation}
        q^*_\exists(4\ell - 1) = q^*_\exists(4\ell) = q^*_\forall(2\ell) + 1 = q^*_\exists(\ell) + 2. \label{enum2}
    \end{equation}
Combining these gives us the result (and the associated ranges).
\end{proof}
We now go back to the proof of Proposition \ref{prop:q*-powers-of-two}. We prove both statements by induction on $k$. For the first, we have $q^*_\forall(1) = 1$ and $q^*_\forall(2) = 2$, establishing the base case. Inductively, by Claim \ref{claim:q*inductionseparate}, we have:
\begin{equation*}
    q^*_\forall(2^k) = q^*_\forall\left(\left\lfloor\frac{2^k  + 2}{4}\right\rfloor\right) + 2 = q^*_\forall(2^{k-2}) + 2 = q^*_\forall(2^{k-1}) + 1,
\end{equation*}
with the last equality following from the induction hypothesis. Similarly, for the second part, we start with $q^*_\exists(2) = 2$ and $q^*_\exists(4) = 3$, establishing the base case. Inductively, by Claim \ref{claim:q*inductionseparate}, we have:
\begin{equation*}
    q^*_\exists(2^k) = q^*_\exists\left(\left\lfloor\frac{2^k  + 1}{4}\right\rfloor\right) + 2 = q^*_\exists(2^{k-2}) + 2 = q^*_\exists(2^{k-1}) + 1.\qedhere
\end{equation*}
\end{proof}

\begin{remark}\label{rem:recursive++}
Lemma \ref{lem:q*-vals} and Claim \ref{claim:q*inductionseparate} are more than just recursive expressions for $q^*_\exists(\ell)$ and $q^*_\forall(\ell)$. By virtue of Lemma \ref{lem:cma-is-well-defined}, we can now read off a quantifier prefix establishing $q^*_\exists(2\ell)$ in terms of $q^*_\forall(\ell)$, and analogously for the other expressions.
\end{remark}

\alternationone*

\begin{proof}
The theorem is certainly true for small values of $\ell$; e.g., when $\ell = 1$, $q^*(1) = q^*_\forall(1)$, and the sentence corresponding to that strategy has quantifier prefix $\forall$. Similarly, when $\ell = 2$, $q^*(1) = q^*_\exists(1)$, and the sentence corresponding to that strategy has quantifier prefix $\exists\forall$. The theorem can be verified for $\ell \leq 5$ simply by referring to Table \ref{q*-table}. We now proceed by induction.

Suppose $\ell$ is even, say $\ell = 2\ell'$. There are three cases:
\begin{itemize}
    \item If $q^*_\exists(\ell) < q^*_\forall(\ell)$, this means by Lemma \ref{lem:q*-vals} that $q^*_\forall(\ell') < q^*_\exists(\ell')$. So, by induction, there is a separating sentence $\sigma_{\ell'}$ with quantifier prefix $\forall\exists\ldots\forall$ consisting of $q^*_\forall(\ell')$ alternating quantifiers. But by Lemma \ref{lem:q*-vals} and Remark \ref{rem:recursive++}, we can obtain a separating sentence $\sigma_\ell$ with quantifier prefix $\exists\forall\ldots\forall$ consisting of $q^*_\exists(\ell)$ alternating quantifiers.
    \item If $q^*_\forall(\ell) < q^*_\exists(\ell)$, this means by Lemma \ref{lem:q*-vals} that $q^*_\exists(\ell') < q^*_\forall(\ell')$. Again, by induction, there is a separating sentence $\sigma_{\ell'}$ with quantifier prefix $\exists\forall\ldots\forall$ consisting of $q^*_\exists(\ell')$ alternating quantifiers. By Lemma \ref{lem:q*-vals}, we can obtain a separating sentence $\sigma_\ell$ with quantifier prefix $\forall\exists\ldots\forall$ consisting of $q^*_\forall(\ell)$ alternating quantifiers.
    \item If $q^*_\forall(\ell) = q^*_\exists(\ell)$, this means by Lemma \ref{lem:q*-vals} that $q^*_\exists(\ell') = q^*_\forall(\ell')$. Again, by induction, there is a separating sentence $\sigma_{\ell'}$ consisting of $q^*(\ell')$ alternating quantifiers ending with a $\forall$. By Lemma \ref{lem:q*-vals}, we can obtain a separating sentence $\sigma_\ell$ by prepending a quantifier to $\sigma_{\ell'}$ maintaining alternation. This would still contain $q^*(\ell)$ alternating quantifiers.
\end{itemize}

Now suppose $\ell$ is odd, say $\ell = 2\ell' + 1$. There are three cases:
\begin{itemize}
    \item If $q^*_\exists(\ell) < q^*_\forall(\ell)$, this means by Lemma \ref{lem:q*-vals} that $q^*_\forall(\ell' + 1) < q^*_\exists(\ell') \leq q^*_\exists(\ell' + 1)$. By induction, there is a separating sentence $\sigma_{\ell' + 1}$ with quantifier prefix $\forall\exists\ldots\forall$ consisting of $q^*_\forall(\ell' + 1)$ alternating quantifiers. Then we can prepend a $\exists$ to obtain a separating sentence $\sigma_\ell$ with quantifier prefix consisting of $q^*_\exists(\ell)$ alternating quantifiers.
    \item If $q^*_\forall(\ell) < q^*_\exists(\ell)$, this means by Lemma \ref{lem:q*-vals} that $q^*_\exists(\ell') < q^*_\forall(\ell' + 1)$. If $q^*_\exists(\ell') < q^*_\forall(\ell')$, we are again done by induction. If $q^*_\exists(\ell') = q^*_\forall(\ell')$, this means $q^*_\forall(\ell' + 1) > q^*_\forall(\ell')$, implying by Claim \ref{claim:q*inductionseparate} that $\ell' \equiv 1\pmod 4$. But then $q^*_\exists(\ell' + 1) = q^*_\exists(\ell') < q^*_\forall(\ell' + 1)$. Therefore, $q^*(\ell') = q^*(\ell' + 1) = q^*_\exists(\ell' + 1)$, and so any alternating quantifier prefix with $q^*(\ell')$ quantifiers ending with a $\forall$ must start with a $\exists$. Since by induction, $\sigma_{\ell'}$ has $q^*(\ell')$ alternating quantifiers ending with a $\forall$, it must also start with a $\exists$. Now again, we are done by prepending a $\forall$, by induction.
    \item If $q^*_\forall(\ell) = q^*_\exists(\ell)$, this means by Lemma \ref{lem:q*-vals} that $q^*_\exists(\ell') = q^*_\forall(\ell' + 1)$. Again, by induction, if $q^*_\forall(\ell' + 1) < q^*_\exists(\ell' + 1)$, we are done. If $q^*_\forall(\ell' + 1) = q^*_\exists(\ell' + 1)$, however, we have to be a little more careful. In that situation, if $q^*_\exists(\ell') = q^*_\forall(\ell')$, then $q^*(\ell') = q^*(\ell' + 1)$, and then the sentences $\sigma_{\ell'}$ and $\sigma_{\ell' + 1}$ have the same quantifier prefix. Depending on the leading quantifier in that prefix, we can inductively use either $q^*_\forall(\ell)$ or $q^*_\exists(\ell)$. Otherwise, $q^*_\exists(\ell') > q^*_\forall(\ell')$. But then, $\sigma_{\ell'}$ starts with a $\forall$, and therefore, the sentence $\sigma'_{\ell'}$ that is used by $\bS$ in the $q^*_\exists(\ell')$ strategy has $q^*(\ell') + 1 = q^*(\ell' + 1)$ quantifiers and starts with an $\exists$. Now, by induction, we can obtain a sentence $\sigma_\ell$ using $q^*_\forall(\ell)$ that calls $\sigma'_{\ell'}$, and has $1 + q^*(\ell' + 1)$ alternating quantifiers ending with a $\forall$. Since $q^*(\ell' + 1) = q^*_\forall(\ell' + 1) = q^*_\exists(\ell') = q^*(\ell) - 1$, we are done.\qedhere
\end{itemize}
\end{proof}

\alternationtwo*

\begin{proof}
As in the proof sketch, when $\ell = 1$, the theorem follows directly from Theorem \ref{thm:alternation1}. So suppose $\ell > 1$.

Again let $\cA = \{L_\ell\}$, and let $\cB = \cB_1 \sqcup \cB_2$, where $\cB_1 = L_{\leq \ell - 1}$, and $\cB_2 = L_{> \ell}$. By Theorem \ref{thm:alternation1}, there is a sentence $\sigma_{\leq \ell}$ that is true for $L_{\leq \ell}$ and false for $L_{> \ell} = \cB_2$, with the given alternating quantifier prefix, with $q^*(\ell)$ quantifiers. Similarly, there is a sentence $\sigma_{\leq \ell - 1}$ which is true for $L_{\leq \ell - 1} = \cB_1$ and false for $L_{\geq \ell}$, with the given alternating quantifier prefix, with $q^*(\ell - 1) \leq q^*(\ell)$ quantifiers. Assume these two sentences both have $q^*(\ell)$ quantifiers (possibly by prepending a dummy leading quantifier to $\sigma_{\leq \ell - 1}$). Let $\sigma_2 := \sigma_{\leq \ell}$ and $\sigma_1 := \lnot\sigma_{\leq \ell - 1}$. Note that $\sigma_1$ separates $(\cA, \cB_1)$ (say with strategy $\cS_1$), and $\sigma_2$ separates $(\cA, \cB_2)$ (say with strategy $\cS_2$), and so $\sigma_1 \land \sigma_2$ separates $(\cA, \cB)$. Furthermore, $\sigma_1$ and $\sigma_2$ both have alternating quantifier prefixes of the same length $q^*(\ell)$, but they are complements of each other: $\sigma_2$ ends in a $\forall$, and $\sigma_1$ ends in a $\exists$.

Consider the MS game on $(\cA, \cB)$. We will now give a $\bS$ strategy. $\bS$ always starts with a universal move. Exactly one of the sentences $\sigma_1$ and $\sigma_2$ begins with a $\forall$.

If the sentence with a leading $\forall$ is $\sigma_1$, $\bS$ plays his round $1$ moves, playing pebble $\r$ on the element $\mathsf{max}$ in all boards in $\cB_2$, and according to the strategy $\cS_1$ in all boards in $\cB_1$. Note that, by virtue of $\cS_1$ being the $\mathsf{CMA}$ strategy, $\bS$ never plays $\r$ on the element $\mathsf{max}$ in any board in $\cB_1$. Therefore, every board in $\cB_1$ satisfies $\r \neq \mathsf{max}$, whereas every board in $\cB_2$ satisfies $\r = \mathsf{max}$. Once $\bD$ has responded obliviously, partition $\cA$ as $\cA_1 \sqcup \cA_2$ such that every board in $\cA_1$ satisfies $\r \neq \mathsf{max}$, whereas every board in $\cA_2$ satisfies $\r = \mathsf{max}$ as well.

Now, the sub-games $(\cA_1, \cB_1)$ and $(\cA_2, \cB_2)$ can be played in parallel; there will be no matching pair from $\cA_i$ and $\cB_j$ for $i \neq j$; furthermore, the two strategies both have patterns that are subsequences of the sequence $(\forall, \exists, \ldots, \exists, \forall)$, which has length $q^*(\ell) + 1$ or $q^*(\ell) + 2$ depending on the parity of $q^*(\ell)$. Therefore, by Lemma \ref{lem:genparallelplay}, the result follows.
\end{proof}

\section{Technical Content from Section \ref{sec:strings}}\label{apx:strings}

\stirling*

\begin{proof}
    Since $\lim_{n\to\infty}f(n) = \infty$, there is some $N(t)$ such that $f(n) \geq t^{et}$ for all $n \geq N(t)$, where $e$ is the base of the natural logarithm. By Stirling's formula, we have:
    \begin{equation*}
    \lceil \log_t(f(n)) \rceil ! \geq \left(\left(\frac{\lceil \log_t(t^{et}) \rceil}{e}\right)^{\lceil \log_t(f(n)) \rceil}\right)
    \geq \left(t^{\lceil \log_t(f(n)) \rceil}\right)
    \geq f(n),
\end{equation*}
where we have used $f(n) \geq t^{et}$ in the first inequality.
\end{proof}

\onevsalln*

\begin{proof}
Fix any $\varepsilon > 0$, and fix any integer $t \geq 2^{1/\varepsilon}$. By Lemma \ref{lem:stirling}, we know there is some integer $N(t)$, such that for all $n \geq N(t)$, $\lceil\log_t(n)\rceil! \geq n$. For any such $n$, fix an arbitrary $w \in \{0, 1\}^n$, and let $\cA = \{w\}$, and $\cB = \{0, 1\}^n - \{w\}$.

Consider the MS game on $(\cA, \cB)$. Every $w' \in \cB$ differs from $w$ in at least one bit. In round $1$, $\bS$ plays a universal move, placing a pebble on each $w' \in \cB$ on an index that disagrees with $w$ at that index. Assume $\bD$ responds obliviously, so that there are $n$ resulting pebbled strings in $\cA$. For the next $\lceil\log_t(n)\rceil$ rounds, $\bS$ plays only existential moves, placing the $\lceil\log_t(n)\rceil$ pebbles in distinct permutations on the $n$ strings in $\cA$, creating $n$ distinct isomorphism classes. Note that he can do so, since $\lceil\log_t(n)\rceil \leq n$ as long as $t > 2$. Once this is done, we can discard all structures in $\cB$ that do not have one of these $n$ isomorphism classes using Observation \ref{obs:discard}. For each surviving isomorphism class $\mathfrak{C}$, there is exactly one string $w_\mathfrak{C}$ in $\cA$ in that isomorphism class; consider all strings in that same isomorphism class $\mathfrak{C}$ in $\cB$. We can discard each of these strings with its round $1$ pebble at the same index as $w_\mathfrak{C}$, using Observation \ref{obs:discard}. We are now left with many strings in $\cB$, each with its round $1$ pebble at a different index from $w_\mathfrak{C}$. Note that $\bS$ can henceforth view this as a game simply about lengths, and can employ any \textbf{one-vs-all} linear order strategy. Once we do this for each such isomorphism class $\mathfrak{C}$, we have reduced the entire game to $n$ instances of \textbf{one-vs-all} sub-games on linear orders, with no two structures from different sub-games in the same isomorphism class.

By Lemma \ref{lem:genparallelplay} and Theorem \ref{thm:alternation2}, $\bS$ can now win these parallel games in at most $\log(n) + 4$ moves. This gives a total number of moves that is at most:
\begin{equation*}
\lceil\log_t(n)\rceil + \log(n) + 5 \leq \frac{\log(n)}{\log(t)} + \log(n) + 6 = \log(n)\left(1 + \frac{1}{\log(t)}\right) + 6 \leq (1 + \varepsilon)\log(n) + 6.
\end{equation*}
Note that $N(t)$ depends only on $t$, which in turn depends only on $\varepsilon$. So when $n < N(t)$, the number of quantifiers can be absorbed directly into the $O_\varepsilon(1)$ additive term. This gives us the desired result.
\end{proof}

\polymanyvsall*

\begin{proof}
Assume $n > 2$, and pick a sufficiently large constant $k$ such that $f(n) \leq  n^k$ for all $n$. This is possible because $f(n)$ is polynomially bounded. Next, pick $\varepsilon > 0$. Let $t \geq 4$ be a large enough integer so that $t \geq 2^{2k/\varepsilon}$. By Lemma \ref{lem:stirling}, we know there is some integer $N(t)$, such that for all $n \geq N(t)$, we have $\lceil\log_t(f(n))\rceil! \geq f(n)$. So $\bS$ once again starts by playing $\lceil\log_t(f(n))\rceil$ existential moves in the first few rounds, separating the $f(n)$ strings in $\cA$ into distinct isomorphism classes by using different permutations. Note that he can do so since $t \geq 4$, and so $\lceil\log_t(f(n))\rceil \leq \log_4(f(n)) + 1 \leq \log_4(2^n) + 1  = \log_4(4^{n/2}) + 1 = n/2 + 1 < n$; with the last inequality holding because we have assumed $n > 2$. Thus, there is enough space to put the pebbles on the strings. Now, as in the proof of Theorem \ref{thm:one-vs-all-n}, $\bS$ has reduced the games to $f(n)$ parallel \textbf{one-vs-all} string separation instances.

Given $\varepsilon$, using the proof of Theorem \ref{thm:one-vs-all-n}, we know that there is some $N'(\varepsilon)$ such that for all $n \geq N'(\varepsilon)$, $\bS$ has a winning strategy on each of these instances using $(1 + \varepsilon/2)\log(n) + 6$ rounds (we have used $\varepsilon/2$ instead of $\varepsilon$ here), using the \emph{same} pattern. Therefore, for all $n \geq \max(N(t), N'(\varepsilon))$, $\bS$ can use Lemma \ref{lem:genparallelplay} to win the entire game. The total number of rounds used by $\bS$ is:
\begin{align*}
&\lceil\log_t(f(n))\rceil + (1 + \varepsilon/2)\log(n) + 6 \\
&\qquad \leq (1 + \varepsilon/2)\log(n) + k\log_t(n) + 7 \\
&\qquad \leq (1 + \varepsilon/2)\log(n) + \frac{k\log(n)}{2k/\varepsilon} + 7 \\
&\qquad = (1 + \varepsilon)\log(n) + 7.
\end{align*}
Here, $N(t)$ depends only on $t$, which depends only on $k$ and $\varepsilon$, whereas $N'(\varepsilon)$ depends only on $\varepsilon$. So when $n < \max(N(t), N'(\varepsilon))$, the number of quantifiers can be absorbed into an additive term that depends only on $k$ and $\varepsilon$, giving us the additive $O_{k, \varepsilon}(1)$ term desired.
\end{proof}

\anyvsanyupper*

\begin{proof}
We begin with a claim.

\begin{claim}\label{claim:realr}
For every real number $r > 2$, there is some $N_r \in \mathbb{N}$ such that for all $n \geq N_r$, we have $m := \lceil n/\log_r(n)\rceil$ satisfies:
\begin{itemize}
    \item $m \leq n$;
    \item $m! \geq 2^n$.
\end{itemize}
\end{claim}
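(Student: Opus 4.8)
The plan is to verify the two bullets separately, with the factorial bound being the substantive one. Throughout, recall that $\log$ denotes the base-$2$ logarithm, so that $\log_r(n) = \log(n)/\log(r)$, and crucially, since $r > 2$, we have $\log(r) > 1$. This strict inequality is the source of all the slack we will need.

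For the first bullet, I would use the elementary fact that $\lceil x\rceil \le n$ holds for an integer $n$ exactly when $x \le n$. Thus $m \le n$ is equivalent to $n/\log_r(n) \le n$, i.e., to $\log_r(n) \ge 1$, i.e., to $n \ge r$. So the first condition holds for every $n \ge r$, and this part is immediate.

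For the second bullet the tool is the standard Stirling lower bound $m! \ge (m/e)^m$, which gives $\log(m!) \ge m\bigl(\log(m) - \log(e)\bigr)$. First I would bound $m$ below by dropping the ceiling, $m \ge n/\log_r(n) = n\log(r)/\log(n)$, and then take logarithms to get $\log(m) \ge \log(n) - \log\log(n) + \log\log(r)$, so that $\log(m) - \log(e) \ge \log(n) - \log\log(n) + C$ with the constant $C := \log\log(r) - \log(e)$. Since both factors $m$ and $\log(m) - \log(e)$ are positive for large $n$, substituting both estimates into the Stirling bound yields
\begin{equation*}
\log(m!) \;\ge\; \frac{n\log(r)}{\log(n)}\bigl(\log(n) - \log\log(n) + C\bigr) \;=\; n\log(r)\left(1 - \frac{\log\log(n) - C}{\log(n)}\right).
\end{equation*}
Now I would exploit the slack $\log(r) > 1$: because $\frac{\log\log(n) - C}{\log(n)} \to 0$ as $n \to \infty$, I can choose a threshold beyond which this quantity stays below the fixed positive constant $(\log(r) - 1)/\log(r)$. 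For such $n$ the parenthesized factor exceeds $1/\log(r)$, whence $\log(m!) > n$ and therefore $m! > 2^n$. Setting $N_r$ to be the maximum of $\lceil r\rceil$ and this threshold makes both bullets hold simultaneously.

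The main obstacle is more bookkeeping than genuine difficulty: it is controlling the lower-order term $\log\log(n)/\log(n)$ against the constant gap coming from $\log(r) > 1$, and making sure the Stirling factors are positive before multiplying the two one-sided bounds together. The essential conceptual point, which the calculation merely confirms, is that choosing $m \approx n/\log_r(n)$ pebbles produces $m!$ distinct permutations, and since $r > 2$ this factorial comfortably exceeds $2^n$ — exactly the number of isomorphism classes needed to give every string in $\cA$ its own class in the preprocessing phase of Theorem~\ref{thm:all-vs-all upper bound}.
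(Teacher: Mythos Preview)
Your proof is correct and follows essentially the same approach as the paper: Stirling's lower bound $m! \ge (m/e)^m$ followed by taking logarithms and absorbing the lower-order $\log\log(n)/\log(n)$ term using the slack $\log(r) > 1$. Your handling of the first bullet (via $n \ge r \Leftrightarrow \log_r(n) \ge 1$) is in fact slightly cleaner than the paper's, which uses the cruder threshold $n > r^2$.
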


\begin{proof}[Proof of Claim \ref{claim:realr}]
We first note that as long as $n > r^2$, we have $m < \lceil n/2\rceil \leq n$. On the other hand, by Stirling's approximation, we have:
\begin{equation*}
m! = \left\lceil \frac{n}{\log_r(n)} \right\rceil! > \left( \frac{n}{e\log_r(n)} \right)^{\frac{n}{\log_r(n)}}.
\end{equation*}
We wish to show that the right hand side of this equation is at least $2^n$. Therefore, taking base-$2$ logarithms, we wish to show that:
\begin{equation*} 
\frac{n}{\log_r(n)}\cdot\left(\log(n) - \log(e\log_r(n)) \right) \geq n, \text{~~ i.e., ~~}\log(n) - \log(e) - \log\log_r(n) \geq \log_r(n).
\end{equation*}
Equivalently, we need to show that:
\begin{equation*} 
    \log(n) \geq \frac{\log(n)}{\log(r)} + \log\log_r(n)  + \log(e),
\end{equation*}
or in other words:
\begin{equation*}
\log(n)\left(1 - \frac{1}{\log(r)}\right) \geq \log\log(n) - \log\log(r) + \log(e).
\end{equation*}
Because $r > 2$, we have $\log(r) > 1$, and so the left hand side above grows linearly in $\log(n)$, whereas the right hand side grows logarithmically in $\log(n)$. Hence, there is some integer $N'_r$ such that for all $n \geq N'_r$, the left hand side dominates. Therefore, setting $N_r = \max(r^2 + 1, N'_r)$ yields the result.
\end{proof}
Back to the proof of Theorem \ref{thm:all-vs-all upper bound}. The idea once again will be for $\bS$ to play enough ``preprocessing'' existential moves at the start, to give each string in $\cA$ its own isomorphism class.

Given $\varepsilon > 0$, we first choose $r > 2$ small enough that $\log(r) < 1 + \varepsilon/2$. Then, $\bS$ starts by playing the preprocessing existential moves for $m := \left\lceil n/\log_r(n)\right\rceil$ rounds as described above. By Claim \ref{claim:realr}, this is possible, and ends up with $\bS$ splitting the original instance into $|\cA|$ parallel \textbf{one-vs-all} sub-games.

Now, by the arguments in Theorem \ref{thm:one-vs-all-n}, there is some $N'(\varepsilon)$ such that for $n \geq N(\varepsilon)$, $\bS$ has a winning strategy for each of these sub-games in $(1 + \varepsilon/2)\log(n) + 6$ further rounds, using the same pattern. Therefore, using Lemma \ref{lem:genparallelplay}, he can win the entire instance with the same number of rounds by playing in parallel. The total number of rounds needed, therefore, is:
\begin{align*}
    &m + (1 + \varepsilon/2)\log(n) + 6 \\
    &\qquad \leq \frac{n}{\log(n)}\cdot\log(r) + \left(1 + \frac{\varepsilon}{2}\right)\log(n) + 7 \\
    &\qquad \leq \left(1 + \frac{\varepsilon}{2}\right)\left(\frac{n}{\log(n)} + \log(n)\right) + 7,
\end{align*}
where we have used the fact that $\log(r) < 1 + \varepsilon/2$. Now, since $n/\log(n) = \omega(\log(n))$, there is some $N''(\varepsilon)$ such that for all $n \geq N''(\varepsilon)$, we have:
\begin{equation*}
    (1 + \varepsilon/2)\log(n) < (\varepsilon/2)\frac{n}{\log(n)}.
\end{equation*}
In particular, the number of rounds becomes further bounded as:
\begin{align*}
    \left(1 + \frac{\varepsilon}{2}\right)\left(\frac{n}{\log(n)} + \log(n)\right) + 7 &= \left(1 + \frac{\varepsilon}{2}\right)\frac{n}{\log(n)} + \left(1 + \frac{\varepsilon}{2}\right)\log(n) + 7 \\
    &< \left(1 + \frac{\varepsilon}{2}\right)\frac{n}{\log(n)} + \frac{\varepsilon}{2}\cdot\frac{n}{\log(n)} + 7 \\
    &= (1 + \varepsilon)\frac{n}{\log(n)} + 7.
\end{align*}
Hence, as long as $n \geq \max(N_r, N'(\varepsilon), N''(\varepsilon))$, all of the above conditions are satisfied, and the strategy is well-defined and finishes with $(1 + \varepsilon)\frac{n}{\log(n)} + 7$ rounds. Note that each of $N'(\varepsilon)$ and $N''(\varepsilon)$ depends only on $\varepsilon$, and $N(r)$ depends on $r$, which depends only on $\varepsilon$ as well. Therefore, for $n < \max(N_r, N'(\varepsilon), N''(\varepsilon))$, the number of quantifiers can be absorbed into the additive $O_\varepsilon(1)$ term.
\end{proof}

\anyvsanylower*

\begin{proof}
Take $n$ to be sufficiently large, and suppose $k$ (as a function of $n$) is the minimum number of quantifiers that is sufficient to separate every pair of disjoint sets of $n$-bit strings. We already know $k = o(n)$ from Theorem \ref{thm:all-vs-all upper bound}, and also $k \geq \log(n)$ from Proposition \ref{prop:one-vs-one-n}. Note that this means:
\begin{equation}\label{eqHelpful}
    \log(n) > \log(k) + 2/k.
\end{equation}
We wish to show that $k \geq n/\log(n)$. To this end, consider the number of pairwise inequivalent sentences that can be written with $k$ or fewer quantifiers. Assume any such sentence is written in prenex form. It must start with a quantifier prefix of length at most $k$, followed by its quantifier-free part, which can be written as a disjunction of types. The number of such quantifier prefixes is at most $\sum_{i = 0}^k2^i \leq 2^{k+1}$. Any type with $k$ or fewer variables can be completely specified by fixing the relative ordering of those variables (requiring at most $k$ occurrences of the variables, using transitivity of the $\leq$ relation), and fixing each of them to be $0$ or $1$ using the appropriate unary predicate (requiring another at most $k$ occurrences). Therefore, the total number of such types is at most $k!\cdot 2^k$. Since $k! \leq (k/2)^k$ for $k \geq 6$, the total number of types is bounded above by $(k/2)^k\cdot 2^k = 2^{k\log(k)}$. Any subset of types can be in the disjunction, leading to the number of quantifier-free parts being at most $2^{2^{k\log(k)}}$. This puts the total number of pairwise inequivalent formulas using $k$ quantifiers to be at most $2^k\cdot 2^{2^{k\log(k)}}$.
    
Now, consider an instance $(\cA, \{0, 1\}^n - \cA)$, where $\cA$ is a nonempty strict subset of the $n$-bit strings. Observe that any two distinct such instances \emph{must} require inequivalent sentences to separate them. Therefore, the number of pairwise inequivalent sentences we require in order to be assured of solving the problem is at least the number of such instances, which is $2^{2^n} - 2$, where we subtract $2$ to ensure there is at least one string on either side of each such instance. It follows that we need $2^k\cdot 2^{2^{k\log(k)}}  \geq 2^{2^n} - 2 \geq 2^{2^n - 1}$, i.e.:
\begin{equation}\label{k-need}
    k + 2^{k\log(k)} \geq 2^n - 1.
\end{equation}
But if $k < n/\log(n)$, we must have:
\begin{equation*}
    k + 2^{k\log(k)} < 2^{k\log(k) + 1} < 2^{k(\log(n) - 2/k) + 1} = 2^{k\log(n) - 1} < 2^{n - 1} < 2^n - 1,
\end{equation*}
where the first inequality follows because $2^{k\log(k)} > k$, the second follows from Eq.~\eqref{eqHelpful}, the third follows by the assumption that $k < n/\log(n)$, and the fourth follows for all sufficiently large $n$. Since this contradicts Eq.~\eqref{k-need}, it follows that $k \geq n/\log(n)$, as desired. In fact, the same argument also shows that with high probability, a \emph{random} instance $(\cA, \{0, 1\}^n - \cA)$ requires at least $n/\log(n)$ quantifiers to separate.
\end{proof}

\end{document}